\definecolor{darkred}{rgb}{0.75,0,0}
\definecolor{darkgreen}{rgb}{0,0.5,0}
\crefname{theorem}{Theorem}{Theorems}
\crefname{lemma}{Lemma}{Lemmas}
\providecommand{\keywords}[1]{\textbf{Keywords:} #1}
\newcommand{\E}{\ensuremath{\mathbb{E}}}
\newcommand{\set}[1]{\left\{#1\right\}}
\newcommand{\eps}{\varepsilon}
\newcommand{\lev}{\ell}
\newcommand{\cost}{\ensuremath{\mathit{cost}}}
\newcommand{\latency}{L}
\newcommand{\child}{\ensuremath{\mathit{child}}}
\newcommand{\ALG}{\ensuremath{\mathsf{ALG}}}
\newcommand{\treelength}{\delta}
\newcommand{\arrow}{\ensuremath{\mathsf{Arrow}}}
\newcommand{\ARW}{\ensuremath{\mathcal{A}}}
\newcommand{\OPT}{\ensuremath{\mathcal{O}}}
\newcommand{\mtntotalcost}{\ensuremath{\mathit{C}}}
\newcommand{\mtncost}{\ensuremath{\mathit{c}}}
\newcommand{\mtn}{\scalebox{.5}[.5]{\ensuremath{\mathcal{M}}}}
\newcommand{\mc}{\mtncost_{\mtn}}
\newcommand{\manhattan}{\ensuremath{\mathrm{Manhattan}}}
\newcommand{\stt}{\ensuremath{S^*}}
\newcommand{\starw}{\mathbb{S}}
\newcommand{\Iarw}{\mathbb{I}}
\newcommand{\I}{I^*}
\newcommand{\nextt}{next}
\newcommand{\asy}{\scalebox{.55}[.55]{\ensuremath{\mathbb{A}}}}
\newcommand{\asyorder}{\pi_{\ARW}^{\asy}}
\newcommand{\hide}[1]{}
\renewcommand{\include}{\input}
\newcommand{\para}[1]{\medskip\noindent\textbf{#1}}
\newcommand{\paranospace}[1]{\noindent\textbf{#1}}
\newtheorem{theorem}{Theorem}[section]
\newtheorem{lemma}[theorem]{Lemma}
\newtheorem{corollary}[theorem]{Corollary}
\newtheorem{definition}{Definition}[section]
\newtheorem{observation}[theorem]{Observation}
\newtheorem{remark}{Remark}[section]
\title{Dynamic Analysis of the Arrow Distributed Directory Protocol in
  General Networks}
\author{Abdolhamid Ghodselahi and Fabian Kuhn\\
Department of Computer Science, University of Freiburg\\
  79110 Freiburg, Germany\\
{\small \texttt{\{hghods,kuhn\}@cs.uni-freiburg.de}}}
\date{}
\begin{document}

\maketitle              

\begin{abstract}
  The \ensuremath{\mathsf{Arrow}} protocol is a simple and elegant protocol to coordinate
  exclusive access to a shared object in a network. The protocol
  solves the underlying distributed queueing problem by using path
  reversal on a pre-computed spanning tree (or any other tree topology
  simulated on top of the given network).

  It is known that the \ensuremath{\mathsf{Arrow}} protocol solves the problem with a
  competitive ratio of $O(\log D)$ on trees of diameter $D$. This
  implies a distributed queueing algorithm with competitive ratio
  $O(s\cdot \log D)$ for general networks with a spanning tree of diameter
  $D$ and stretch $s$. In this work we show that when running the
  \ensuremath{\mathsf{Arrow}} protocol on top of the well-known probabilistic tree embedding
  of Fakcharoenphol, Rao, and Talwar [STOC 03], we obtain a randomized
  distributed queueing algorithm with a competitive ratio of $O(\log n)$
  even on general network topologies. The result holds even if the
  queueing requests occur in an arbitrarily dynamic and concurrent
  fashion and even if communication is asynchronous. From a technical
  point of view, the main of the paper shows that the competitive
  ratio of the \ensuremath{\mathsf{Arrow}} protocol is constant on a special family of tree
  topologies, known as hierarchically well separated trees. 
\end{abstract}
\keywords{competitive analysis, distributed queueing, shared objects, tree embeddings}

\section{Introduction}
\label{sec:intro}

Coordinating the access to shared data is a fundamental task that is
at the heart of almost any distributed system. For example, when
implementing a distributed shared memory system on top of a message
passing system, each shared register has to be kept in a coherent
state despite possibly a large number of concurrent requests to read
or write the shared register. In a distributed transactional memory
system, each transaction might need to operate on several shared
objects, which need to be kept in a consistent state
\cite{herlihy2007distributed,sharma2014distributed,zhang2010dynamic}.
When implementing a shared object on top of large-scale network, a
\emph{distributed directory protocol} can be used to improve
scalability of the system
\cite{abraham2004lls,attiya2010provably,awerbuch1995,chaiken90,demirbas2004hierarchy,herlihy2007distributed,sharma2014distributed}.
When a network node requires access to a shared object, the directory
moves a copy of the object to the node requesting the object. If the
node changes the state of the shared object, the directory protocol
has to make sure that all existing copies of the object are kept in a
consistent state.

\para{Distributed Queueing:}
At the core of many distributed directory implementations is the
following basic \emph{distributed queueing problem} that allows to
order potential concurrent access requests to a shared object
\cite{herlihy2001competitive}. The nodes of a network issue queueing
requests (e.g., requests to access a shared object) in a completely
dynamic and possibly arbitrarily concurrent manner. A queueing
protocol needs to globally order all the requests so that they can be
acted on consecutively. Formally, each request has to find its
\emph{predecessor request} in the order. That is, when enqueueing a
request $r$ issued by some node $v$, a queueing protocol needs to find
the request $r'$ that currently forms the tail of the queue and inform
the node $v'$ of request $r'$ about the new request $r$.

\para{The Arrow Protocol:}
A particularly simple and elegant solution for this distributed
queueing problem is given by the \arrow \ protocol, which was
introduced by Raymond in the context of distributed mutual exclusion
\cite{raymond1989tree}.  The \arrow \ protocol
operates on a directed tree topology $T=(V,E)$. In a quiescent state,
the tree is rooted at the node $u$ of the current tail of the queue,
i.e., all edges of $T$ are directed towards $u$. When a new queueing
request is issued at a node $v$, the direction of the edges on the
path between $v$ and the previous tail $u$ is reversed so that the
tree is now rooted at $v$. For a precise description of the protocol,
we refer to \Cref{sec:model}. It has been shown in
\cite{demmer1998arrow} that the \arrow\ protocol correctly solves the
queueing problem even in an asynchronous system even if the requests
are issued in a completely dynamic and possibly concurrent
way. Moreover, the \arrow\ protocol guarantees that every request
finds the node of its predecessor on a direct path (i.e., within $D$
time units if $D$ is the diameter of $T$). In
\cite{herlihy2006dynamic}, it was further shown that on a tree $T$,
the overall cost of the \arrow \ protocol for ordering a dynamic set
of queueing requests is within a factor $O(\log D)$ of the cost of an
optimal offline queueing algorithm, which knows the request sequence
in advance.\footnote{Note that this implies a competitive ratio of
  $O(s \cdot \log D)$ for general graphs if a spanning tree $T$ of
  diameter $D$ and stretch $s$ is given.}
  
\para{Contribution:}
In the present paper, we strengthen the result of
\cite{herlihy2006dynamic} and we show that when run on the right
underlying tree, the \arrow \ protocol is $O(\log n)$-competitive even
on general network topologies. The best previously known competitive
ratio for the distributed queueing problem with arbitrarily
dynamically injected requests on general graphs is
$O(\log^2 n \cdot \log D)$ as shown in \cite{sharma2015analysis} for
the hierarchical schemes defined of
\cite{awerbuch1995,sharma2014distributed}. This shows that (under some
assumptions), the simple and elegant \arrow \ protocol outperforms all
existing significantly more complicated distributed queueing
protocols.\footnote{Our protocol is based on a randomized
  tree construction and its competitive ratio is w.r.t.\ an oblivious
  adversary. Other protocols with polylogarithmic competitive ratio
  are deterministic and they therefore also work in the presence of an
  adaptive adversary.} For a more detailed comparison of our results with
existing protocols, we refer to the discussion in \Cref{sec:relatedwork}.

More specifically, as our main technical result, we show that the
\arrow\ protocol is $O(1)$-competitive when it is run on a special class
of trees known as \emph{hierarchically well separated trees}
\cite{bartal96}. A hierarchically well separated tree (in the
following referred to as an HST) is a weighted, rooted tree where on
each level, all the nodes are at the same distance to the root and all
the leaves are on the same level (and thus also at the same distance
to the root). Further, the edge lengths decrease exponentially (by a constant factor
per level) when going from the root towards the leaves. When running
\arrow\ on an HST $T$, we assume that all requests are issued at the
leaves of $T$. We show that the total cost of an \arrow\ execution on an
HST $T$ is within a constant factor of the total cost of an optimal
offline algorithm for the given set of requests. Our result even holds
if the communication on $T$ is asynchronous.

\begin{theorem}\label{thm:HSTmain}
  Assume that we are given an HST $T$ with parameter $2$ and queueing
  requests $R$ that arrive in an arbitrarily dynamic manner at the
  leaves of $T$. When using the \arrow \ protocol on tree $T$, the
  total cost for ordering the requests in $R$ is within a constant
  factor of the cost of an optimal offline algorithm for ordering the
  requests $R$ on $T$. This even holds if communication is
  asynchronous.
\end{theorem}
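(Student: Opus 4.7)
The plan is to decompose both $\cost(\ARW)$ and $\cost(\OPT)$ into per-level contributions along the HST hierarchy and show that the two decompositions agree up to a constant factor. Let $T$ be the given HST with root at level $0$ and leaves at level $h$, and let $L_\ell = L_0 \cdot 2^{-\ell}$ denote the length of each level-$\ell$ edge. The HST parameter $2$ implies that the tree distance between two leaves whose least common ancestor sits at level $\ell$ is $\Theta(L_\ell)$, via the geometric decay of edge lengths. Consequently, every consecutive pair $(r, r')$ in any serialized request order contributes $\Theta(L_{\ell(r, r')})$ to that order's cost, where $\ell(r, r')$ is the level of the LCA of the issuing leaves $v_r$ and $v_{r'}$.

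The first step is to lower bound $\cost(\OPT)$ by a Steiner-tree quantity. Let $S \subseteq $ leaves of $T$ be the set of nodes that issue at least one request and, for each level $\ell$, let $n_\ell$ be the number of distinct level-$\ell$ subtrees of $T$ that contain a leaf in $S$. Any algorithm that orders and delivers all requests must, along its serialization, traverse the Steiner tree of $S$ at least once, and the weight of this Steiner tree equals $\sum_\ell L_\ell \cdot n_\ell$ exactly, since its set of level-$\ell$ edges is precisely the set of level-$\ell$ edges whose subtree intersects $S$. Thus $\cost(\OPT) = \Omega(\sum_\ell L_\ell \cdot n_\ell)$.

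The second and main step is the matching upper bound on $\cost(\ARW)$. The strategy is to show that, at every level $\ell$, the Arrow order $\pi_{\ARW}$ transitions between distinct level-$\ell$ subtrees only $O(n_\ell)$ times; equivalently, \arrow\ processes the requests issued inside each level-$\ell$ subtree $T_v$ in at most a constant number of contiguous bursts in $\pi_{\ARW}$. This immediately yields $\cost(\ARW) = O(\sum_\ell L_\ell \cdot n_\ell)$ by summing the $\Theta(L_\ell)$-cost of each inter-subtree transition. The burst-counting claim is the principal obstacle and requires the full structure of the protocol: once \arrow\ has completed a burst inside $T_v$, the sink pointers within $T_v$ are all aligned toward the single edge leaving $T_v$, so a subsequent burst inside $T_v$ can be triggered only by a newly-issued request within $T_v$, and the subtree's internal cost is already absorbed by the Steiner-tree accounting one level down. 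The main source of difficulty is the asynchronous model: requests may be enqueued out of issue order and pointer updates in distant parts of $T$ may interleave with activity in $T_v$, so this argument must be anchored in an invariant on the state of the Arrow pointers within $T_v$ that is preserved under arbitrary message-delay schedules. I expect that proving this invariant and converting it into a constant-factor burst bound will constitute the bulk of the technical work.

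Combining the two bounds yields $\cost(\ARW) = O(\cost(\OPT))$, proving \Cref{thm:HSTmain}. The HST parameter $2$ enters in two places: it justifies the approximation $d_T(\cdot, \cdot) = \Theta(L_{\ell(\cdot, \cdot)})$ via geometric summation of edge lengths, and it provides enough slack in the charging argument to absorb the constant factors accumulated through the hierarchical decomposition while keeping contributions at different levels essentially independent.
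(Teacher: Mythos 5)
Your decomposition ignores the temporal dimension of the problem, and this breaks both halves of the argument in the dynamic setting. The central claim of your upper bound --- that \arrow\ enters each level-$\ell$ subtree a constant number of times, so the number of level-$\ell$ transitions is $O(n_\ell)$ --- is false when requests arrive over time. Consider two leaves $u,v$ at distance $D$ with requests alternating $u,v,u,v,\dots$ at times $0,2D,4D,\dots$. \arrow\ crosses between the two subtrees $\Theta(|R|)$ times, while your $n_\ell$ is $2$ at the relevant level; each re-entry is ``triggered by a newly-issued request,'' exactly as you say, but nothing bounds how often that happens. Symmetrically, your lower bound $\cost(\OPT)=\Omega(\sum_\ell L_\ell n_\ell)$ (one Steiner-tree traversal) is hopelessly weak here: $\OPT$ also pays $\Theta(|R|\cdot D)$, but only because its latency is $\max\{d_T(v_{i-1},v_i),\,t_{i-1}-t_i\}$ and the requests are temporally spread out --- a fact your Steiner-tree accounting never sees. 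Your argument essentially proves the (much easier) one-shot case, where indeed each subtree is visited once because the nearest unserved request always lies in the current subtree.

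The paper's proof is organized precisely around repairing this: it works with the Manhattan cost $d_T(v_i,v_j)+|t_i-t_j|$, reduces $\cost_{\OPT}$ to the Manhattan-MST of the request set for \emph{condensed} request sets (\Cref{lemma:manhattanopt}, with a splitting procedure in \Cref{sec:split} that separates subtrees whose requests have large time gaps), builds a hierarchical spanning tree $\stt$ whose Manhattan cost approximates that MST (\Cref{le:edgeapprox}), and then charges each \arrow\ transition to an edge of a second spanning tree $\starw$ defined via local successors, using the distance-respecting properties of \Cref{le:timewindow}. In effect, the quantity that replaces your $n_\ell$ is the number of level-$\ell$ \emph{blocks} $n(\lev)$, and showing that $\OPT$ must pay proportionally to $\sum_\lev n(\lev)\treelength(\lev+1)$ --- because every re-entry of a subtree is witnessed either by a large spatial separation or a large temporal separation that $\OPT$ cannot avoid --- is the actual technical core. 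Your proposal contains no mechanism for charging repeated visits of the same subtree to $\OPT$, so the gap is not in the asynchrony (which the paper handles last, via \Cref{lemma:distancerespecting}) but in the dynamic request arrivals themselves.
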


\begin{remark}
  Because the statement of the theorem applies to the general
  asynchronous case, it also captures a synchronous scenario, where
  the delay on each edge is fixed, but might be smaller than the
  actual weight of the edge in the HST. Note that such executions are
  relevant because an HST is often built as an overlay graph on top of
  an underlying network graph $G$ and the delay of simulating a single
  HST edge might be smaller than the weight of the edge.
\end{remark}

For a precise description of the \arrow \ protocol and the definition
of queueing cost, we refer to \Cref{sec:model}. When combining
\Cref{thm:HSTmain} with the celebrated probabilistic tree
embedding of Fakcharoenphol, Rao, and Talwar
\cite{fakcharoenphol2003tight}, we get our main result for general
graphs. In \cite{fakcharoenphol2003tight}, it is shown that there is a
randomized algorithm that given an arbitrary $n$-point metric $(X,d)$
constructs an HST $T$ such all points $X$ are mapped to leaves of $T$,
all distances in $(X,d)$ are upper bounded by the respective distances
in $T$, and the expected distance between any two leaves in $T$ is
within an $O(\log n)$ factor of the distance between the corresponding
two points in $X$. When constructing such an HST $T$ for a given graph
$G$ and when assuming an oblivious adversary\footnote{That is, when
  assuming that the sequence of requests is statistically independent
  of the randomness used to construct the HST $T$.}, this implies that
the expected total cost of \arrow \ on $T$ is within an $O(\log n)$
factor of the optimal offline queueing cost on $G$. We also note that
an efficient distributed construction of the HST embedding of
\cite{fakcharoenphol2003tight} has been given in
\cite{ghaffari2014near}.

\begin{theorem}\label{thm:arrowmain}
  Assume that we are given an arbitrary graph $G=(V,E)$ and queueing
  requests $R$ that arrive in an arbitrarily dynamic manner at the
  nodes of $G$. There is a randomized construction of an HST $T$ that
  can be simulated on $G$ such that when running \arrow \ on $T$, we get
  a distributed queueing algorithm for $G$ with competitive ratio at
  most $O(\log n)$ against an oblivious adversary providing the
  sequence of requests. This even holds if communication is
  asynchronous.
\end{theorem}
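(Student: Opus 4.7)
}
The plan is to compose \Cref{thm:HSTmain} with the FRT tree embedding of \cite{fakcharoenphol2003tight}. First, I would consider the shortest-path metric $(V,d_G)$ induced by $G$ and apply the FRT construction to obtain a random HST $T$ with parameter $2$ such that (i) each node of $V$ is identified with a leaf of $T$, (ii) for every $u,v\in V$ the tree distance satisfies $d_T(u,v)\ge d_G(u,v)$, and (iii) $\E[d_T(u,v)] \le O(\log n)\cdot d_G(u,v)$. The HST $T$ is simulated on $G$ as an overlay: each HST edge $e$ is realized by the corresponding shortest path in $G$, so a message traversing $e$ experiences a real delay equal to the $G$-distance of its endpoints, which by (ii) is at most the declared weight of $e$. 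This is precisely the situation highlighted in the remark following \Cref{thm:HSTmain}: the asynchronous version of the theorem applies even when actual edge delays are smaller than the HST weights, so the $O(1)$-competitiveness survives the overlay simulation.

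Next, I would relate the three quantities in the chain
\[
  \mathsf{ALG}_{\arrow}(T,R) \;\le\; O(1)\cdot \OPT_T(R) \;\le\; O(1)\cdot \OPT_G(R)\cdot (\text{stretch}),
\]
where $\mathsf{ALG}_{\arrow}(T,R)$ is the total \arrow\ cost of ordering the request sequence $R$ on $T$, and $\OPT_T(R)$, $\OPT_G(R)$ are the optimal offline queueing costs on $T$ and on $G$, respectively. The first inequality is exactly \Cref{thm:HSTmain}. For the second, I would argue that the optimal offline cost on any weighted graph is the length (under the given metric) of some tour through the request-issuing nodes in the order dictated by the request schedule (each request must eventually reach its predecessor's node). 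Using $d_T(u,v)\ge d_G(u,v)$, any valid schedule on $G$ gives a schedule on $T$ of no larger cost up to replacing each pairwise distance $d_G(u,v)$ by $d_T(u,v)$; hence $\OPT_T(R) \le \sum_{(u,v)\in\text{OPT}_G} d_T(u,v)$.

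Taking expectations over the random HST $T$ and using the oblivious-adversary assumption (so that $R$ is independent of $T$), property (iii) of FRT gives
\[
  \E[\OPT_T(R)] \;\le\; \E\Bigl[\sum_{(u,v)\in\text{OPT}_G(R)} d_T(u,v)\Bigr] \;\le\; O(\log n)\cdot \OPT_G(R).
\]
Combining with \Cref{thm:HSTmain} yields $\E[\mathsf{ALG}_{\arrow}(T,R)] \le O(\log n)\cdot \OPT_G(R)$, which is the claimed competitive ratio.

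The main obstacle I anticipate is the second inequality above: one must be careful to formalize optimal offline queueing cost as a sum of pairwise distances along the induced order, so that it behaves monotonically under increasing the metric, and one must verify that the overlay simulation on $G$ genuinely fits the asynchronous model underlying \Cref{thm:HSTmain} (each physical delay bounded by the corresponding HST weight). Both of these are conceptually routine once the cost model of \Cref{sec:model} is in hand; the probabilistic reasoning requires only independence of $R$ from the coins of FRT, which is exactly the oblivious-adversary hypothesis.
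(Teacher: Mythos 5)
Your plan is essentially the paper's own proof: the paper obtains \Cref{thm:arrowmain} by combining \Cref{thm:HSTmain} with \Cref{lemma:FRTapproximation}, which evaluates the $T$-optimal offline cost on the $G$-optimal ordering and applies the FRT expected-stretch bound under the oblivious-adversary (independence) assumption, exactly as you propose. The one inaccuracy is your displayed bound $\OPT_T(R)\le\sum d_T(u,v)$: the offline latency of a consecutive pair is $\max\{d(u,v),\,t_{i-1}-t_i\}$ rather than $d(u,v)$, so the expectation step cannot use plain linearity over distances and instead needs the observation $\E[\max\{X,c\}]\le 2\cdot\max\{\E[X],c\}$ for non-negative $X$ (this is precisely how \Cref{lemma:FRTapproximation} handles it), a repair of the point you yourself flag as the main obstacle.
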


\para{Organization of the Paper:} The remainder of the paper is
organized as follows. \Cref{sec:model} formally defines the
queueing problem, the \arrow \ protocol, as well as the cost model
used in our paper. The section also contains some lemmas that
establish some basic properties that are needed for the rest of the
paper.  \Cref{sec:treeanalysis} analyzes the cost of an optimal
offline algorithm on an HST $T$ by relating it to the total weight of
an MST defined on the set of requests. In \Cref{sec:arrowanalysis},
we introduce a general framework to analyze
the queueing cost of distributed queueing algorithms on an HST $T$ and
the framework is applied to synchronous executions of the \arrow\
protocol. The analysis of asynchronous executions appears in \Cref{sec:asynchmodel}.

\subsection{Related Work}
\label{sec:relatedwork}

The \arrow\ protocol has been introduced by Raymond
\cite{raymond1989tree} as a way to solve the mutual exclusion problem
in a network. The protocol was later reinvented by Demmer and Herlihy
\cite{demmer1998arrow}, who used \arrow\ to implement a distributed
directory \cite{chaiken90}. Over the years, \arrow\ has been used and
analyzed in different contexts
\cite{herlihy1999aleph,herlihy2006dynamic,herlihy2001ordered,herlihy1999tale,peleg1999variant,tirthapura2006self}. The
protocol has been implemented as a part of Aleph Toolkit
\cite{herlihy1999aleph} and shown to outperform centralized schemes
significantly in practice \cite{herlihy1999tale}. Several other
tree-based distributed queueing protocols that are similar to the
\arrow\ protocol have also been proposed in the literature. A protocol
that combines the ideas of \arrow\ with path compression has been
implemented in the Ivy system \cite{li1989memory}. The amortized cost
to serve a single request is only $O(\log n)$ \cite{ginat1989tight},
however the protocol needs a complete graph as the underlying network
topology. There are also other similar protocols that operate on fixed
trees. The Relay protocol \cite{zhang2010dynamic} has been introduced as a
distributed transactional memory protocol. It is run on top of a fixed
spanning tree similar to \arrow, however to more efficiently deal with
aborted transactions, it does not always move the shared object to the
node requesting it. Further, in \cite{attiya2010provably}, a
distributed directory protocol called Combine has been
proposed. Combine runs on a fixed overlay tree and it is in particular
shown in \cite{attiya2010provably} that Combine is starvation-free.

The first paper to study the competitive ratio of concurrent
executions of a distributed queueing protocol is
\cite{herlihy2001competitive}. The paper shows that in synchronous
executions of \arrow\ on a tree $T$, if all requests are issued at
time $0$ (known as one-shot executions), the total cost of \arrow\ is
within a factor $O(\log |R|)$ compared with the optimal queueing cost
on tree $T$. This analysis has later been extended (and slightly
strengthened) to the general concurrent setting where requests are
issued in an arbitrarily dynamic fashion. In \cite{herlihy2006dynamic},
it is shown that in this case, the total cost of \arrow\ is within a
factor $O(\log D)$ of the optimal cost on the tree $T$. Later,
the same bounds have also been proven for the Relay protocol
\cite{zhang2010dynamic} and the Combine protocol
\cite{attiya2010provably}. Typically, these protocols are run on a
spanning tree or an overlay tree on top of an underlying general
network topology. While the cost of all these protocols is small when
compared with the optimal queueing cost on the tree, the cost of the
protocols might be much larger when compared with the optimal cost on
the underlying topology. In this case, the competitive ratio becomes
$O(s\cdot \log D)$, where $s$ is the stretch of the tree. There are
underlying graphs (e.g., cycles) for which every spanning tree and
even every overlay tree has stretch $\Omega(n)$
\cite{gupta2001steiner,RR98}. The fact that even the best spanning tree might
have large stretch initiated the work on distributed queueing
protocols that run on more general hierarchical structures. In
\cite{herlihy2007distributed}, a protocol called Ballistic is
introduced and analyzed for the sequential and the one-shot
case. Ballistic has competitive ratio $O(\log D)$, however the
protocol requires the underlying distance metric to have bounded
doubling dimension and it thus cannot be applied in general
networks. The best protocol known for general networks is Spiral,
which was introduced in \cite{sharma2014distributed}. Spiral is based
on a hierarchy of overlapping clusters that cover the graph. It's
general structure is thus somewhat resembling the classic sparse
partitions and mobile objects solutions by Awerbuch and Peleg
\cite{awerbuch90,awerbuch1995}. The competitive ratio of Spiral is
shown to be $O(\log^2 n \cdot \log D)$ for sequential and one-shot
executions in \cite{sharma2014distributed}. In
\cite{sharma2015analysis}, a general framework to analyze the cost of
concurrent executions of hierarchical queueing and directory protocols
has been presented. In particular, in \cite{sharma2015analysis}, the
competitive analysis of Spiral and also of the classic mobile object
algorithm of Awerbuch and Peleg \cite{awerbuch90,awerbuch1995} has
been extended to the dynamic setting. In \cite{herlihy2006dynamic}, a sketch is given of how the
competitive analysis for \arrow\ generalized to the asynchronous case.


\section{Model, Problem Statement, and Preliminaries}
\label{sec:model}

\paranospace{Communication Model:} We consider a standard message
passing model on a network modeled by a graph $G=(V,E)$. In some
cases, the edges of $G$ have weights $w: E\to \mathbb{R}_{>0}$, which
are assumed to be normalized such that $w(e)\geq 1$ for all $e\in
E$.
We distinguish between synchronous and asynchronous executions. In a
\emph{synchronous execution}, the delay for sending a message from a
node $u$ to a node $v$ over an edge $e$ connecting $u$ and $v$ is
exactly $1$ if the edge is unweighted and exactly $w(e)$ otherwise. In
an \emph{asynchronous execution}, message delays are arbitrary,
however when analyzing an asynchronous execution, we assume that the
message delay over an edge $e$ is upper bounded by the edge weight
$w(e)$ (or by $1$ in the unweighted case).

\para{The Distributed Queueing Problem:} In the \emph{distributed
  queueing problem} on a graph $G=(V,E)$, a set $R$ of queueing
requests $r_i=(v_i,t_i)$ are issued at the nodes of $V$ in an
arbitrarily dynamic fashion. The goal of a queueing algorithm is to
order all the requests. Specifically, if a request $r_i=(v_i,t_i)$ is
issued at node $v_i$ at time $t_i\geq 0$, the algorithm needs to
enqueue the request $r_i$ by informing the node $v_j$ of the
predecessor request $r_j=(v_j,t_j)$ in the constructed global
order. For this purpose, every queueing algorithm in particular has to
send (possibly indirectly) a respective message from node $v_i$ to
$v_j$. We assume that at time $0$, when an execution starts, the tail
of the queue is at a given node $v_0\in V$. Formally, this is modeled
as a request $r_0=(v_0,0)$ which has to be ordered first by any
queueing protocol. We sometimes refer to $r_0$ as the dummy
request. For a set $R'$ of queueing request (and sometimes by
overloading notation also for a set of request indexes), we define
$t_{\min}(R')$ and $t_{\max}(R')$ to be the minimum and the maximum
issue time $t$ of any request $r=(v,t)\in R'$, respectively.

\para{The Arrow Protocol:} The \arrow\ protocol \cite{raymond1989tree}
is a distributed queueing protocol that operates on a tree network
$T=(V,E)$. At each point in time, each node $v\in V$ has exactly one
outgoing link (arrow) pointing either to one of the neighbors of $v$
or to the node $v$ itself. In a quiescent state, the arrow of the node
of the request at the tail of the queue points to itself and all other
arrows point towards the neighbor on the path towards the tail of the
queue (i.e., the tree is directed towards the current tail). When a
new request at a node $v\in V$ occurs, a ``find predecessor'' message
is sent along the arrows until it finds the predecessor request. While
following the path to the direction of the arrows are reversed. More
formally, a request $r$ at node $v$ is handled as follows.

\begin{enumerate} 
\item If the arrow of $v$ points to $v$ itself, $r$
  is queued directly behind the previous request issued at $v$.
  Otherwise if the arrow points to neighbor $u$, \emph{atomically}, a
  ``find predecessor'' message (including the information about
  request $r$) is sent to $u$ and the arrow of $v$ is redirected to
  $v$ itself. 
\item If a node $u$ receives a ``find predecessor''
  message for request $r$ from a neighbor $w$, if the arrow of $u$
  points to itself, \emph{atomically}, the request $r$ is queued
  directly behind the last request issued by node $u$ and the arrow of
  $u$ is redirected to node $w$. Otherwise, if the arrow of $u$ points
  to neighbor $x$, \emph{atomically}, the ``find predecessor'' message
  is forwarded to node $x$ and the arrow of node $u$ is redirected to
  node $w$.
\end{enumerate}

For a more detailed description of the \arrow \ protocol and of how
\arrow\ handles concurrent requests, we refer the reader to
\cite{demmer1998arrow,herlihy2006dynamic}. It was shown in
\cite{demmer1998arrow} that the \arrow \ protocol correctly orders a
given sequence of requests even in an asynchronous network. Moreover
as shown in \cite{demmer1998arrow,herlihy2006dynamic}, when operating
on tree $T$, the protocol always finds the predecessor of a request on
the direct path on $T$. As a result, if two requests $r'$ and $r$ are
at distance $d$ on $T$ and if $r'$ is the predecessor of $r$ in the
queueing order, the ``find predecessor'' message initiated by request
$r$ finds the node of request $r'$ in time exactly $d$ in the
synchronous setting and in time at most $d$ in the asynchronous
model. Further, it is shown in \cite{herlihy2006dynamic} that the
successor request of a request $r$ at node $v$ in the queue is always
the remaining request $r''$ that first reaches $v$ on a direct
path. This ``greedy'' nature of the \arrow \ ordering was used in
\cite{herlihy2001competitive}, where it was shown that in the one-shot
case when all requests occur at time $0$, the \arrow \ order
corresponds to a greedy (nearest neighbor) TSP path through requests,
whereas an optimal offline algorithm corresponds to an optimal TSP
path on the request set. The competitive ratio on trees then follows
from the fact that the nearest neighbor heuristic provides a
logarithmic approximation of the TSP problem
\cite{nearestneighbor}. In \cite{herlihy2006dynamic}, this analysis
was extended and it was shown that even in the fully dynamic case, it
is possible to reduce the problem to a (generalized) TSP nearest
neighbor analysis. Formally, the greedy nature of the \arrow\ protocol
in the synchronous setting is captured by \Cref{le:timewindow} in
\Cref{sec:treeanalysis}, whereas the corresponding property in
the asynchronous setting is formally discussed in \Cref{sec:asynchmodel}.


\para{Hierarchically Well Separated Trees:} The notion of a
\emph{hierarchically well separated tree} (HST) was defined by Bartal
in \cite{bartal96}. Given a parameter $\alpha>1$, an HST of depth $h$
is a rooted tree with the following properties. All children of the
root are at distance $\alpha^{h-1}$ from the root. Further, every
subtree of the root is an HST of depth $h-1$ that is characterized by
the same parameter $\alpha$ (i.e., the children 2 hops away from the
root are at distance $\alpha^{h-2}$ from their parents). The
probabilistic tree embedding result of \cite{fakcharoenphol2003tight}
shows that for every metric space $(X,d)$ with minimum distance
normalized to $1$ and for every constant $\alpha> 1$, there is a
randomized construction of an HST $T$ with a bijection $f$ of the
points in $X$ to the leaves of $T$ such that for every $x,y\in X$,
$d(x,y)\leq d_T(f(x),f(y))$ and such that the expected tree distance
$\E\big[d_T(f(x),f(y))\big] = O(\log |X|)\cdot d(x,y)$. Further, an
efficient distributed implementation of the construction of
\cite{fakcharoenphol2003tight} for the distances of a given network graph
was given in \cite{ghaffari2014near}.

The main technical result of this paper is an analysis of \arrow\
on an HST $T$ if all requests are issued at leaves of $T$. Throughout
the paper, the HST parameter $\alpha$ is set to $\alpha =2$.  For
convenience, we number the levels of an HST $T$ of depth $h$ from $0$
to $h$, where the level $0$ nodes are the leaves and the single level
$h$ node is the root. For $\ell\in\set{0,\dots,h}$,
$\delta(\ell):=2^{\ell+1}-2$ denotes the distance between two leaves
for which the least common ancestor is on level $\ell$.

\para{Cost Model:} Assume when applying some queueing algorithm $\ALG$
to the dynamic set of request $R$, the requests are ordered according
to the permutation $\pi_{\ALG}$ such that the request ordered at
position $i$ in the order is $r_{\pi_{\ALG(i)}}$. For every
$i\in \set{1,\dots,|R|-1}$, we define the \emph{cost of ordering
  $r_{\pi_{\ALG}(i)}$ after $r_{\pi_{\ALG}(i-1)}$} as the time it takes a
queueing algorithm to enqueue the request $r_{\pi_{\ALG}(i)}$ as the
successor of $r_{\pi_{\ALG}(i-1)}$. More specifically, we assume that
request $r_{\pi_{\ALG}(i)}$ can be enqueued as soon as the predecessor
request $r_{\pi_{\ALG}(i-1)}$ is in the system and as soon as node
$v_{\pi_{\ALG}(i-1)}$ knows about request $r_{\pi_{\ALG}(i)}$. Assume that
algorithm $\ALG$ informs node $v_{\pi_{\ALG}(i-1)}$ (through a message)
about $r_{\pi_{\ALG}(i)}$ at time $t_{\ALG}(i)$. The cost (latency)
$\latency_{\mathsf{ALG}}(r_{\pi_{\ALG}(i-1)},r_{\pi_{\ALG}(i)})$ incurred
for enqueueing request $r_{\pi_{\ALG}(i)}$ and the overall cost
(latency) $\cost_{\ALG}$ of \ALG\ are then defined as follows.
\begin{eqnarray}
  \latency_{\ALG}(r_{\pi_{\ALG}(i-1)},r_{\pi_{\ALG}(i)})
  & := &
         \max\set{t_{\ALG}(i),t_{\pi_{\ALG}(i-1)}}-t_{\pi_{\ALG}(i)},\label{eq:latencycost}\\
  \cost_{\ALG}(\pi_{\ALG})
  & := &
         \sum_{i=1}^{|R|-1}\latency_{\ALG}(r_{\pi_{\ALG}(i-1)},r_{\pi_{\ALG}(i)}).\label{eq:totalcost}
\end{eqnarray}
We next specify the above cost more concretely for \arrow \ and for an
optimal offline algorithm. Assume that we have an execution \ARW\ of
the \arrow \ protocol that operates on a tree $T$. Let $\pi_{\ARW}$ be
the ordering induced by the \arrow\ execution \ARW. When the ``find
predecessor'' message of a request $r_{\pi_{\ARW}(i)}$ arrives at the
node of the predecessor request $r_{\pi_{\ARW}(i-1)}$, clearly the
request $r_{\pi_{\ARW}(i-1)}$ has already occurred and thus we always
have
$\latency_{\ARW}(r_{\pi_{\ARW}(i-1)},r_{\pi_{\ARW}(i)}) =
t_{\ARW}(i)-t_{\pi_{\ARW}(i)}$
for any \arrow\ execution. Further note, that in a synchronous
execution of arrow on tree $T$, because \arrow\ always finds the
predecessor on the direct path, this latency cost is always equal to
the distance between the respective nodes in $T$.

When studying in the cost of an optimal offline queueing algorithm
\OPT, we assume that \OPT\ knows the whole sequence of requests in
advance. However, \OPT\ still needs to send messages from each request
to its predecessor request. The message delays are not under the
control of the optimal offline algorithm. When lower bounding the cost
of \OPT, we can therefore assume that all communication is synchronous
even in the asynchronous case. Note that a synchronous execution is a
possible strategy of the asynchronous scheduler. When operating on a
graph $G$, the latency cost of \OPT\ for ordering a request $r_j$ as
the successor of a request $r_i$ is then exactly
$\latency_{\OPT}^G(r_i,r_j) = \max\set{t_i - t_j, d_G(v_i,v_j)}$. As
we analyze \arrow\ on an HST $T$ that is simulated on top of an
underlying network $G$, we directly define the optimal offline w.r.t.\
synchronous executions on the tree $T$ as follows.
\begin{eqnarray}
  \latency_{\OPT}^T(r_{\pi_{\OPT}^T(i-1)},r_{\pi_{\OPT}^T(i)})
  & := & 
         \max\set{d_T(v_{\pi_{\OPT}^T(i-1)},v_{\pi_{\OPT}^T(i)}),t_{\pi_{\OPT}^T(i-1)}-t_{\pi_{\OPT}^T(i)}},\label{eq:optlatency}\\
  \cost_{\OPT}^T(\pi_{\OPT})
  & := &
         \sum_{i=1}^{|R|-1}\latency_{\OPT}^T(r_{\pi_{\OPT}^T(i-1)},r_{\pi_{\OPT}^T(i)}).\label{eq:opttotal}
\end{eqnarray}
The ordering $\pi_{\OPT}$ is chosen such that the total cost
$\cost_{\OPT}^T(\pi_{\OPT})$ in \eqref{eq:opttotal} is minimized. The next lemma
shows that when using the randomized HST construction of
\cite{fakcharoenphol2003tight}, the cost \eqref{eq:opttotal} is within
a logarithmic factor of the optimal offline cost on the underlying
network graph $G$.

\begin{lemma} \label{lemma:FRTapproximation}
  Assume $T$ is an HST that is constructed on top of an $n$-node
  network graph $G$ by using the randomized algorithm of
  \cite{fakcharoenphol2003tight} and assume that there is a dynamic
  set of queueing requests issued at the nodes of $G$. If the sequence
  of requests is independent of the randomness of the randomized HST
  construction, the expected optimal total cost on $T$ (as defined in
  \eqref{eq:opttotal}) is within a factor $O(\log n)$ of the optimal
  offline queueing cost on $G$.
\end{lemma}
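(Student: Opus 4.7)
The plan is to use an optimal offline ordering for $G$ as a test ordering on the random tree $T$ and then to bound each per-pair latency using the FRT expected-stretch guarantee.

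Let $\pi^\star$ denote an ordering that minimizes the offline queueing cost on $G$, so that $\cost_{\OPT}^G = \sum_{i=1}^{|R|-1} \max\{d_G(v_{\pi^\star(i-1)}, v_{\pi^\star(i)}),\, t_{\pi^\star(i-1)} - t_{\pi^\star(i)}\}$. The ordering $\pi^\star$ is determined by $R$ and the fixed graph $G$ alone; by the hypothesis of the lemma it is therefore independent of the randomness used to construct $T$. Because $\pi_{\OPT}^T$ is, by definition, a minimizer of $\cost_{\OPT}^T(\cdot)$, we have the pointwise inequality $\cost_{\OPT}^T(\pi_{\OPT}^T) \leq \cost_{\OPT}^T(\pi^\star)$ for every realization of $T$, and in particular $\E[\cost_{\OPT}^T(\pi_{\OPT}^T)] \leq \E[\cost_{\OPT}^T(\pi^\star)]$.

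For each consecutive pair in $\pi^\star$, apply the elementary inequality $\max\{a,b\} \leq a + \max\{0,b\}$ with $a := d_T(v_{\pi^\star(i-1)}, v_{\pi^\star(i)})$ and $b := t_{\pi^\star(i-1)} - t_{\pi^\star(i)}$, take expectations (only $a$ is random), and invoke the FRT bound $\E[d_T(x,y)] \leq O(\log n) \cdot d_G(x,y)$. This yields
\[
  \E\bigl[\latency_{\OPT}^T(r_{\pi^\star(i-1)}, r_{\pi^\star(i)})\bigr] \leq O(\log n) \cdot d_G(v_{\pi^\star(i-1)}, v_{\pi^\star(i)}) + \max\{0,\, t_{\pi^\star(i-1)} - t_{\pi^\star(i)}\}.
\]
Each summand on the right is at most $\latency_{\OPT}^G(r_{\pi^\star(i-1)}, r_{\pi^\star(i)}) = \max\{d_G(v_{\pi^\star(i-1)}, v_{\pi^\star(i)}),\, t_{\pi^\star(i-1)} - t_{\pi^\star(i)}\}$, so the per-pair bound collapses to $O(\log n) \cdot \latency_{\OPT}^G(r_{\pi^\star(i-1)}, r_{\pi^\star(i)})$.

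Summing over $i$ and using linearity of expectation gives $\E[\cost_{\OPT}^T(\pi_{\OPT}^T)] \leq O(\log n) \cdot \cost_{\OPT}^G$, which is the claim. The only non-routine ingredient is the independence of $\pi^\star$ from the HST randomness, which is exactly the oblivious-adversary assumption of the lemma; beyond that, the argument is straightforward manipulation of the $\max$ operator and a single application of the FRT distortion bound, so no real obstacle arises.
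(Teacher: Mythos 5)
Your proposal is correct and follows essentially the same route as the paper: evaluate the tree cost on the (randomness-independent) optimal ordering for $G$, use the optimality of $\pi_{\OPT}^T$ plus linearity of expectation, and apply the FRT stretch bound term by term. The only difference is cosmetic: where you split $\max\{a,b\}\leq a+\max\{0,b\}$ before taking expectations, the paper instead uses $\E[\max\{X,c\}]\leq 2\cdot\max\{\E[X],c\}$ followed by $\max\{\lambda a,b\}\leq\lambda\cdot\max\{a,b\}$, and both manipulations yield the same $O(\log n)$ factor.
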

\begin{proof}
  Let $\pi_{\OPT}^G$ and $\pi_{\OPT}^T$ be the optimal orderings
  w.r.t.\ the optimal offline costs $\latency_{\OPT}^G(r_i,r_j)$ and
  $\latency_{\OPT}^T(r_i,r_j)$ on $G$ and $T$, respectively, as
  defined above. We have
  \begin{eqnarray*}
    \E\left[\cost_{\OPT}^T(\pi_{\OPT}^T)\right]
    & = & 
          \E\left[\sum_{i=1}^{|R|-1}\latency_{\OPT}^T(r_{\pi_{\OPT}^T(i-1)},r_{\pi_{\OPT}^T(i)})\right]\\
    & \leq &\sum_{i=1}^{|R|-1}
             \E\left[\latency_{\OPT}^T(r_{\pi_{\OPT}^G(i-1)},r_{\pi_{\OPT}^G(i)})\right]\\
    & = & \sum_{i=1}^{|R|-1}\E\left[
          \max\set{d_T(v_{\pi_{\OPT}^G(i-1)},v_{\pi_{\OPT}^G(i)}),t_{\pi_{\OPT}^G(i-1)}-t_{\pi_{\OPT}^G(i)}}
          \right]\\
    & \leq &
             2\cdot \sum_{i=1}^{|R|-1}\max\set{
             \E\left[d_T(v_{\pi_{\OPT}^G(i-1)},v_{\pi_{\OPT}^G(i)})\right], t_{\pi_{\OPT}^G(i-1)}-t_{\pi_{\OPT}^G(i)}
             }\\
    & \leq & 2\cdot\sum_{i=1}^{|R|-1}
             \max\set{ O(\log n)\cdot
             d_G(v_{\pi_{\OPT}^G(i-1)},v_{\pi_{\OPT}^G(i)}),
             t_{\pi_{\OPT}^G(i-1)}-t_{\pi_{\OPT}^G(i)}}\\
    & \leq & O(\log n)\cdot \sum_{i=1}^{|R|-1}
             \max\set{ d_G(v_{\pi_{\OPT}^G(i-1)},v_{\pi_{\OPT}^G(i)}),
             t_{\pi_{\OPT}^G(i-1)}-t_{\pi_{\OPT}^G(i)}}\\
    & \leq & O(\log n) \cdot \cost_{\OPT}^G(\pi_{\OPT}^G).
  \end{eqnarray*}
  The first inequality follows from the fact that $\pi_{\OPT}^T$ is an
  optimal ordering w.r.t.\ the cost $\latency_{\OPT}^T(r_i,r_j)$ and
  by linearity of expectation. The second inequality follows because
  for every non-negative random variable $X$ and every fixed (possibly
  negative) constant $c$, it holds that
  $\E[\max\set{X,c}]\leq 2 \cdot \max\set{\E[X],c}$. The third
  inequality follows from the expected stretch bound of the HST
  construction of \cite{fakcharoenphol2003tight}, and the fourth
  inequality follows because for all values $\lambda\geq 1$, $a\geq 0$
  and $b\in \mathbb{R}$, it holds that
  $\max\set{\lambda a, b} \leq \lambda \cdot \max\set{a,b}$.
\end{proof}
\hide{
\begin{proof}
  Let $\pi_{\OPT}^G$ and $\pi_{\OPT}^T$ be the optimal orderings
  w.r.t.\ the optimal offline costs $\latency_{\OPT}^G(r_i,r_j)$ and
  $\latency_{\OPT}^T(r_i,r_j)$ on $G$ and $T$, respectively, as
  defined above. We have
  \begin{eqnarray*}
    \E\left[\cost_{\OPT}^T(\pi_{\OPT}^T)\right]
    & = & 
          \E\left[\sum_{i=1}^{|R|-1}\latency_{\OPT}^T(r_{\pi_{\OPT}^T(i-1)},r_{\pi_{\OPT}^T(i)})\right]\\
    & \leq &\sum_{i=1}^{|R|-1}
             \E\left[\latency_{\OPT}^T(r_{\pi_{\OPT}^G(i-1)},r_{\pi_{\OPT}^G(i)})\right]\\
    & = & \sum_{i=1}^{|R|-1}\E\left[
          \max\set{d_T(v_{\pi_{\OPT}^G(i-1)},v_{\pi_{\OPT}^G(i)}),t_{\pi_{\OPT}^G(i-1)}-t_{\pi_{\OPT}^G(i)}}
          \right]\\
    & \leq &
             2\cdot \sum_{i=1}^{|R|-1}\max\set{
             \E\left[d_T(v_{\pi_{\OPT}^G(i-1)},v_{\pi_{\OPT}^G(i)})\right], t_{\pi_{\OPT}^G(i-1)}-t_{\pi_{\OPT}^G(i)}
             }\\
    & \leq & 2\cdot\sum_{i=1}^{|R|-1}
             \max\set{ O(\log n)\cdot
             d_G(v_{\pi_{\OPT}^G(i-1)},v_{\pi_{\OPT}^G(i)}),
             t_{\pi_{\OPT}^G(i-1)}-t_{\pi_{\OPT}^G(i)}}\\
    & \leq & O(\log n)\cdot \sum_{i=1}^{|R|-1}
             \max\set{ d_G(v_{\pi_{\OPT}^G(i-1)},v_{\pi_{\OPT}^G(i)}),
             t_{\pi_{\OPT}^G(i-1)}-t_{\pi_{\OPT}^G(i)}}\\
    & \leq & O(\log n) \cdot \cost_{\OPT}^G(\pi_{\OPT}^G.
  \end{eqnarray*}
  The first inequality follows from the fact the $\pi_{\OPT}^T$ is an
  optimal ordering w.r.t.\ the cost $\latency_{\OPT}^T(r_i,r_j)$ and
  by linearity of expectation. The second inequality follows because
  for every non-negative random variable $X$ and every fixed (possibly
  negative) constant $c$, it holds that
  $\E[\max\set{X,c}]\leq 2 \cdot \max\set{\E[X],c}$. The third
  inequality follows from the expected stretch bound of the HST
  construction of \cite{fakcharoenphol2003tight}, and the fourth
  inequality follows because for all values $\lambda\geq 1$, $a\geq 0$
  and $b\in \mathbb{R}$, it holds that
  $\max\set{\lambda a, b} \leq \lambda \cdot \max\set{a,b}$.
\end{proof}
}

Given \Cref{thm:HSTmain} (which will be proven as the main
technical result of the paper) and \Cref{lemma:FRTapproximation},
we immediately get \Cref{thm:arrowmain}. We note in light of
the remark following the statement of \Cref{thm:HSTmain} in
\Cref{sec:intro}, the statement of \Cref{thm:arrowmain} is also true for synchronous executions on the
underlying graph $G$.

\para{Manhattan Cost:} In the dynamic competitive analysis of \arrow\
on general trees in \cite{herlihy2006dynamic}, it has been shown that
it is useful to study the optimal ordering w.r.t.\ to the following
\emph{Manhattan cost} on a tree $T$ between two queueing requests
$r_i=(v_i,t_i)$ and $r_j=(v_j,t_j)$.
\begin{equation}\label{eq:ManhattanCost}
  \mtncost_{\mtn}^T(r_i,r_j) := d_T(v_i,v_j) + |t_i - t_j|.
\end{equation}
As the cost function $\mtncost_{\mtn}(r_i,r_j)$ defines a metric space on the
request set, the problem of finding an optimal ordering w.r.t.\ the cost
$\mtncost_{\mtn}(r_i,r_j)$ is a metric TSP problem.\footnote{The
  relation of \arrow\ and the TSP problem was already exploited in
  \cite{herlihy2006dynamic} when analyzing \arrow\ on general trees.}
As a result, we will for example use that the total weight of an MST
on the set of request w.r.t.\ the weight function
$\mtncost_{\mtn}(r_i,r_j)$ is within a factor $2$ of the cost of an
optimal TSP path. The following definition is inspired by Lemma 3.12
in \cite{herlihy2006dynamic}.
\begin{definition}[Condensed Request Set]\label{def:condensed}
  A set $R$ of queueing requests $r_i=(v_i,t_i)$ on a tree $T$ is
  called \emph{condensed} if for any two requests $r_i=(v_i,t_i)$ and
  $r_j=(v_j,t_j)$ that are consecutive w.r.t.\ time of occurrence,
  there exits requests $r_a=(v_a,t_a)$ and $r_b=(v_b,t_b)$ such that
  $t_a\leq t_i$, $t_b\geq t_j$, and $d_T(v_a,v_b)\geq t_b-t_a$.
\end{definition}

It is shown in \cite{herlihy2006dynamic} that for condensed request
sets, the total optimal Manhattan cost is within a constant factor of
the optimal offline queueing cost.
\begin{lemma}[Lemma 3.17 in \cite{herlihy2006dynamic} rephrased] \label{lemma:manhattanopt}
  If the request set $R$ is condensed, then on any tree $T$ and for
  every ordering $\pi$ on the requests, it holds that
  \[
  \sum_{i=1}^{|R|-1} \mtncost_{\mtn}^T(r_{\pi(i-1)}, r_{\pi(i)}) \leq
  12\cdot \sum_{i=1}^{|R|-1}\latency_{\OPT}^T(r_{\pi(i-1)}, r_{\pi(i)}).
  \]
\end{lemma}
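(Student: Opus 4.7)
The plan is to decompose the Manhattan cost as $\mtncost_{\mtn}^T(r_i, r_j) = d_T(v_i, v_j) + |t_i - t_j|$ into a spatial and a temporal contribution and bound each separately against $\sum_i \latency_{\OPT}^T$. The spatial part is immediate, since $\latency_{\OPT}^T(r_i, r_j) \geq d_T(v_i, v_j)$ by definition, so $\sum_i d_T(v_{\pi(i-1)}, v_{\pi(i)}) \leq \sum_i \latency_{\OPT}^T(r_{\pi(i-1)}, r_{\pi(i)})$.

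For the temporal part, I will use the identity $|x| = 2\max\{x, 0\} - x$ applied to $x = t_{\pi(i-1)} - t_{\pi(i)}$, which gives
\[
  \sum_{i=1}^{|R|-1} |t_{\pi(i-1)} - t_{\pi(i)}| = 2 \sum_{i=1}^{|R|-1} \max\{t_{\pi(i-1)} - t_{\pi(i)}, 0\} \,+\, (t_{\pi(|R|-1)} - t_{\pi(0)}).
\]
The first sum on the right is at most $2\sum_i \latency_{\OPT}^T$, because $\latency_{\OPT}^T(r_{\pi(i-1)}, r_{\pi(i)}) \geq \max\{t_{\pi(i-1)} - t_{\pi(i)}, 0\}$ by definition. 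The telescoping correction $t_{\pi(|R|-1)} - t_{\pi(0)}$ is trivially at most $t_{\max}(R) - t_{\min}(R)$. Combining with the spatial bound reduces the entire lemma to proving $t_{\max}(R) - t_{\min}(R) \leq c \cdot \sum_i \latency_{\OPT}^T$ for a suitable absolute constant $c$; taking $c = 9$ then delivers the claimed overall factor of $12$.

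For this last inequality the condensed hypothesis is essential. My plan is to order the requests chronologically as $r^{(1)}, \dots, r^{(k)}$ with $\tau_1 \leq \dots \leq \tau_k$ and apply \Cref{def:condensed} iteratively starting at the first chronological gap. This produces a chain of ``covering pairs'' $(r^{(a_l)}, r^{(b_l)})$ with $a_1 = 1$, $b_l$ strictly increasing until $b_L = k$, and $d_T(v_{a_l}, v_{b_l}) \geq \tau_{b_l} - \tau_{a_l}$. Telescoping gives $\tau_k - \tau_1 \leq \sum_l d_T(v_{a_l}, v_{b_l})$, and each individual $d_T(v_{a_l}, v_{b_l})$ can be bounded by the sum of tree distances along the sub-walk of $\pi$ connecting $r^{(a_l)}$ to $r^{(b_l)}$ using triangle inequality on the tree metric. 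The main obstacle is controlling the total charge on each $\pi$-edge: a naive aggregation could charge an edge $\Omega(k)$ times, giving only a non-constant bound. To extract a constant factor, I intend to select the chain greedily so that the chronological intervals $[a_l, b_l]$ form a minimum interval cover of $\{1, \dots, k-1\}$ with bounded overlap, and then exploit the tree structure to argue that this bounded overlap translates into $O(1)$ amortized overcounting per $\pi$-edge, yielding the required constant.
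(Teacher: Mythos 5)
First, note that the paper does not prove this lemma at all: it is imported verbatim (as ``Lemma 3.17 rephrased'') from \cite{herlihy2006dynamic}, so there is no in-paper proof to compare against. Your reduction, however, can be assessed on its own. The decomposition is sound: $\latency_{\OPT}^T \geq d_T$ and $\latency_{\OPT}^T \geq \max\{t_{\pi(i-1)}-t_{\pi(i)},0\}$ both hold by \eqref{eq:optlatency}, the identity $|x|=2\max\{x,0\}-x$ and the telescoping are correct, and you have correctly isolated the only place where the condensed hypothesis can enter, namely bounding $t_{\pi(|R|-1)}-t_{\pi(0)}\leq t_{\max}(R)-t_{\min}(R)$ by a constant times $\sum_i\latency_{\OPT}^T$.

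The final step, though, is based on an approach that provably cannot work. You propose to bound $t_{\max}(R)-t_{\min}(R)$ by the covering-pair distances $d_T(v_{a_l},v_{b_l})$ and then charge these, via the triangle inequality along sub-walks of $\pi$, to the quantities $d_T(v_{\pi(i-1)},v_{\pi(i)})$ with $O(1)$ overcounting. But $t_{\max}(R)-t_{\min}(R)$ need not be $O(1)\cdot\sum_i d_T(v_{\pi(i-1)},v_{\pi(i)})$ for condensed sets. Take $T$ to be a single edge $\{u,v\}$ of length $D$, with $k/2$ requests at $u$ at times $0,D,2D,\dots$ and $k/2$ requests at $v$ at times $D/2,3D/2,\dots$; this set is condensed (each chronological gap of length $D/2$ is witnessed by a $(u,v)$ pair at distance $D$). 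For the ordering $\pi$ that lists all $u$-requests and then all $v$-requests, we have $\sum_i d_T(v_{\pi(i-1)},v_{\pi(i)})=D$ while $t_{\max}(R)-t_{\min}(R)=\Theta(kD)$. So no selection of the chain and no amortization over $\pi$-edges can rescue the charging scheme: the obstacle is not overcounting but that the target quantity is simply too small. The inequality $t_{\max}(R)-t_{\min}(R)\leq c\sum_i\latency_{\OPT}^T$ does hold in this example, but only because of the single large \emph{backward} time jump at the crossing step, which contributes $\Theta(kD)$ to $\sum_i\max\{t_{\pi(i-1)}-t_{\pi(i)},0\}$. Any correct completion must therefore charge the chronological gaps against the temporal terms of $\latency_{\OPT}^T$ as well (e.g., by separating the gaps that some step of $\pi$ crosses backward, which are paid for by $\sum_i\max\{t_{\pi(i-1)}-t_{\pi(i)},0\}$, from those crossed only forward, where the condensed witnesses must be combined with both distance and time information), rather than against the spatial terms alone.
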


For synchronous executions on trees, it is also shown in
\cite{herlihy2006dynamic} that every request set $R$ can be
transformed into a condensed request set without changing the ordering
(and the cost) of \arrow\ and without increasing the optimal offline
cost.

\begin{lemma}[Lemma 3.11 in \cite{herlihy2006dynamic} rephrased] \label{le:transformation}
  Let $R$ be a set of queueing requests issued on a tree $T$ and let
  $r_i=(v_i,t_i)$ and $r_j=(v_j,t_j)$ be two requests of $R$ that are
  consecutive w.r.t.\ time of occurrence. Further, choose two requests
  $r_a=(v_a,t_a)$ with $t_a\leq t_i$ and $r_b=(v_b,t_b)$ with $t_b\geq
  t_j$ minimizing $\delta:=t_b-t_a-d_T(v_a,v_b)$. if $\delta>0$, every
  request $r=(v,t)$ with $t\geq t_j$ can be replaced by a request
  $r'=(v,t-\delta)$ without changing the synchronous \arrow\ order and without
  increasing the optimal offline cost.
\end{lemma}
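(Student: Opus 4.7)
The plan is to split the argument into two independent claims: the optimal offline cost on $T$ does not increase, and the synchronous \arrow{} ordering is identical on the original and on the shifted request sets. Throughout, I would regard the shift as applied to the requests themselves, so that after the shift every request still exists and may be referred to by its original index.

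For the offline side, I would take any ordering $\pi$ that is optimal for the original instance and compare $\latency_{\OPT}^T(r_{\pi(k-1)}, r_{\pi(k)})$ before and after the shift, splitting on which side of the boundary $t_j$ each endpoint lies. Pairs that are both early (times $<t_j$) are untouched; pairs that are both late (times $\geq t_j$) are translated by the same $-\delta$, so both $d_T(\cdot,\cdot)$ and $t_{\pi(k-1)}-t_{\pi(k)}$ in~\eqref{eq:optlatency} are preserved. For a pair straddling the boundary with early $r_a$ and late $r_b$, the minimality of $\delta$ gives $t_b - t_a - d_T(v_a,v_b)\geq \delta$, which I would use to conclude that after subtracting $\delta$ from the late time the second argument of the max in~\eqref{eq:optlatency} is at most $d_T$; in the opposite orientation (late predecessor, early successor) the relevant time difference simply decreases by $\delta$. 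Summing over $k$ and noting that $\pi$ is still a valid ordering for the shifted instance shows that the new offline minimum cannot exceed the old one.

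For the arrow-ordering side I would invoke the greedy / time-window characterization of synchronous \arrow{} referenced as \Cref{le:timewindow}: the predecessor of each request is determined by the first find-predecessor message that can reach its node along a direct path, and this depends only on the pairs $(v,t)$ and on the synchronous propagation times. Among early requests nothing changes. Among late requests, every issue time is translated by the same $-\delta$ and all edge delays are unchanged, so the pattern of arrivals restricted to the late segment is just a uniform time shift of the original pattern, and the sub-ordering is preserved. The delicate case is a late request $r_b$ whose potential predecessor sits in the early segment (the reverse orientation cannot matter because early requests are not moved); here the minimality of $\delta$ yields $t_b - \delta \geq t_a + d_T(v_a,v_b)$ for \emph{every} early $r_a$, so every early find-predecessor message that could reach $v_b$ in the original execution still arrives by the new issue time of $r_b$, and the relative arrival order at $v_b$ among all candidates is unchanged.

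The main obstacle, as is typical for arrow arguments in the dynamic setting, is turning the last observation into a rigorous statement about the global ordering rather than a single predecessor choice. I would handle it by applying \Cref{le:timewindow} request by request and checking that the \emph{set} and \emph{relative arrival order} of find-predecessor messages available to each request is identical in the two executions; the minimality of $\delta$ together with the one-sided nature of the shift (only requests with $t\geq t_j$ move) is exactly what rules out any new causal crossing in either direction. Once the arrow ordering is shown to coincide, nothing further remains to prove, since the lemma only claims invariance of that ordering and a non-increase of the offline cost.
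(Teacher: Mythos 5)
First, a framing point: the paper does not actually prove this lemma --- it is imported as ``Lemma 3.11 in \cite{herlihy2006dynamic} rephrased'' and used as a black box, so there is no in-paper proof to compare against. The closest relative is the paper's proof of \Cref{le:asynchtransformation}, the asynchronous analogue, and your argument is in the same spirit: split $R$ into the requests issued at or before $t_i$ and those issued at or after $t_j$, note that each segment is internally invariant under a uniform time translation, and use the minimality of $\delta$ to control the single crossing of the boundary. Your offline half is correct and complete: evaluating the old optimal ordering on the shifted instance with the four-way case split is the right computation, and $t_b-t_a-d_T(v_a,v_b)\geq\delta$ is exactly what keeps the time term of \eqref{eq:optlatency} dominated by the distance term for early-predecessor/late-successor pairs, while the reverse orientation only shrinks.

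The one substantive point you elide is tie-breaking at the boundary. Since $\delta$ is the \emph{minimum} of $t_b-t_a-d_T(v_a,v_b)$, the minimizing pair satisfies $t_b-\delta=t_a+d_T(v_a,v_b)$ with equality after the shift, so a late request's ``find predecessor'' message can now arrive at the node of an early request at exactly the same time as an early candidate's. Your inequality (combined with the triangle inequality, which gives $t_b-\delta+d_T(v_b,u)\geq t_a+d_T(v_a,u)$ for every node $u$) shows a late request can never strictly beat an early one, but ``the relative arrival order is unchanged'' is not literally true, and whether the \arrow\ order is preserved then depends on how simultaneous arrivals are resolved. The paper's proof of \Cref{le:asynchtransformation} confronts precisely this by first shifting by $\delta-\eps$ and letting $\eps\to 0$, concluding only that the scheduler can process $R_{\leq}$-messages before $R_{\geq}$-messages on ties, i.e.\ that the original order remains \emph{a} valid \arrow\ order. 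You should either adopt that convention explicitly or weaken the conclusion accordingly. Separately, your description of the crossing case is directionally off --- what matters in the greedy characterization of \Cref{le:timewindow} is when $r_b$'s message reaches the node of its potential early predecessor, not when early messages reach $v_b$ --- but since the bound you invoke yields the correct inequality at every node, this is a presentational slip rather than a gap.
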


\Cref{le:transformation} implies that every request set $R$ can
be transformed into a condensed set $R'$ without changing the
synchronous order of \arrow\ and without increasing the optimal
offline cost. For the analysis of \arrow\ in synchronous systems, we
can thus w.l.o.g.\ assume that the request set is condensed. In
\Cref{sec:asynchmodel}, we show that this also holds in
asynchronous systems.


\section{Analysis of the Optimal Offline Cost}
\label{sec:treeanalysis}

This and the next section discuss the main technical contribution of the paper
and analyzes the total cost of a synchronous \arrow\ execution when
run on an HST $T$. Throughout this section, we assume that a fixed HST
$T$, a set of dynamic requests $R$ placed at the leaves of $T$, and a
synchronous execution of \arrow\ with request set $R$ on $T$ are
given. For convenience, we relabel the requests in $R$ so that they
are ordered according to the queueing order resulting from the given
\arrow\ execution on $T$. That is, we assume that for all
$i\in\set{0,\dots,|R|-1}$, request $r_i=(v_i,t_i)$ is the
$i^{\mathit{th}}$ request in \arrow's order. Note that $r_0=(v_0,0)$
is still the dummy request defining the initial tail of the queue.  As
discussed in \Cref{sec:model}, the \arrow\ order can be seen as
a greedy ordering in the following sense. Given the first $i-1$
requests in the order, the $i^{\mathit{th}}$ request $r_i$ is a
request $r=(v,t)$ from the subset of the remaining requests that can
reach the node $v_{i-1}$ of request $r_{i-1}$ first immediately
sending a message at time $t$ from node $v$ to node $v_{i-1}$. This
greedy behavior is captured by the following basic lemma. The
generalization of this basic greedy property to the asynchronous
setting is discussed in \Cref{sec:asynchmodel}. For a more
thorough discussion, we also refer to \cite{herlihy2006dynamic}.

\begin{lemma}\label{le:timewindow}
  Consider a synchronous execution of \arrow\ on tree $T$ and consider
  two arbitrary requests $r_i$ and $r_j$ for which $1\leq i<j$ (i.e.,
  $r_j$ is ordered after $r_i$ by \arrow). Then it holds that
    \vspace*{-2mm}
  \begin{enumerate}
  \item $t_i + d_T(v_{i-1},v_i) \leq t_j + d_T(v_{i-1},v_j)$ and
  \item $t_i \leq t_j + d_T(v_i,v_j)$.  
  \end{enumerate}
\end{lemma}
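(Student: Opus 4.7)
The plan is to derive part~(1) directly from the greedy characterization of the synchronous \arrow\ order, and then derive part~(2) from part~(1) together with the triangle inequality on $T$.

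For part~(1), I would invoke the property stated in the preamble: on a tree, synchronous \arrow\ always routes the ``find predecessor'' message of a request $r=(v,t)$ along the direct path, so that message reaches any node $u$ at time exactly $t+d_T(v,u)$, and \arrow\ orders as the $i^{\mathit{th}}$ request the one whose message is the first such that has not yet been ordered to arrive at the node $v_{i-1}$ of the $(i-1)^{\mathit{th}}$ request. Since $j>i$ means $r_j$ is not ordered before $r_i$, the request $r_i$ must beat $r_j$ in this race to $v_{i-1}$, i.e.,
\[
t_i+d_T(v_{i-1},v_i)\;\leq\;t_j+d_T(v_{i-1},v_j),
\]
which is exactly inequality~(1). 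This step is really the content of the lemma and is just a restatement of the ``greedy'' behavior of \arrow\ from \cite{demmer1998arrow,herlihy2006dynamic,herlihy2001competitive}.

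For part~(2), starting from part~(1) and applying the triangle inequality on the tree metric $d_T$,
\[
d_T(v_{i-1},v_j)\;\leq\;d_T(v_{i-1},v_i)+d_T(v_i,v_j),
\]
so
\[
t_i+d_T(v_{i-1},v_i)\;\leq\;t_j+d_T(v_{i-1},v_i)+d_T(v_i,v_j),
\]
and canceling $d_T(v_{i-1},v_i)$ on both sides yields $t_i\leq t_j+d_T(v_i,v_j)$.

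The main obstacle, such as it is, lies not in the calculation but in being careful about corner cases of the greedy argument: one must be sure that the statement holds even when $r_j$ is issued after the moment \arrow\ places $r_{i-1}$ at position $i-1$, and even when several requests share the same issue node (so that some distances in the statement are zero). Both are handled by the fact that the greedy characterization of \arrow\ is in terms of the quantity $t+d_T(v,v_{i-1})$ over \emph{all} still-unordered requests, regardless of when they are injected, so the comparison with $r_j$ is always legitimate. Once this is in place, both parts are immediate.
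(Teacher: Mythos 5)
Your proposal is correct and follows essentially the same route as the paper: part~(1) is exactly the greedy ``race to $v_{i-1}$'' property of synchronous \arrow, which the paper likewise imports from the analysis in \cite{herlihy2006dynamic} rather than reproving, and part~(2) is obtained from part~(1) by the triangle inequality $d_T(v_{i-1},v_j)\leq d_T(v_{i-1},v_i)+d_T(v_i,v_j)$ and cancellation, just as in the paper. Your remark that the greedy comparison ranges over all still-unordered requests regardless of injection time is the right point to be careful about and matches the paper's discussion.
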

\begin{proof}
  The first claim of the lemma follows immediately from Definition 3.5 and from Lemma 3.8 and Lemma 3.9 in \cite{herlihy2006dynamic}.
  The second claim follows the first claim of the lemma and the triangle inequality. 
\end{proof}
\hide{
The following simple corollary will also be useful in different places
of the analysis.
\begin{corollary}\label{cor:timesep}
  For any two requests $r=(v,t)$ and $r'=(v',t')$, if
  $t'-t > d_T(v,v')$, request $r$ is ordered before $r'$ in every
  \arrow\ execution.
\end{corollary}
\begin{proof}
  If $r=(v,t)$ is the dummy request, the statement is trivially
  true. Otherwise, assume that $r'$ is ordered before $r$. Let
  $r''=(v'',t'')$ be the immediate predecessor of request $r'$. \Cref{le:timewindow} then implies that
  $t'+d_T(v',v'')\leq t+d_T(v,v'')$ and thus
  $t'-t\leq d_T(v,v'')-d_t(v',v'')\leq d_T(v,v')$ by the triangle
  inequality.
\end{proof}
} Before delving into the details of the analysis, we give a short
outline. In the first step in \Cref{sec:partition}, we study the
ordering generated by \arrow\ in more detail and show that it implies
a hierarchical partition of the requests $R$ in a natural way. To
simplify the next \Cref{sec:split} transforms the given HST $T$
into a new tree such that inside each subtree, if ordering the request
by time of occurrence, the gap between the times of consecutive
requests cannot be too large (whenever such a gap is too large, we
split the corresponding subtree into two trees). \Cref{sec:mst}
then shows that the optimal offline cost can be characterized by the
total Manhattan cost of a spanning tree that respects the hierarchical
structure of the HST $T$ in a best given way. Finally, in \Cref{sec:arrowanalysis}, we give a general framework to compare the
queueing cost of an online distributed algorithm on an HST $T$ to the optimal
offline cost on $T$ and we apply this method to synchronous \arrow\
executions. In \Cref{sec:asynchmodel}, we show that the same
framework can also be applied to general asynchronous \arrow\
executions. 

\subsection{Characterizing \boldmath$\arrow$ By A Hierarchical Partition of $R$}
\label{sec:partition}

We hierarchically partition the requests $R$ according to the \arrow\
queueing order and the hierarchical structure of the HST $T$. On each
level $\ell$ of $T$, we partition the requests into blocks, where a
block of requests is a maximal set of requests that are ordered
consecutively by \arrow\ inside some level-$\ell$ subtree of $T$. In
the following, for non-negative integers $s$ and $t$, we use the
abbreviations $[s]:=\set{0,\dots,s-1}$ and
$[s,t]:=\set{s,\dots,t}$. Formally, instead of partitioning the set of
requests $R$ directly, we partition the set of indexes $[|R|]$. Recall
that the requests in $R$ are indexed consecutively according to the
queueing order of \arrow.

\begin{figure}[t]
    \centering
    \begin{subfigure}{0.45\textwidth}
      \centering
      \includegraphics[width=0.9\textwidth, height=3cm]{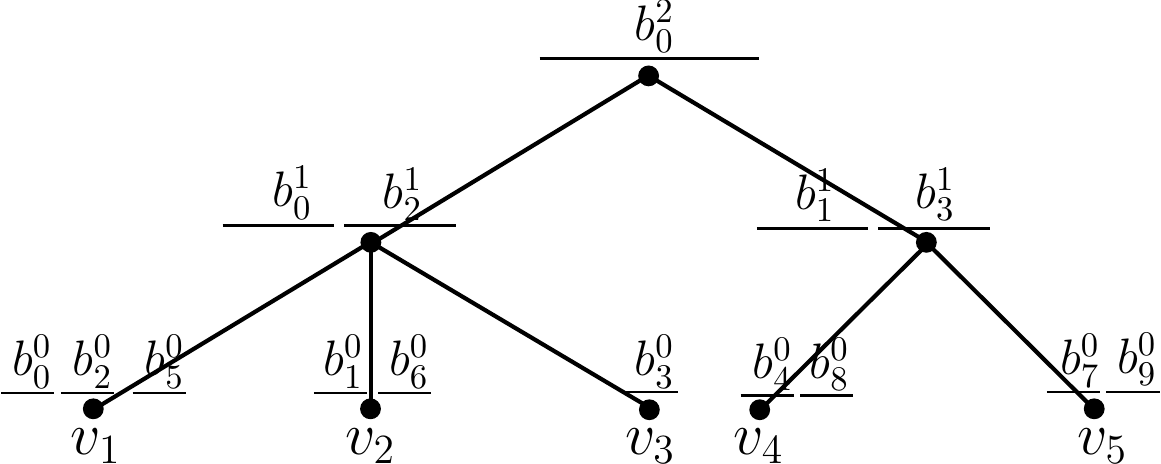}
      \caption{Blocks within the same subtree}
      \label{fig:neighborblocks}
    \end{subfigure}
    \hfill
    \begin{subfigure}{0.45\textwidth}
      \centering
      \includegraphics[width=0.9\textwidth,height=3cm]{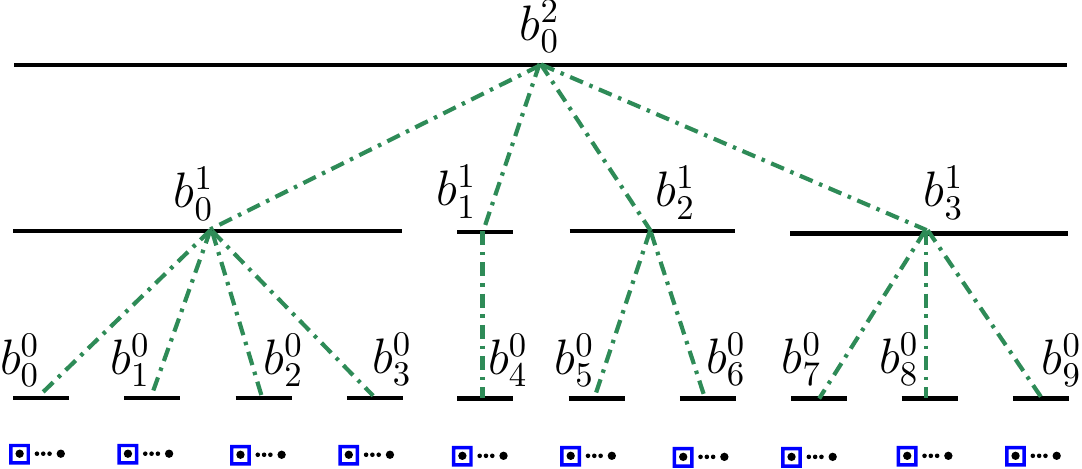}
      \caption{Tree induced by the block hierarchy}
      \label{fig:level-wise}
    \end{subfigure}
    \caption{The partition of $R$. (a) An HST with height $2$ and $5$
      leaves. The leaves issue requests at different times. The issued
      requests by nodes $v_1$, $v_2$, and $v_3$ are partitioned into
      the blocks $b^1_0$ and $b^1_2$ on level $1$. These two blocks
      are called neighbor blocks at a subtree rooted at height
      $1$. (b) The corresponding $4$ level-wise partition based on
      \arrow's order that forms a parent-child relation between the
      blocks on different levels. Blue boxes include the requests that
      are ordered first by \arrow\ among all requests in blocks
      $b^0_i$ for all $i \in [0,9]$.}
    \label{fig:firsttypepartition}
\end{figure}

\begin{definition}[\textbf{Hierarchical Block Partition}]\label{def:blockpartition}
  For each level $\lev \in [0,h]$, we partition $[|R|]$ into $n(\lev)$
  blocks $\set{b^{\lev}_0, b^{\lev}_1, \cdots , b^{\lev}_{n(\lev)-1}}$
  such that \vspace*{-3mm}
  \begin{enumerate}
  \item each block is a consecutive set of integers (i.e., a
    consecutively ordered set of requests),
  \item for every block $b_i^{\lev}$, all requests $r_p$ for $p\in
    b_i^{\lev}$ are in the same level-$\lev$ subtree of $T$, and
  \item for all $i,j\in [n(\lev)]$ and all $p\in b_i^{\lev}$ and $q\in
    b_j^{\lev}$, $i<j\ \Longrightarrow\ p<q$. 
  \end{enumerate}
  \smallskip
  For each block $b$, we further define
  the first request of $b$ to be the one that has minimum index
  in $b$.
\end{definition}

\noindent Note that for each level $\lev$ and for the first block of this
level, the first request of the block has index $0$. The block partition defined in
\Cref{def:blockpartition} is illustrated in \Cref{fig:firsttypepartition}.
\Cref{fig:neighborblocks} shows
the blocks within the HST structure, whereas \Cref{fig:level-wise} shows the hierarchical partition induced by the
blocks. To simplify the presentation of our analysis, we also define a
level $-1$ block $b^{-1}_i$ for each individual request $r_i$. Note
that we have $n(-1) =|R|$. The following definition allows to
navigate through the block hierarchy.

\begin{definition}[\textbf{Children Blocks}]\label{de:parentchild}
  The set of children blocks of a block $b_i^{\lev}$ on a level
  $\lev\in [0,h]$ is defined as
  $\child(b^{\lev}_i):=\set{b^{\lev-1}_j : b^{\lev-1}_j \subseteq
    b^{\lev}_i}$.
  Block $b_i^{\lev}$ is called the parent block of each of the blocks in
  $\child(b^{\lev}_i)$.
\end{definition}
In \Cref{fig:level-wise}, block $b^1_2$ is the parent block of its children blocks $b^0_5$ and $b^0_6$. Block $b^1_1$
has only one child block $b^0_4$ and thus $b^1_1=b^0_4$. 

The blocks
$\set{b^{\lev}_0, b^{\lev}_1, \cdots , b^{\lev}_{n(\lev)-1}}$ of level
$\lev$ belong to the subtrees rooted at height $\lev$ of the HST
$T$. Note that by the definition of the block partition, no two
consecutive blocks at the same level $\lev$ belong to the same level-$\lev$ subtree of $T$. The next
definition specifies notation to argue about blocks of the same
subtree of $T$.

\begin{definition}[\textbf{Blocks of Same Subtree}]\label{de:differentblockssamesubtree}
  If two blocks $b_i^{\lev}$ and $b_j^{\lev}$ belong to the same
  level-$\lev$ subtree of $T$, this is denoted by $\widehat{b^{\lev}_i
    b^{\lev}_j}$. Moreover, $|\widehat{b^{\lev}_i
    b^{\lev}_j}|:=\big|\big\{w : i<w<j\ \land\ \widehat{b^{\lev}_i
      b^{\lev}_w}\text{ holds}\big\}\big|$. Two blocks $b_i^{\lev}$ and
  $b_j^{\lev}$ are called \emph{neighbor blocks} if  $\widehat{b^{\lev}_i
    b^{\lev}_j}$ and $|\widehat{b^{\lev}_i b^{\lev}_j}|=0$.
\end{definition}

In \Cref{fig:neighborblocks}, blocks $b^0_0$, $b^0_2$, and
$b^0_5$ are within the same subtree rooted at node $v_1$.  Blocks
$b^0_0$ and $b^0_5$ are not neighbor blocks, however blocks $b^0_0$
and $b^0_2$, as well as blocks $b^0_2$ and $b^0_5$ are neighbor
blocks. The next lemma lists a number of simple properties of the
block partition.

\begin{lemma}\label{lemma:blockproperties}
  The block partition of \Cref{def:blockpartition} satisfies the following properties:
  \vspace*{-2mm}
  \begin{enumerate}
  \item For every block $b_i^{\lev}$ and for all $p,q\in b_i^{\lev}$,
    we have $d_T(v_p,v_q)\leq \treelength(\lev)$.
  \item For each level $\lev$ and all level-$\lev$ blocks $b_i^{\lev}$
    and $b_j^{\lev}$, if $\widehat{b^{\lev}_i b^{\lev}_j}$ holds, for
    any $p\in b_i^{\lev}$ and $q\in b_j^{\lev}$, we have
    $d_T(v_p,v_q)\leq\treelength(\lev)$.
  \item For each level $\lev$ and all level-$\lev$ blocks $b_i^{\lev}$
    and $b_j^{\lev}$, if $\widehat{b^{\lev}_i b^{\lev}_j}$ does
    \emph{not} hold, for all $p\in b_i^{\lev}$ and $q\in b_j^{\lev}$,
    we have $d_T(v_p,v_q)\geq\treelength(\lev+1)$.
  \item Assume $\lev<h$ and consider two blocks $b_i^{\lev}$ and
    $b_j^{\lev}$ that have a common parent block $b_w^{\lev+1}$, but
    for which $\widehat{b^{\lev}_i b^{\lev}_j}$ does not hold. Then, for all
    $p\in b_i^{\lev}$ and $q\in b_j^{\lev}$, we have
    $d_T(v_p,v_q)=\treelength(\lev+1)$.
  \end{enumerate}
\end{lemma}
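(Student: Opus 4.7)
The plan is to derive all four statements directly from the definition of the block partition (Definition~\ref{def:blockpartition}) together with the geometric structure of the HST $T$ with parameter $\alpha=2$. The single ingredient I will use repeatedly is the elementary fact that if two leaves $v_p,v_q$ of $T$ have their least common ancestor at level $\lev^*$, then $d_T(v_p,v_q)=\treelength(\lev^*)$, and that $\treelength(\lev)=2^{\lev+1}-2$ is strictly increasing in $\lev$. Once this is in hand, each of the four claims reduces to identifying at which level the LCA of an arbitrary pair of requests in question sits.

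First I would handle (1) and (2) together. By construction, every request in a block $b_i^{\lev}$ is located at a leaf of a single level-$\lev$ subtree of $T$, and the predicate $\widehat{b^{\lev}_i b^{\lev}_j}$ says exactly that $b_i^{\lev}$ and $b_j^{\lev}$ share such a level-$\lev$ subtree. Hence for any $p,q$ taken from $b_i^{\lev}$ (for~(1)) or one from each of $b_i^{\lev},b_j^{\lev}$ with $\widehat{b^{\lev}_i b^{\lev}_j}$ (for~(2)), the vertices $v_p,v_q$ are leaves of a common level-$\lev$ subtree, so their LCA lies at some level $\lev^*\leq \lev$, and monotonicity of $\treelength$ gives $d_T(v_p,v_q)=\treelength(\lev^*)\leq\treelength(\lev)$.

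Next I would treat (3) and (4) by the complementary argument. If $\widehat{b^{\lev}_i b^{\lev}_j}$ does not hold, then $b_i^{\lev}$ and $b_j^{\lev}$ lie in distinct level-$\lev$ subtrees of $T$, so for every $p\in b_i^{\lev}$ and $q\in b_j^{\lev}$ the LCA of $v_p$ and $v_q$ sits strictly above level $\lev$, i.e.\ at some level $\geq \lev+1$. This immediately yields $d_T(v_p,v_q)\geq \treelength(\lev+1)$, giving (3). For (4), the additional hypothesis that $b_i^{\lev}$ and $b_j^{\lev}$ share a parent block $b_w^{\lev+1}$ forces both to lie within a single level-$(\lev+1)$ subtree, so combined with the previous observation the LCA of $v_p$ and $v_q$ is at level exactly $\lev+1$, and therefore $d_T(v_p,v_q)=\treelength(\lev+1)$.

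I do not expect a genuine obstacle; this is a bookkeeping lemma translating the combinatorial partition into distance bounds. The only subtle point worth spelling out carefully is that the block partition is consistent with the HST subtree structure in the sense that ``the level-$\lev$ subtree containing a block $b_i^{\lev}$'' is well-defined — this follows from property~2 of Definition~\ref{def:blockpartition}, which guarantees that all requests in a block reside in a common level-$\lev$ subtree. With that consistency pinned down, the four claims follow from the monotonicity and exact value of $\treelength$.
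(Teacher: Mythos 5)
Your proof is correct and follows essentially the same route as the paper's: both arguments reduce each claim to locating the level of the least common ancestor of the two leaves and then invoke the fact that $d_T(v_p,v_q)=\treelength(\lev^*)$ with $\treelength$ increasing, with claim~4 obtained by combining the upper bound from the shared parent block with the lower bound from claim~3. No gaps.
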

\begin{proof}
  Recall that the distance between two leaves $u,v$ of the HST $T$ is
  equal to $\treelength(\lev)$ if the least common ancestor of $u$ and
  $v$ is on level $\lev$. The first claim then holds because all
  requests in a block $b_i^{\lev}$ at level $\lev$ are issued at nodes
  in the same level-$\lev$ subtree of $T$ and therefore the least
  common ancestor of any two of them is on level at most $\lev$. The
  second claim holds for a similar reason. If
  $\widehat{b^{\lev}_i b^{\lev}_j}$ holds for two blocks $b_i^{\lev}$
  and $b_j^{\lev}$, both blocks consist of requests in the same
  level-$\lev$ subtree of $T$. For the third claim, note that when
  $\widehat{b^{\lev}_i b^{\lev}_j}$ does not hold for two blocks
  $b_i^{\lev}$ and $b_j^{\lev}$, the two blocks do not belong to the
  same subtree at level $\ell$. Therefore for any two requests
  $p\in b_i^{\lev}$ and $q\in b_j^{\lev}$, the least common ancestor
  has to be on level at least $\ell+1$ and thus the distance
  $d_T(v_p,v_q)\geq \treelength(\lev+1)$. Finally, the fourth claim
  holds by combining the second claim (applied to block $b_w^{\lev+1}$
  on level $\lev+1$) and the third claim.
\end{proof}
\hide{
\begin{proof}
  Recall that the distance between two leaves $u,v$ of the HST $T$ is
  equal to $\treelength(\lev)$ if the least common ancestor of $u$ and
  $v$ is on level $\lev$. The first claim then holds because all
  requests in a block $b_i^{\lev}$ at level $\lev$ are issued at nodes
  in the same level-$\lev$ subtree of $T$ and therefore the least
  common ancestor of any two of them is on level at most $\lev$. The
  second claim holds for a similar reason. If
  $\widehat{b^{\lev}_i b^{\lev}_j}$ holds for two blocks $b_i^{\lev}$
  and $b_j^{\lev}$, both blocks consist of requests in the same
  level-$\lev$ subtree of $T$. For the third claim, note that when
  $\widehat{b^{\lev}_i b^{\lev}_j}$ does not hold for two blocks
  $b_i^{\lev}$ and $b_j^{\lev}$, the two blocks do not belong to the
  same subtree at level $\ell$. Therefore for any two requests
  $p\in b_i^{\lev}$ and $q\in b_j^{\lev}$, the least common ancestor
  has to be on level at least $\ell+1$ and thus the distance
  $d_T(v_p,v_q)\geq \treelength(\lev+1)$. Finally, the fourth claim
  holds by combining the second claim (applied to block $b_w^{\lev+1}$
  on level $\lev+1$) and the third claim.
\end{proof}
}

We have seen that in a synchronous \arrow\ execution, the latency cost
for ordering request $r_{i+1}$ as the successor of $r_i$ is exactly
the distance $d_T(v_i,v_{i+1})$ between the nodes of the two
requests. The total cost of \arrow\ therefore directly follows from
the structure of the block partition.

\begin{lemma}\label{lemma:blockarrowcost}
  The total cost of a synchronous \arrow\ execution on the HST $T$
  with corresponding hierarchical block partition is given by
  \[
  \cost_{\ARW}(\pi_{\ARW}) = \sum_{\lev=0}^{h-1}\big(n(\lev) - n(\lev+1)\big)\cdot\treelength(\lev+1). 
  \]
\end{lemma}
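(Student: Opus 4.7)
The plan is to start from the fact, stated immediately before the lemma, that in a synchronous \arrow\ execution the latency cost for ordering $r_{i+1}$ as the successor of $r_i$ equals the tree distance $d_T(v_i,v_{i+1})$. Therefore
\[
\cost_{\ARW}(\pi_{\ARW}) \;=\; \sum_{i=0}^{|R|-2} d_T(v_i,v_{i+1}),
\]
and the task reduces to counting, for each level $\lev$, how many of these $|R|-1$ consecutive pairs have their least common ancestor in $T$ on level $\lev$; each such pair contributes exactly $\treelength(\lev)$ by the HST geometry.

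The key observation is that for two \emph{consecutive} requests $r_i, r_{i+1}$ in the \arrow\ order, being in a common level-$\lev$ subtree of $T$ is equivalent to being in a common level-$\lev$ block of the partition from \Cref{def:blockpartition}. One direction is immediate from the definition of a block; the other uses the maximality built into the partition, namely that blocks are maximal consecutively-ordered runs inside a single level-$\lev$ subtree, so two consecutive indices cannot straddle a block boundary while still sitting in a common subtree. In particular, the least common ancestor of $v_i$ and $v_{i+1}$ is at level at most $\lev$ iff $r_i$ and $r_{i+1}$ lie in the same level-$\lev$ block.

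From here the proof is bookkeeping. For each $\lev\in[0,h]$, the level-$\lev$ partition splits the $|R|-1$ consecutive pairs into $n(\lev)-1$ ``boundary'' pairs (whose members lie in different blocks) and $|R|-n(\lev)$ ``internal'' pairs. Thus the number of consecutive pairs whose LCA lies on a level strictly greater than $\lev$ equals $n(\lev)-1$, and by subtraction the number of pairs with LCA on level exactly $\lev$, for $\lev\ge 1$, equals $(n(\lev-1)-1)-(n(\lev)-1)=n(\lev-1)-n(\lev)$. Pairs with LCA on level $0$ contribute nothing since $\treelength(0)=0$. Summing $\treelength(\lev)\cdot\bigl(n(\lev-1)-n(\lev)\bigr)$ over $\lev=1,\dots,h$ and shifting the index $\lev\mapsto \lev+1$ yields the claimed identity.

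I do not anticipate a genuine obstacle: the whole argument is a counting identity built on top of the known greedy-path property of synchronous \arrow. The one spot that deserves a careful sentence is the equivalence between ``in a common level-$\lev$ subtree'' and ``in a common level-$\lev$ block'' for \emph{consecutive} requests, since this is exactly what makes the block partition the right bookkeeping device to read off the cost.
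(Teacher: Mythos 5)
Your proof is correct and follows essentially the same route as the paper: express the synchronous cost as $\sum_i d_T(v_i,v_{i+1})$, use the block partition (via the maximality of blocks, i.e.\ claim 4 of \Cref{lemma:blockproperties}) to identify the $n(\lev)-1$ consecutive pairs crossing a level-$\lev$ block boundary with the pairs at distance at least $\treelength(\lev+1)$, and conclude by a telescoping count. Your write-up is in fact slightly more careful about the final counting step than the paper's.
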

\begin{proof}
  It follows from claim 4 of \Cref{lemma:blockproperties} that
  for any two requests $r$ and $r'$, $d_T(r,r')=\delta(\lev+1)$ for
  the smallest $\lev$ for which $r$ and $r'$ are in the same
  level-$\lev$ block. The block partition implies that for every level
  $\ell$, there are $n(\lev)-1$ consecutive requests $r_i$ and
  $r_{i+1}$ which are in different level-$\lev$ blocks. For every
  $\lev\in\{0,\dots,h-1\}$, the number of consecutive request pairs at
  distance at least $\treelength(\lev+1)$ is therefore equal to
  $n(\ell)-1$. The claim of the lemma now follows because
  $\cost_{\ARW}(\pi_{\ARW}) = \sum_{i=1}^{|R|-1} d_T(v_{i-1},v_i)$.
\end{proof}
\hide{
\begin{proof}
  It follows from claim 4 of \Cref{lemma:blockproperties} that
  for any two requests $r$ and $r'$, $d_T(r,r')=\delta(\lev+1)$ for
  the smallest $\lev$ for which $r$ and $r'$ are in the same
  level-$\lev$ block. The block partition implies that for every level
  $\ell$, there are $n(\lev)-1$ consecutive requests $r_i$ and
  $r_{i+1}$ which are in different level-$\lev$ blocks. For every
  $\lev\in\{0,\dots,h-1\}$, the number of consecutive request pairs at
  distance at least $\treelength(\lev+1)$ is therefore equal to
  $n(\ell)-1$. The claim of the lemma now follows because
  $\cost_{\ARW}(\pi_{\ARW}) = \sum_{i=1}^{|R|-2} d_T(v_{i-1},v_i)$.
\end{proof}
}

\subsection{HST Conversion}
\label{sec:split}

In this section, a recursive (top-down) splitting procedure is
provided so that the original HST is converted into a new HST with
better properties. The conversion does not change the total cost of
ordering the requests by \arrow\ (in fact, it does not change the
block partition). Further, the total \manhattan \ cost of optimal offline algorithm's
order asymptotically remains unchanged as well. We describe how the
splitting procedure works and we then argue its properties.

\para{Splitting Procedure:}
We describe the splitting procedure as it is applied to a subtree $T'$
that is rooted at a given level $\lev\in\set{0,\dots,h}$ of $T$. If
$\lev=0$, the tree $T'$ is returned unchanged. Otherwise ($\lev\geq 1$), 
we go through all level-$(\lev-1)$ subtrees $T''$ of $T'$. As long as
the tree $T''$ has two neighbor blocks $b_i^{\lev-1}$ and
$b_j^{\lev-1}$ (for $i<j$) for which the following condition
\eqref{eq:split} is true, the subtree $T''$ is split into two separate
subtrees $T_1''$ and $T_2''$ of $T'$.
\begin{equation}\label{eq:split}
	t_{\min}(b^{\lev-1}_j)-t_{\max}(b^{\lev-1}_i) \geq \treelength(\lev).
\end{equation}
The splitting of $T''$ into $T_1''$ and $T_2''$ works as follows. The
topology of $T_1''$ and $T_2''$ is identical to the topology of
$T''$. Each request $r=(v,t)$ that is issued at some node $v$ of $T''$
is either placed on the isomorphic copy of $v$ in $T_1''$ or in
$T_2''$. All requests $r$ in blocks $b_x^{\lev-1}$ of $T''$ for $x\leq i$ are
placed in tree $T_1''$ and all request in blocks $b_y^{\lev-1}$ of $T''$ for
$y\geq j$ are placed in tree $T_2''$. We perform such splittings for
trees $T'$ of level $\ell$ as long as there are subtrees of $T'$ on level
$\ell-1$ with neighbor blocks that satisfy Condition
\eqref{eq:split}. As soon as no such neighbor blocks exist, the
procedure is applied recursively to all trees $T''$ at level $\lev-1$
(including the new subtrees). The whole conversion is started by
applying the procedure to the complete HST $T$.

\begin{lemma}\label{le:arrowpreservingbysplit}
  The above splitting procedure does not change the hierarchical block
  partition and it thus also preserves \arrow's queueing order
  $\pi_{\ARW}$ and its total cost $\cost_{\ARW}(\pi_{\ARW})$.
\end{lemma}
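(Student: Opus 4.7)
The plan is to reduce the statement to a single split and then analyze how leaf-to-leaf distances change. Since the procedure is an iterated sequence of individual splits, it suffices to show that each single split preserves the hierarchical block partition; by \Cref{lemma:blockarrowcost} this preserves $\cost_{\ARW}(\pi_{\ARW})$, and preservation of the partition also forces preservation of the order $\pi_{\ARW}$ itself.

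Fix one split: a level-$(\ell-1)$ subtree $T''$ of a level-$\ell$ subtree $T'$ is replaced by two separate level-$(\ell-1)$ subtrees $T_1''$ and $T_2''$ of $T'$, based on a pair of neighbor blocks $b_i^{\ell-1}$ and $b_j^{\ell-1}$ of $T''$; call the resulting tree $\widetilde T$. A direct inspection of HST distances shows that $d_T(v,v')=d_{\widetilde T}(v,v')$ for every pair of leaves except those with one endpoint in $T_1''$ and the other in $T_2''$; for such a pair the distance grows from at most $\treelength(\ell-1)$ to exactly $\treelength(\ell)$. Thus distances only grow, and only across the split.

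The key structural step is the following non-straddling claim: in the original \arrow\ order, no two consecutive requests $r_{k-1}, r_k$ lie on opposite sides of the split. This uses the definition of neighbor blocks together with maximality of blocks: the $T''$-blocks $b_{x_1}^{\ell-1}<\cdots<b_{x_m}^{\ell-1}$ are never consecutive as indices in $[|R|]$, because between any two of them there must sit at least one block from a different level-$(\ell-1)$ subtree (otherwise they would have merged into one block). Hence the index-successor of the last request of any $T''$-block must lie outside $T''$, so an \arrow-transition from $T_1''$ to $T_2''$ always passes through a leaf outside $T''$.

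With the non-straddling claim in hand, I would prove by induction on $k$ that the first $k$ picks of \arrow\ on $\widetilde T$ coincide with those on $T$. By \Cref{le:timewindow}, the $k$-th pick of \arrow\ is an argmin of $r\mapsto t_r+d(v_{r_{k-1}},v_r)$ among the remaining requests. Let $u:=v_{r_{k-1}}$. If $u\notin T''$, all distances from $u$ are preserved and the argmin is unchanged. If $u\in T_1''$ (the case $u\in T_2''$ is symmetric), the non-straddling claim yields $v_{r_k}\notin T_2''$, so $r_k$'s score is identical in $T$ and $\widetilde T$, whereas every other remaining candidate $r$ has a score that either stays the same or strictly increases (the latter exactly when $v_r\in T_2''$). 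Hence $r_k$ is still an argmin in $\widetilde T$, and with the same deterministic tie-breaking it is again selected. Block preservation then follows by bookkeeping: the order is unchanged, the blocks of $T''$ remain separated from one another by non-$T''$ blocks in the order (so no merging occurs within $T_1''$ or $T_2''$), and $T_1''$ and $T_2''$ share the internal topology of $T''$, so blocks at every level agree with the original ones. The main obstacle is really the non-straddling claim; once it is in place, the greedy/argmin argument and the block bookkeeping are routine.
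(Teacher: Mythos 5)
Your proposal is correct and follows essentially the same route as the paper's proof: reduce to a single split, observe that only distances across the split change (and only increase), use the block structure (each level-$(\ell-1)$ block of $T''$ goes entirely to one side, and distinct blocks of the same subtree are never adjacent in the order) to conclude that consecutive \arrow\ pairs never straddle the split, and then invoke the greedy argmin characterization from \Cref{le:timewindow} to preserve the order, with \Cref{lemma:blockarrowcost} giving cost preservation. The only presentational difference is the order of steps — you isolate the non-straddling claim first and derive block preservation afterwards, whereas the paper argues block preservation first — but the underlying argument is the same.
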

\begin{proof}
  We prove that a single splitting step does not change the block
  partition or the \arrow\ cost. The lemma then follows by induction
  on the number of splits in the above procedure.  Assume that we are
  working on tree $T'$ on level $\lev$ and that we are splitting
  subtree $T''$ of $T'$ into $T_1''$ and $T_2''$ as a result of two
  neighbor blocks $b_i^{\lev-1}$ and $b_j^{\lev-1}$ satisfying
  Condition \eqref{eq:split}.

  We first show that w.r.t.\ \arrow's ordering $\pi_{\ARW}$ before the
  splitting step, the block partition remains the same. W.r.t.\ the
  ordering $\pi_{\ARW}$, the block partition can only change if some
  block of level $\lev' \leq \lev-1$ at a subtree of $T''$ is split
  into two blocks. Note that any subtree $\tau$ of $T$ that is rooted
  at some node $v$ outside $T''$ either does not contain any node of
  $T''$ or it contains the whole subtree $T''$. In both cases, the
  request set of $\tau$ does not change and w.r.t.\ ordering
  $\pi_{\ARW}$ therefore also their blocks on the level of node $v$
  remain the same. Because the blocks at some level $\lev'<\lev-1$ of
  tree $T''$ are a refinement of the blocks on level $\lev-1$, if some
  block of some level $\lev'\leq\lev-1$ at a subtree of $T''$ is
  split, there is also a level-$(\lev-1)$ block of tree $T''$ is split
  into two blocks. However this cannot happen because the splitting
  procedure moves each level-($\lev-1$) block of $T''$ either
  completely to $T_1''$ or to $T_2''$. Hence, w.r.t.\ the ordering
  $\pi_{\ARW}$ before the splitting, the block partition remains the
  same.

  We next show that this implies that for all pairs of requests
  $(r_i,r_{i+1})$ ordered consecutively by \arrow, the tree distance
  $d_T(v_i,v_{i+1})$ remains the same. If it does not remain the same,
  it means that $v_i$ and $v_{i+1}$ are both within $T''$ and thus
  before the split $d_T(v_i,v_{i+1})\leq \treelength(\lev-1)$ (their
  least common ancestor is some node in $T''$). Hence, $r_i$ and
  $r_{i+1}$ are in the same block on level $\lev-1$. To see this,
  recall that the blocks of level $\lev-1$ of $T''$ are the maximal
  set of requests inside tree $T''$ that are ordered consecutively by
  \arrow. Because $r_i$ and $r_{i+1}$ are ordered consecutively, they
  therefore have to be in the same level $\lev-1$ block of
  $T''$. After the split, we then have
  $d_T(v_i,v_{i+1})= \treelength(\lev)$ and thus $r_i$ and $r_{i+1}$
  cannot be in the same block at level $\lev'$ any more. As the
  splitting does not change the block partition (w.r.t.\ the original
  ordering $\pi_{\ARW}$), this cannot happen. Hence, we have that for
  every $i\in\set{0,\dots,|R|-2}$, $d_T(v_i,v_{i+1})$ remains
  unchanged. All other distances can only increase. Hence, even after
  the split, for every $i\in\set{0,\dots,|R|-2}$, request $r_{i+1}$
  still minimizes $t + d_T(v,v_{i})$ among all non-ordered requests
  $r=(v,t)$. \Cref{le:timewindow} therefore implies that
  $\pi_{\ARW}$ is still a valid \arrow\ ordering. Because the block
  partition remains the same, \Cref{lemma:blockarrowcost} also
  immediately implies that $\cost_{\ARW}(\pi_{\ARW})$ remains
  unchanged.  Because when splitting tree $T''$, every
  level-($\lev-1$) block of $T''$ either completely goes to tree
  $T_1''$ or to tree $T_2''$, the splitting does not divide any
  block. Hence, if we assume that the queueing order $\pi_{\ARW}$ is
  preserved, also the block partition is preserved.
\end{proof}
The next lemma shows that if a tree $T''$ is split into two trees
$T_1''$ and $T_2''$ such that all requests in $T_1''$ are ordered
before all requests in $T_2''$, there is a significant time of occurrence gap
between the requests ending up in subtrees $T_1''$ and $T_2''$.

\begin{lemma}\label{lemma:timeseparation}
  Assume that we are performing a single splitting. Further, assume
  that we are working on a tree $T'$ on level $\lev$ and that we are
  splitting a subtree $T''$ of $T'$ into $T_1''$ and $T_2''$ such that
  $T_1''$ obtains the blocks that are scheduled first by \arrow. If
  $R_1$ and $R_2$ are the request sets of $T_1''$ and $T_2''$,
  respectively, we have
  $t_{\min}(R_2)-t_{\max}(R_1)\geq
  \treelength(\lev)-\treelength(\lev-1)$.
\end{lemma}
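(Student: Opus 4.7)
The plan is to pinch $t_{\min}(R_2) - t_{\max}(R_1)$ by proving two one-sided bounds and then invoking the splitting condition \eqref{eq:split}. Specifically, I aim to show
(i) $t_{\max}(R_1) \leq t_{\max}(b_i^{\ell-1}) + \delta(\ell-1)$ and
(ii) $t_{\min}(R_2) \geq t_{\min}(b_j^{\ell-1})$;
combining these with $t_{\min}(b_j^{\ell-1}) - t_{\max}(b_i^{\ell-1}) \geq \delta(\ell)$ then directly yields
\[
  t_{\min}(R_2) - t_{\max}(R_1) \;\geq\; \delta(\ell) - \delta(\ell-1).
\]

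For (i), pick any $r_q \in R_1$. If $q \in b_i^{\ell-1}$ then $t_q \leq t_{\max}(b_i^{\ell-1})$ trivially; otherwise $q$ lies in an earlier block of $T''$, so it precedes every request of $b_i^{\ell-1}$ in \arrow's order. Applying \Cref{le:timewindow}(2) to $r_q$ and a request $r^* \in b_i^{\ell-1}$ with $t_{r^*} = t_{\max}(b_i^{\ell-1})$ gives $t_q \leq t_{r^*} + d_T(v_q, v_{r^*})$, and since $v_q$ and $v_{r^*}$ are both leaves of the level-$(\ell-1)$ subtree $T''$, their distance is at most $\delta(\ell-1)$, yielding (i).

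Step (ii) is the technical heart of the argument, and this is where a naive symmetric application of \Cref{le:timewindow}(2) would give only $t_{\min}(R_2) \geq t_{\min}(b_j^{\ell-1}) - \delta(\ell-1)$, making the final separation $\delta(\ell) - 2\delta(\ell-1)$ instead of $\delta(\ell) - \delta(\ell-1)$. Let $\hat q$ be the smallest \arrow-index in $b_j^{\ell-1}$ and consider $r_p \in R_2$ lying in a block of $T''$ strictly after $b_j^{\ell-1}$ (the case $p \in b_j^{\ell-1}$ is immediate). Invoking \Cref{le:timewindow}(1) with the pair $(\hat q, p)$ yields
\[
  t_{\hat q} + d_T(v_{\hat q - 1}, v_{\hat q}) \;\leq\; t_p + d_T(v_{\hat q - 1}, v_p).
\]
The crucial observation is that $v_{\hat q - 1}$ lies outside $T''$: consecutive blocks on level $\ell-1$ reside in different level-$(\ell-1)$ subtrees, so the request $r_{\hat q - 1}$ that immediately precedes $b_j^{\ell-1}$ in \arrow's order sits at a leaf of a different level-$(\ell-1)$ subtree. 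In an HST, the distance from any such fixed leaf outside $T''$ to every leaf of $T''$ is identical: the least common ancestor of $v_{\hat q - 1}$ with either $v_{\hat q}$ or $v_p$ is the lowest HST-node above both $v_{\hat q - 1}$ and the root of $T''$, and is therefore the same node in both cases. Hence the two tree-distances in the display are equal, the inequality collapses to $t_p \geq t_{\hat q} \geq t_{\min}(b_j^{\ell-1})$, and (ii) follows.

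The main obstacle is recognizing that \Cref{le:timewindow}(2) alone is too weak for (ii); sharpness requires the greedy/priority form of \Cref{le:timewindow}(1) together with the HST-specific equal-distance property. Once (i) and (ii) are in hand, the splitting condition \eqref{eq:split} delivers the claimed bound in a single line.
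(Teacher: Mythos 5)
Your proof is correct and follows essentially the same route as the paper's: the upper bound on $t_{\max}(R_1)$ via \Cref{le:timewindow}(2) and the diameter $\treelength(\lev-1)$ of $T''$, and the lower bound $t_{\min}(R_2)\geq t_{\min}(b_j^{\lev-1})$ via \Cref{le:timewindow}(1) combined with the observation that the predecessor of the first request of $b_j^{\lev-1}$ lies outside $T''$ and is therefore equidistant from all leaves of $T''$. The paper packages the latter as a slightly more general claim about the first-ordered block of any subtree, but the mechanism is identical.
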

\begin{proof}
  Assume that the split of the tree $T''$ is caused by two neighbor
  blocks $b_i^{\lev-1}$ and $b_j^{\lev-1}$ satisfying Condition
  \eqref{eq:split}. We first show that
  $t_{\min}(R_2) = t_{\min}(b_j^{\lev-1})$. To see this, we generally
  show that for any subset of blocks $b_{i_1}^{x},b_{i_2}^x,\dots$ of
  some tree $\bar{T}$ rooted at level $x$, if $b_{i_1}^x$ is the first
  of these blocks ordered by \arrow, then the first request ordered in
  $b_{i_1}^x$ has the smallest time of occurrence among all requests
  in blocks $b_{i_1}^x,b_{i_2}^x, \dots$. To see this, note that
  whenever \arrow\ enters a level-$x$ block $b_i^x$ of tree $\bar{T}$,
  the predecessor request $r$ is at a node $v$ outside tree
  $\bar{T}$. As a consequence, all leaf nodes in $u\in \bar{T}$ and
  thus all requests in $\bar{T}$ are at the same distance from $v$ in
  the HST $T$. Therefore \Cref{le:timewindow} implies that the
  successor of $r$ is a request with minimum time of occurrence.

  It remains to show that
  \begin{equation}\label{eq:maxtimeR1}
    t_{\max}(R_1) \leq t_{\max}(b_i^{\lev-1})+\treelength(\lev-1).    
  \end{equation}
  Assume that $r_p=(v_p,t_p)$ is a request from $R_1$ with
  $t_p=t_{\max}(R_1)$. Further, assume that $r_q=(v_q,t_q)$ is the
  last request ordered by \arrow\ among the requests in $R_1$. Note
  that request $r_q$ needs to be inside block $b_i^{\lev-1}$ because
  that is the last level-($\lev-1$) block that is assigned to tree
  $T_1''$. Hence, we clearly have $t_q\leq
  t_{\max}(b_i^{\lev-1})$. Therefore, if $r_p=r_q$
  \eqref{eq:maxtimeR1} clearly holds. We can therefore assume that
  $r_p$ is ordered before $r_q$ by \arrow. Consider the predecessor
  $r_{p-1}$ of request $r_p$. From the second part of \Cref{le:timewindow}, we have
  \begin{equation}\label{eq:tptqdifference}
    t_p - t_q \leq d_T(v_p,v_q).
  \end{equation}
  Since both $r_p$ and $r_q$ are in $T''$ then $d_T(v_p,v_q) \leq \treelength(\lev-1)$
  thus \eqref{eq:maxtimeR1} holds.
\end{proof}
It remains to show that the splitting also does not affect the optimal
offline cost in a significant way. The following lemma shows that the
Manhattan cost $\mtncost_{\mtn}(r,r')$ for any two requests $r$ and
$r'$ can increase by at most a factor $3$. Hence, also the total
Manhattan cost of an optimal ordering cannot increase by more than a
factor $3$.

\begin{lemma}\label{le:split}
  For any two requests $r$ and $r'$, the splitting procedure does not
  increase the Manhattan cost $\mtncost_{\mtn}(r,r')$ by more than a
  factor $3$.  
\end{lemma}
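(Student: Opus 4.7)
The plan is to fix an arbitrary pair of requests $r=(v,t)$ and $r'=(v',t')$ and track how their Manhattan cost $d_T(v,v')+|t-t'|$ evolves as the splitting procedure is applied to $T$ top-down; since the times $t,t'$ are untouched, only $d_T(v,v')$ can change. The first ingredient is a local statement: a single splitting step that splits a level-$(\ell-1)$ subtree $T''$ of a level-$\ell$ tree $T'$ into $T_1''$ and $T_2''$ alters $d_T(v,v')$ only when both $v$ and $v'$ lie in $T''$ and are routed to different halves. This is by a short case analysis: if both remain in the same half, that half has identical topology to $T''$ by construction, so the distance does not change; if at most one of $v,v'$ lies in $T''$, then the path between them passes through $T'$ both before and after the split, and since $T_1''$ and $T_2''$ are rooted at the same level $\ell-1$ as $T''$, the part of the path inside the replaced subtree contributes the same length.

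From this I would deduce that the whole top-down procedure contains at most one \emph{critical} splitting step affecting the pair $(r,r')$. Once $v$ and $v'$ have been placed in two different level-$(\ell^*-1)$ subtrees sharing a common level-$\ell^*$ parent, every subsequent split acts either strictly inside one of those subtrees or on a disjoint part of the tree, so by the local statement it cannot change $d_T(v,v')$. Moreover, since the procedure treats levels top-down, no split at a level higher than $\ell^*$ remains after the critical moment. Hence the Manhattan cost of the pair can change at a single instant only, if at all.

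It remains to bound the ratio at the critical split. Just before it, both $v$ and $v'$ lie in a common level-$(\ell^*-1)$ subtree $T''$, so their least common ancestor is on level at most $\ell^*-1$ and $d_T^{\text{orig}}(v,v')\leq \treelength(\ell^*-1)$; right after it, their least common ancestor is on level $\ell^*$ and $d_T^{\text{fin}}(v,v')=\treelength(\ell^*)$. \Cref{lemma:timeseparation}, applied to the two halves of the critical split, yields $\tau:=|t-t'|\geq \treelength(\ell^*)-\treelength(\ell^*-1)=2^{\ell^*}$. Therefore
\[
\frac{d_T^{\text{fin}}(v,v')+\tau}{d_T^{\text{orig}}(v,v')+\tau}
\;\leq\; \frac{\treelength(\ell^*)+\tau}{\tau}
\;=\; 1+\frac{\treelength(\ell^*)}{\tau}
\;\leq\; 1+\frac{2^{\ell^*+1}-2}{2^{\ell^*}}
\;<\; 3,
\]
while pairs that are never critically split give a ratio of exactly $1$. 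The main obstacle I anticipate is the rigorous justification of the single-critical-split claim: it requires a careful case analysis on the position of $v,v'$ relative to each potential split, combined with the observation that splits acting entirely inside one of the two subtrees already containing $v$ or $v'$ leave the leaf-to-subtree-root depth unchanged and thus do not alter $d_T(v,v')$.
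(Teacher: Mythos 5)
Your proof is correct and follows essentially the same route as the paper's: both arguments show that only $d_T(v,v')$ can change, that the top-down order ensures at most one split separates the pair, and then invoke \Cref{lemma:timeseparation} to get $|t-t'|\geq \treelength(\ell^*)-\treelength(\ell^*-1)>\treelength(\ell^*)/2$ and hence a ratio below $3$. The only difference is cosmetic: you bound the ratio of the two Manhattan costs directly, while the paper bounds the cost before the split from below by $|t-t'|$ and the cost after from above by $3|t-t'|$.
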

\begin{proof}
  We prove that a) by every single splitting, the Manhattan cost
  $\mtncost_{\mtn}(r,r')$ can at most increase by a factor of $3$ and
  b) the Manhattan cost $\mtncost_{\mtn}(r,r')$ is affected by at most
  one splitting. Assume that $r=(v,t)$ and $r'=(v',t')$. Clearly, the
  issue times $t$ and $t'$ are not affected by the splitting. The
  Manhattan cost can therefore only change because $d_T(v,v')$
  changes. We first show that this can happen at most once. When
  working on tree $T'$ at level $\lev$, a splitting divides a subtree
  $T''$ at level $\lev-1$ into two subtrees $T_1''$ and
  $T_2''$. Hence, when working on level $\lev$, if two nodes are
  affected by the splitting their distance in $T'$ increases from at
  most $\treelength(\lev-1)$ to exactly
  $\treelength(\lev)$. Therefore, after separating two nodes $v$ and
  $v'$ because of a splitting for a tree $T'$ on level $\lev$, the two
  nodes cannot be affected by another splitting on a level
  $\lev'\geq \lev$. Claim b) now follows because we do the splitting
  in a top-down way, i.e., throughout the splitting procedure the
  levels on which we split are monotonically non-increasing.
  
  To prove claim a), let us assume that $r=(v,t)$ and $r'=(v',t')$ are
  affected by a splitting when a tree $T''$ at level $\lev-1$ is split
  into two trees $T_1''$ and $T_2''$. We have
  already seen that this implies that after the splitting, we have
  $d_T(v,v')=\treelength(\lev)$. It further follows from \Cref{lemma:timeseparation} that $|t-t'|\geq
  \treelength(\lev)-\treelength(\lev-1)>\treelength(\lev)/2$. Hence,
  before the splitting, we have $\mtncost_{\mtn}(r,r')\geq |t-t'|$ and
  after the splitting, we have $\mtncost_{\mtn}(r,r')\leq |t-t'| +
  d_T(v,v')<3 \cdot |t-t'|$.
\end{proof}
For the remainder of the analysis in this section (and also in \Cref{sec:asynchmodel}),
we assume that the HST $T$ is an HST that is
obtained after applying the splitting procedure recursively. We
therefore assume that for every level $\ell$ and every subtree $T'$ at
level $\ell$, there is no level-($\ell-1$) subtree $T''$ of $T'$ that
contains two neihghbor blocks that satisfy Condition \eqref{eq:split}.


\subsection{Lower Bounding The Optimal Manhattan Cost}
\label{sec:mst}

In this section, we construct a tree $\stt$ that spans all requests in
$R$. The tree $\stt$ has a nice hierarchical structure: For each
subtree $T'$ of $T$, the set edges of $\stt$ induced by the request set
of the subtree $T'$ forms a spanning tree of the request set of
$T'$. Apart from this useful structural property, we will show that the
total Manhattan cost of the spanning tree $\stt$ is within a constant
factor of minimum spanning tree (MST) of the request set $R$ w.r.t.\
the Manhattan cost. We have seen that on condensed request sets, the
optimal TSP path of the request set w.r.t.\ the Manhattan cost is
within a constant factor of the optimal offline queueing cost. Note
that because any TSP path is also a spanning tree, this implies that
the total Manhattan cost of the MST and thus also the total Manhattan
cost of the tree $\stt$ are lower bounding the optimal offline queueing
cost within a constant multiplicative factor.

Throughout this section, for convenience, we add one more level to the
HST $T$. Instead of placing the requests at the leaves on level $0$,
we assume that each level $0$ node $v$ has a child node on level $-1$
for each of the requests issued at node $v$. Hence, the new leaf nodes
are on level $-1$ and each leaf node receives exactly one
request.\footnote{Note that subtrees of $T$ that do not have any
  queueing requests can be ignored and therefore, we can w.l.o.g.\
  assume that every leaf node issues some queueing request.} The
distance between a level $-1$ node and its parent on level $0$ is set
to be $0$.

\para{Spanning Tree Construction:}
The spanning tree $\stt$ is constructed greedily in a bottom-up
fashion. For each subtree $T'$ of $T$, we recursively define a tree
$\stt(T')$ as follows. For the leaf nodes on level $-1$, the tree
consists of the single request placed at the node. For a tree $T'$
rooted at a node $v$ on level $\ell\geq 0$, the tree $\stt(T')$
consists of the recursively constructed trees
$\stt(T_1''), \stt(T_2''), \dots$ of the subtrees
$T_1'', T_2'', \dots$ of $T''$ and of edges connecting the trees
$\stt(T_1''), \stt(T_2''), \dots$ to a spanning tree of the set of
request issued at leaves of tree $T'$. The edges for connecting the
trees $\stt(T_1''), \stt(T_2''), \dots$ are chosen so that they have
minimum total Manhattan cost. That is, to connect the trees
$\stt(T_1''), \stt(T_2''), \dots$, we compute an MST of the graph we
get if each of the trees $\stt(T_i'')$ is contracted to a single
node. We can therefore for example choose the edges to connect the
trees $\stt(T_1''), \stt(T_2''), \dots$ in e greedy way: Always add
the lightest (w.r.t.\ Manhattan cost) edge that does not close a cycle
with the already existing edges, including the edges of the trees
$\stt(T_1''), \stt(T_2''), \dots$.

\para{MST Approximation:} In the following, it is shown that the total
Manhattan cost of the tree $\stt=\stt(T)$ is within a constant factor
of the cost of an MST w.r.t.\ the Manhattan cost. Where convenient, we
identify a tree $\tau$ with its set of edges, i.e., we also use $\stt$
to denote the set of edges of the tree $\stt$. Further, the cost of an
edge $e=\set{r,r'}$ is the Manhattan cost $\mtncost_{\mtn}(r,r')$. We
also slightly abuse notation and use $\mtncost_{\mtn}(e)$ to denote
this cost. The proof applies a general MST approximation result that
appears in \Cref{thm:MSTapprox} in \Cref{sec:MSTapprox}. Together with the following technical lemma,
\Cref{thm:MSTapprox} directly implies that the total Manhattan
cost of $\stt$ is within a factor $4$ of the MST Manhattan cost. For a
subtree $T'$ of $T$, we use $R(T')$ to denote the subset of the
requests $R$ that are issued at nodes of $T'$.

\begin{figure}[t]
  \centering
  \includegraphics[width=7.5cm, height=4cm]{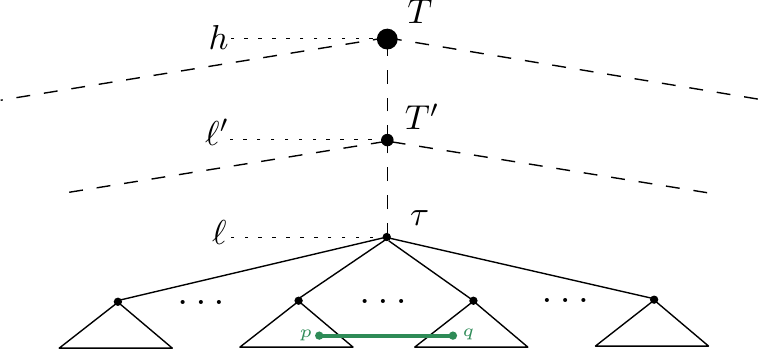}
  \caption{The HST $T$ and the edge $\set{r_p,r_q}$. The subtree
    $T'$ is the highest subtree that includes $r_p$ and $r_q$ while
    $|t_p-t_q| > 3 \cdot \treelength(\lev')$ where $T'$ is rooted at
    level $\lev' \geq \lev$.}
  \label{fig:mst}
\end{figure}

\begin{lemma}\label{le:edgeapprox}
  Consider the constructed spanning tree $\stt$ and consider an
  arbitrary edge $e$ of $\stt$. Let $\stt_1$ and $\stt_2$ the two
  subtrees that result when removing edge $e$ from $\stt$. Further,
  assume $e^*$ be an edge that connects the two subtrees $\stt_1$ and
  $\stt_2$ and that has minimum Manhattan cost among all such
  edges. We then have
  $\mtncost_{\mtn}(e) \leq 4 \cdot \mtncost_{\mtn}(e^*)$.
\end{lemma}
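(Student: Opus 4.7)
My plan is to do a case analysis based on whether the endpoints of $e^*=\{r_a,r_b\}$ lie in $R(T')$, where $T'$ is the level-$\ell$ subtree of $T$ at which $e=\{r_p,r_q\}$ was added in the bottom-up construction. The level-$\ell$ step computed an MST on the contracted graph whose nodes are the subtrees $C_k=\stt(T_k'')$ of the level-$(\ell-1)$ children $T_k''$ of $T'$, with edge weights $\min_{r\in R(T_k''),\,r'\in R(T_m'')}\mc(\{r,r'\})$. Removing the MST edge $e$ partitions these contracted nodes into sets $A,B$, and since $e$ does not disconnect any single $\stt(T_k'')$, we have $U:=\stt_1\cap R(T')=\bigcup_{C_k\in A}R(T_k'')$ and $V:=\stt_2\cap R(T')=\bigcup_{C_m\in B}R(T_m'')$. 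The MST cut property in the contracted graph then yields
\[
(\star)\qquad \mc(\{r,r'\}) \ge \mc(e) \quad \text{for every $r\in U$, $r'\in V$ lying in different contracted nodes.}
\]

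\emph{Case A: $r_a,r_b\in R(T')$.} Then $r_a\in U$ and $r_b\in V$, and they must sit in distinct contracted nodes (otherwise they would share a single $\stt(T_k'')$, hence a single component of $\stt\setminus\{e\}$). Applying $(\star)$ to $e^*$ gives $\mc(e)\le \mc(e^*)$, well inside the factor $4$.

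\emph{Case B: at least one endpoint of $e^*$ lies outside $R(T')$.} Say $r_b\notin R(T')$; then $v_b$ lies in a level-$\ell$ subtree of $T$ different from $T'$. If in addition $r_a\notin R(T')$, then $v_a$ and $v_b$ must lie in \emph{different} level-$\ell$ subtrees, since otherwise both would sit in a single subtree $\stt(T^{**})\subset \stt$ and hence in the same component of $\stt\setminus\{e\}$. Either way $d_T(v_a,v_b)\ge \delta(\ell+1)=2\delta(\ell)+2$, so $\mc(e^*)\ge 2\delta(\ell)+2$. I then pass to a surrogate cut edge inside $R(T')$: walk the unique $\stt$-path $\pi$ from $r_a$ to $r_b$ (which contains $e$), let $r_{a'}$ be the first vertex of $R(T')$ encountered starting at $r_a$ (so $r_{a'}=r_a$ when $r_a\in R(T')$), and define $r_{b'}$ analogously from the $r_b$ side of $e$. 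Because $r_{a'}\in U$ and $r_{b'}\in V$ lie in different contracted nodes, $(\star)$ yields $\mc(e)\le \mc(\{r_{a'},r_{b'}\})\le \delta(\ell)+|t_{a'}-t_{b'}|$. Splitting $|t_{a'}-t_{b'}|\le |t_{a'}-t_a|+|t_a-t_b|+|t_b-t_{b'}|$ and using $|t_a-t_b|\le \mc(e^*)$, the lemma reduces to controlling the temporal drifts $|t_{a'}-t_a|$ and $|t_b-t_{b'}|$ along the portions of $\pi$ that lie outside $R(T')$.

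The main obstacle is exactly this drift estimate. My plan is to exploit that every edge on those outside-portions of $\pi$ is an $\stt$-edge added at some level strictly greater than $\ell$ in a higher-level MST construction, so its distance component alone is at least $\delta(\ell+1)$; combined with the MST minimality of those higher-level edges and the HST conversion invariant from \Cref{sec:split} (no neighbor blocks with time gaps too large relative to their level), one should be able to charge the cumulative drift $|t_{a'}-t_a|+|t_b-t_{b'}|$ against a constant multiple of $\mc(e^*)$. Feeding this back into $\mc(e)\le \delta(\ell)+|t_{a'}-t_a|+\mc(e^*)+|t_b-t_{b'}|$ and using $\delta(\ell)\le \mc(e^*)/2$ then yields the desired $\mc(e)\le 4\,\mc(e^*)$, with the constant $4$ coming out of bookkeeping.
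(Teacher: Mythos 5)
Your Case A is correct and is a clean application of the MST cut property at the level where $e$ was added; it even gives factor $1$ there, which is tidier than what the paper does. The problem is Case B, and you have correctly identified where it is: the ``drift estimate'' $|t_{a'}-t_a|+|t_b-t_{b'}|\le 2.5\,\mc(e^*)$ is not a bookkeeping step, it is the entire difficulty of the lemma, and nothing in your sketch actually establishes it. The tools you name do not suffice: the MST minimality of higher-level $\stt$-edges only gives \emph{lower} bounds on the Manhattan costs of cut-crossing pairs at those levels, and what you need is an \emph{upper} bound on the time span accumulated along the portions of the $\stt$-path from $r_a$ to $r_{a'}$ (and $r_b$ to $r_{b'}$) lying outside $R(T')$; that path can have many edges whose individual Manhattan costs bear no a priori relation to $\mc(e^*)$. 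Likewise $e^*$, being the \emph{minimum}-cost cut edge, only lower-bounds other cut edges and cannot by itself upper-bound $|t_{a'}-t_{b'}|$.

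The missing ingredient is the \arrow\ ordering itself, which your sketch never invokes. The paper's proof of the corresponding hard case first controls $\mc(e)=\treelength(\lev)+|t_p-t_q|$ by climbing to the largest level $\lev'$ with $|t_p-t_q|>3\treelength(\lev')$ (so that $\mc(e)\le 3\treelength(\lev'+1)+\treelength(\lev)$), and then, for the subcase where $e^*$ stays inside the level-$\lev'$ subtree, combines the splitting invariant \eqref{eq:split} with \Cref{le:timewindow} to exhibit a single \arrow\ edge $(r_x,r_{x+1})$ crossing the cut and to show that \emph{every} request of $V_{\stt_1}\cap R(T')$ is ordered before $r_x$ and every request of $V_{\stt_2}\cap R(T')$ after $r_{x+1}$; the time-window inequalities applied to this ordering are what convert the splitting invariant into the upper bound $\mc(e)\le \mc(r_w,r_z)+2\treelength(\lev')\le 3\,\mc(r_w,r_z)$ for all cut-crossing pairs inside $R(T')$. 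Without relating the $\stt$-cut to the \arrow\ order in this way, the splitting condition (which only constrains time gaps between consecutive \emph{blocks}) gives you no handle on the temporal drift along an arbitrary $\stt$-path, so as written the proof is incomplete at its central step.
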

\begin{proof}
  Assume that the edge $e=\set{r_p,r_q} \in \stt(\tau)$ is an edge
  that connects two subtrees of a subtree $\tau$ of $T$ that is rooted
  at some level $\lev\in [0,h]$. Further, let $V_{\stt_1}$ and
  $V_{\stt_2}$ be the node sets of the two subtrees of $\stt_1$ and
  $\stt_2$.

  Let us first assume that $|t_p-t_q| \leq 3 \cdot \treelength(\lev)$.
  All edges including $e^*$ from the metric $(R,\mtncost_{\mtn})$
  that cross the cut $(V_{\stt_1},V_{\stt_2})$ have length at least
  $\treelength(\lev)$ since $d_T(v_w,v_z) \geq \treelength(\lev)$ for
  all $r_w \in V_{\stt_1}$ and $r_z \in V_{\stt_2}$. Since
  $d_T(v_p,v_q)=\treelength(\lev)$, we then have
  $\mc(e) \leq 4 \cdot \treelength(\lev)$. Hence, the
  claim of the lemma holds.

  Let us therefore assume that
  $|t_p-t_q| > 3 \cdot \treelength(\lev)$. Let $\lev' \in [\lev,h]$ be
  the largest level for which $|t_p-t_q| > 3 \cdot \treelength(\lev')$
  and let $T'$ be the subtree of $T$ that is rooted on level $\lev'$
  and that contains both requests $r_p$ and $r_q$ (see \Cref{fig:mst}). Note that this implies that  
  \begin{equation}\label{eq:upperboundone}
    |t_p-t_q| \leq 3\cdot\treelength(\lev'+1)\quad\text{and thus}\quad
    \mc(e) \leq 3\cdot\treelength(\lev'+1) + \treelength(\lev).
  \end{equation}
  
  We can partition each of the sets $V_{\stt_1}$ and $V_{\stt_2}$ into
  two sets where one of the sets in each case includes the requests in
  the subtree $T'$ and the other set includes the requests outside
  subtree $T'$ (see \Cref{fig:edgeapprox}). The edge $e$
  obviously connects the two components $V_{\stt_1} \cap R(T')$ and
  $V_{\stt_2} \cap R(T')$ since $r_p$ and $r_q$ are both in
  $R(T')$. If the edge $e$ is removed then edge $e^*$ is an edge
  connecting one of the two components $V_{\stt_1} \cap R(T')$ and
  $V_{\stt_1} \setminus R(T')$ in $V_{\stt_1}$ to one of the two
  components $V_{\stt_2} \cap R(T')$ and
  $V_{\stt_2} \setminus R(T')$ in $V_{\stt_2}$. The four different
  types of such edges are shown by the dashed edges in \Cref{fig:edgeapprox}.
  \begin{figure}[t]
    \centering
    \includegraphics[width=9cm, height=4cm]{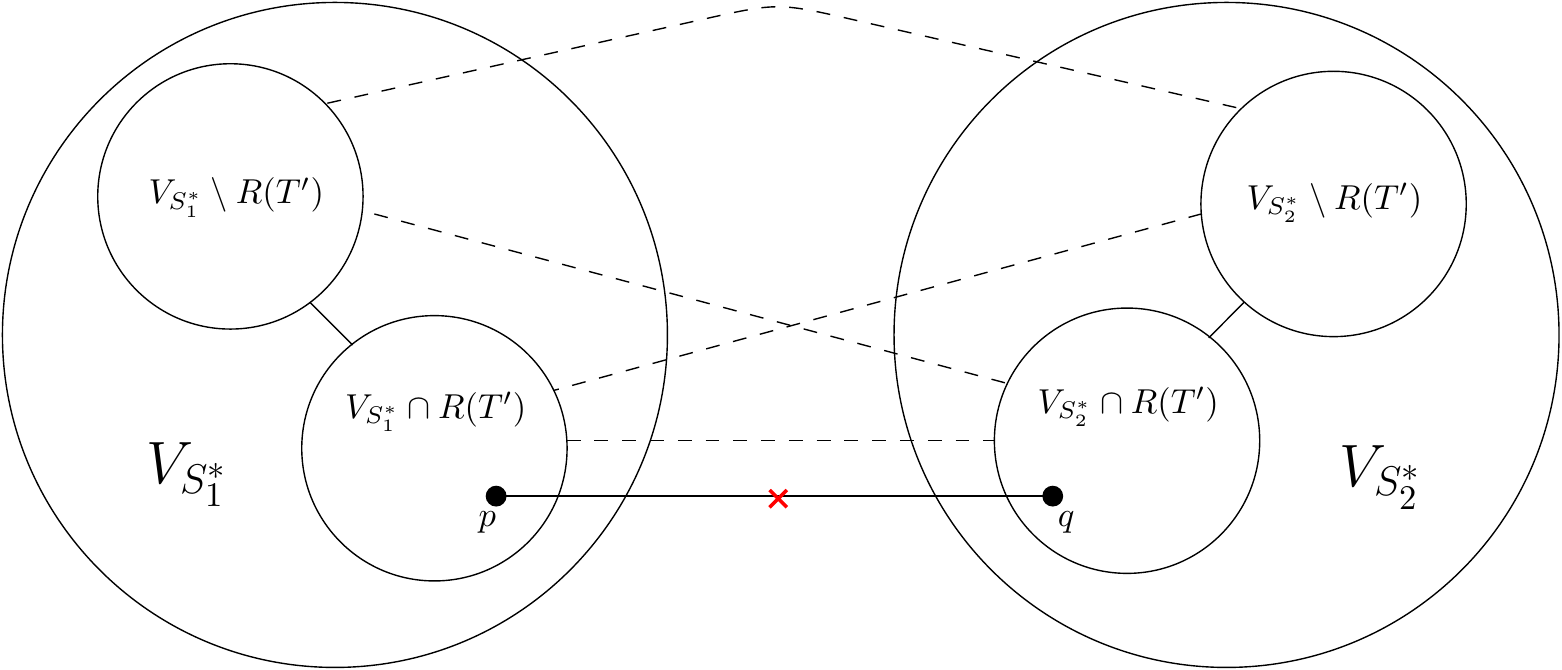}
    \caption{The spanning tree $\stt$ when there is a subtree $T'$
      that is rooted at height $\lev' \in [\lev,h]$ and is the highest
      subtree where $|t_p-t_q| > 3 \cdot \treelength(\lev')$. If the
      edge $\set{p,q}$ is removed then the edge $e^*$ could be one of the
      dashed edges.}
    \label{fig:edgeapprox}
  \end{figure}

  Any edge that connects the two components
  $V_{\stt_1}\setminus R(T')$ and $V_{\stt_2} \cap R(T')$ has length
  at least $\treelength(\lev'+1)$ since
  $d_T(v_w,v_z) \geq \treelength(\lev'+1)$ for all
  $r_w \in V_{\stt_1}\setminus R(T')$ and
  $r_z \in V_{\stt_2} \cap R(T')$. By symmetry, the same also holds
  for the edges that connect the two components
  $V_{\stt_1} \cap R(T')$ and $V_{\stt_2}\setminus R(T')$. Hence, if
  $e^*$ is an edge of one of these two types, we have
  $\mc(e^*)\geq \treelength(\lev'+1)$. It then follows directly from
  \eqref{eq:upperboundone} that $\mc(e)\leq 4 \cdot \mc(e^*)$ and thus the
  claim of the lemma holds.

  Let us therefore move to the case where $e^*$ connects the two
  components $V_{\stt_1} \setminus R(T')$ and
  $V_{\stt_2}\setminus R(T')$, i.e., $e^*=\set{r_x,r_y}$ connects to
  nodes $v_x$ and $v_y$ outside tree $T'$. Recall that the tree $\stt$
  is constructed in a bottom-up way such that the subtree $\stt(T'')$
  of $\stt$ is connected for every subtree $T''$ of $T$. Hence,
  removing edge $e$ inside subtree $T'$ does not affect subtrees
  $\stt(T'')$ for trees $T''$ that do not contain $T'$. Therefore if
  two nodes $u$ and $v$ outside tree $T'$ end up on different sides of
  the cut $(V_{\stt_1},V_{\stt_2})$, the least common ancestor of
  $v_x$ and $v_y$ has to be an ancestor of $T'$ and it is thus at
  level at least $\lev'+1$. Hence, if $e^*$ connects the two
  components $V_{\stt_1} \setminus R(T')$ and
  $V_{\stt_2}\setminus R(T')$, we also have
  $\mc(e^*)\geq\treelength(\lev'+1)$ and therefore again
  \eqref{eq:upperboundone} implies the claim of the lemma.
  
  It remains to show that all edges that connect the two components
  $V_{\stt_1} \cap R(T')$ and $V_{\stt_2} \cap R(T')$ are also large
  enough. W.l.o.g., we assume that $p < q$, i.e., the request $r_p$ is
  ordered before the request $r_q$ by \arrow. Further w.l.o.g., we
  assume that the dummy request is in $V_{\stt_1}$.

  We next show that $t_q > t_p$.  If $p=0$ then the $t_q \geq t_p$
  because $t_p=0$ and because $|t_p-t_q|>3 \cdot \treelength(\lev)\geq
  0$.
  Otherwise, for the sake of contradiction, let us assume that
  $t_q \leq t_p$. By the second part of \Cref{le:timewindow} we have
  \[
  t_p-t_q \leq d_T(v_p,v_q) \leq \treelength(\lev).
  \]
  This together with our assumption $t_q \leq t_p$ contradicts the fact
  that $|t_p-t_q| > 3 \cdot \treelength(\lev)$. Therefore,
  $t_q > t_p$. 

  Recall that $e$ connects the two requests $r_p$ and $r_q$ inside
  level-$\lev$ tree $\tau$. Consider the subtree $\stt(\tau)$ of
  $\stt$ and let $\stt_1(\tau)$ and $\stt_2(\tau)$ be the two subtrees
  of $\stt(\tau)$ that are obtained when removing edge $e$ from
  $\stt(\tau)$.  By the construction of the tree \stt, the edge $e$ is
  one with minimum Manhattan cost among all edges connecting the
  requests in $\stt_1(\tau)$ and $\stt_2(\tau)$. We know that for all
  $r_w \in V_{\stt_1(\tau)}$ and $r_z \in V_{\stt_2(\tau)}$ we have
  $d(v_w,v_z)=\treelength(\lev)$.  These facts imply that
  $t_p=t_{\max}(V_{\stt_1(\tau)})$ and
  $t_q=t_{\min}(V_{\stt_2(\tau)})$.

  Now we show that there is an \arrow \ edge $(r_x,r_{x+1})$ where
  $r_x \in V_{\stt_1(\tau)}$ and $r_{x+1} \in V_{\stt_2(\tau)}$.  For
  any two neighbor blocks $b^{\lev}_i$ and $b^{\lev}_j$ at
  subtree $\tau$ and with $i<j$, we know that
  \[
  t_{\min}(b^{\lev}_j)-t_{\max}(b^{\lev}_i) < \treelength(\lev+1)
  \] 
  as otherwise because of the split condition \eqref{eq:split}, the subtree
  $\tau$ would have been split. Thus, we have
  \[
  t_{\min}(b^{\lev}_j)-t_{\max}(b^{\lev}_i) < 3 \cdot \treelength(\lev)
  \] 
  since $\treelength(\lev+1) \leq 3 \cdot \treelength(\lev)$ for
  $\alpha=2$.  Let
  $b_{i_1}^{\lev},b_{i_2}^{\lev},\dots,b_{i_s}^{\lev}$ be the
  level-$\lev$ blocks of the subtree $\tau$ and assume that
  $i_1<i_2<\cdots<i_s$.  As $t_q-t_p > 3 \cdot \treelength(\lev)$ and
  because $t_p=t_{\max}(V_{\stt_1(\tau)})$ and
  $t_q=t_{\min}(V_{\stt_2(\tau)})$, for any two neighbor blocks
  $b_{i_j}^{\lev}$ and $b_{i_{j+1}}^{\lev}$, the requests $r=(v,t)$
  from $b_{i_j}^{\lev}$ with $t=t_{\max}(b_{i_j}^{\lev})$ and the
  requests $r'=(v',t')$ from $b_{i_{j+1}}^{\lev}$ with
  $t'=t_{\min}(b_{i_{j+1}}^{\lev})$ either all have to be in in
  $V_{\stt_1(\tau)}$ or they all have to be in $V_{\stt_2(\tau)}$. We
  show that this implies that there has to be a block $b_{i_j}^{\lev}$
  at tree $\tau$ for which the first request is in $V_{\stt_1(\tau)}$
  and which contains some request from $V_{\stt_2(\tau)}$. First note
  that because of \Cref{le:timewindow} and because we assumed
  that the dummy request is in $V_{\stt_1(\tau)}$, the first request
  of $b_{i_1}^{\lev}$ is in $V_{\stt_1(\tau)}$. If all the first
  requests of blocks $b_{i_j}^{\lev}$ are in $V_{\stt_1(\tau)}$, it
  follows from the fact that $V_{\stt_2(\tau)}$ needs to be non-empty
  that there has to be a block $b_{i_j}^{\lev}$ for which the first
  request is in $V_{\stt_1(\tau)}$ and which contains some request
  from $V_{\stt_2(\tau)}$. Otherwise, assume that $b_{i_j}^{\lev}$
  (for $j\geq 2$) is the first block for which the first request is in
  $V_{\stt_2(\tau)}$. Because by \Cref{le:timewindow}, the first
  request of a block is always one with smallest issue time, the above
  observation implies that the request with the largest issue time in
  $b_{i_{j-1}}^{\lev}$ is in $V_{\stt_2(\tau)}$ and then
  $b_{i_{j-1}}^{\lev}$ there has the first request is in
  $V_{\stt_1(\tau)}$ and which contains some request from
  $V_{\stt_2(\tau)}$. In a block, where the first request is from
  $V_{\stt_1(\tau)}$ and there is some request from
  $V_{\stt_2(\tau)}$, there also have be two consecutive requests
  $r_x$ and $r_{x+1}$ (and thus an \arrow\ edge), such that
  $r_x \in V_{\stt_1(\tau)}$ and $r_{x+1} \in V_{\stt_2(\tau)}$.

  We next show that the \arrow\ edge $(r_x,r_{x+1})$ is the only such
  \arrow\ edge even with respect to the tree $T'$ containing tree
  $\tau$. Specifically, we show that for all
  $r_w \in V_{\stt_1} \cap R(T')$ and all
  $r_z \in V_{\stt_2} \cap R(T')$ we have $w \leq x$ and
  $z \geq x+1$. In other words, $r_x$ is the last request ordered in
  $V_{\stt_1} \cap R(T')$ and $r_{x+1}$ is the first request
  ordered in $V_{\stt_2} \cap R(T')$. For the sake of contradiction,
  let us assume that there is a request
  $r_w \in V_{\stt_1} \cap R(T')$ for which $w > x$ or that there is
  a request $r_z\in V_{\stt_2}\cap R(T')$ for which $z<x+1$. We first
  assume the existence of request $r_w$. Since $(x,x+1)$ is an \arrow
  \ edge, we have $w>x+1$ and using the second part of \Cref{le:timewindow} we
  get
  \[
  t_{x+1}-t_w \leq d(v_w,v_{x+1}) \leq \treelength(\lev').
  \]
  However, we know that $t_q -t_p \leq t_{x+1}-t_w$ and therefore 
  \[
  t_q -t_p \leq \treelength(\lev').
  \]
  This contradicts the fact that
  $t_q-t_p > 3 \cdot \treelength(\lev')$. Consequently, there does not
  exist any requests $r_w \in V_{\stt_1} \cap R(T')$ for which
  $w > x$. Now, let us assume that there is a request
  $r_z \in V_{\stt_2} \cap R(T')$ for which $z < x+1$. Again since
  $(x,x+1)$ is an \arrow \ edge, we have $z < x$ and using the second part of
  \Cref{le:timewindow} we get
  \[
  t_z-t_x \leq d(v_z,v_x) \leq \treelength(\lev').
  \]
  However, we know that $t_q - t_p \leq t_z - t_x$ and therefore 
  \[
  t_q - t_p \leq \treelength(\lev').
  \]
  Again, this is a contradiction to the fact that
  $t_q-t_p > 3 \cdot \treelength(\lev')$. Consequently, there does not
  exist any requests $r_z \in V_{\stt_2} \cap R(T')$ with $z < x+1$.

  Finally we show that for all $r_w \in V_{\stt_1} \cap R(T')$ and
  all $r_z \in V_{\stt_2} \cap R(T')$ the Manhattan cost
  $\mc(r_w,r_y)$ is at most $3 \cdot \treelength(\lev')$.  Using 
  the second part of \Cref{le:timewindow} we have
  \begin{equation}\label{eq:up1}
    t_{x+1}-t_z \leq d(v_z,v_{x+1}) \leq \treelength(\lev').
  \end{equation}
  We can similarly bound $t_w-t_x$. If $w=0$ we have $t_w \leq t_x$
  and otherwise, using the second part of \Cref{le:timewindow} we have
  \begin{equation}\label{eq:up2}
    t_w-t_x \leq d(v_x,v_w) \leq \treelength(\lev').
  \end{equation}
  Using \eqref{eq:up1} and \eqref{eq:up2} we then get
  \begin{equation}\label{eq:up3}
    t_{x+1}-t_x \leq t_z-t_w + 2 \cdot \treelength(\lev').
  \end{equation}
  We know that the Manhattan cost of $(r_x,r_{x+1})$ is at least the
  Manhattan cost of $(r_p,r_q)$ because $t_q - t_p \leq t_{x+1} - t_x$
  and because for all $r_f \in V_{\stt_1(\tau)}$ and
  $r_g \in V_{\stt_2(\tau)}$, we have
  $d(v_f,v_g)=\treelength(\lev)$. That is, we have
  \[
  \mc(r_p,r_q) \leq \mc(r_x,r_{x+1}). 
  \]
  Further, because for all $r_f \in V_{\stt_1} \cap R(T')$ and $r_g
  \in V_{\stt_2} \cap R(T')$, we have $d(v_f,v_g) \geq
  \treelength(\lev)$, by using \eqref{eq:up3}, we obtain
  \begin{equation}\label{eq:up4}
    \mc(r_p,r_q) \leq \mc(r_x,r_{x+1}) \leq \mc(r_w,r_z) + 2 \cdot \treelength(\lev').
  \end{equation}
  Therefore, by using the facts that
  $t_q - t_p > 3 \cdot \treelength(\lev')$ and
  $t_q - t_p \leq t_{x+1}-t_x$, and by using \eqref{eq:up3}, we get
  that $t_z-t_w \geq t_{x+1}-t_x - 2\treelength(\lev') \geq
  \treelength(\lev')$ and we thus have $\mc(r_w,r_z)\geq
  \treelength(\lev')$. By applying \eqref{eq:up4}, we thus get that
  \[
  \mc(r_p,r_q) < 3 \cdot \mc(r_w,r_z).
  \]
  Consequently, also if $e^*$  connects the two components $V_{\stt_1}
  \cap R(T')$ and $V_{\stt_2} \cap R(T')$, its Manhattan cost is
  within a factor $3$ of the Manhattan cost of $e$. Hence, the claim
  of the lemma holds.
\end{proof}
\begin{corollary}\label{co:mstapprox}
  The total Manhattan cost of the spanning tree $\stt$ is at most 4
  times the total Manhattan cost of an MST spanning all the requests.
\end{corollary}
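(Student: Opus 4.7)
The plan is to derive the corollary as a direct consequence of \Cref{le:edgeapprox} together with the general MST approximation result \Cref{thm:MSTapprox} (stated in \Cref{sec:MSTapprox}). First I would recall what \Cref{le:edgeapprox} establishes: for every edge $e$ of the constructed spanning tree $\stt$, if $\stt_1, \stt_2$ are the two components obtained by removing $e$ from $\stt$, then the Manhattan cost of $e$ is within a factor $4$ of the minimum-cost edge $e^*$ across the cut $(V_{\stt_1}, V_{\stt_2})$ in the complete graph on $R$ with Manhattan weights. This is precisely the hypothesis of the general ``cut-optimality'' MST approximation theorem.

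Then I would invoke \Cref{thm:MSTapprox} with the approximation factor $\lambda = 4$. The theorem guarantees that whenever a spanning tree $\stt$ of a weighted metric has the property that every tree edge is within a factor $\lambda$ of the cheapest edge across the cut induced by its removal, the total weight of $\stt$ is at most $\lambda$ times the weight of a minimum spanning tree of the same underlying metric. Applied to our setting with the Manhattan metric $\mtncost_{\mtn}$ on the request set $R$, this immediately yields
\[
\sum_{e \in \stt} \mtncost_{\mtn}(e) \;\leq\; 4 \cdot \sum_{e \in \mathrm{MST}(R)} \mtncost_{\mtn}(e),
\]
which is exactly the statement of the corollary.

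The main (and really only) obstacle is ensuring that the hypothesis of \Cref{thm:MSTapprox} is verified uniformly for every edge of $\stt$, which is exactly what \Cref{le:edgeapprox} supplies. No additional structural property of the bottom-up construction of $\stt$ is needed at this point, since the case analysis distinguishing inside-$R(T')$ and outside-$R(T')$ cuts has already been carried out in the proof of the lemma. The corollary is therefore just a one-line application of the two preceding results, and no further calculation is required.
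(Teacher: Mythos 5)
Your proposal is correct and matches the paper's own proof, which likewise derives the corollary directly from \Cref{le:edgeapprox} (verifying the cut condition with factor $4$) and \Cref{thm:MSTapprox} applied with $\lambda=4$ to the Manhattan metric on $R$. No further comment is needed.
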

\begin{proof}
  Follows directly from \Cref{le:edgeapprox} and \Cref{thm:MSTapprox}.
\end{proof}

\hide{
\begin{lemma}\label{le:mstapprox}
Consider the spanning tree $\stt$ that is resulted by the construction described in \Cref{sec:mstconstruction}.
The claim is that
\[
	\mtntotalcost_{\mtn}(\stt) \leq 4 \cdot \mtntotalcost_{\mtn}(\mst).
\] 
\end{lemma}
\begin{proof}
It is sufficient to define a one-to-one correspondence between the edge set of $\stt$ and the edge set of $\mst$ such that
for any two edges $e \in E_{\stt}$ and $\hat{e} \in E_{\mst}$ that are paired we can have $\length(e) \leq 4 \cdot \length(\hat{e})$.

This assignment between edges is iteratively defined where in each iteration an edge of $\stt$ is replaced with an edge
of $\mst$ until we get the MST $\mst$. Let $\stt(i)$ denote the spanning tree after $i$-th replacement where $\stt(0):=\st$.
In each iteration $i \geq 1$, we consider an arbitrary edge $e_i \in E_{\stt(i-1)} \cap E_{\stt}$. Let $\stt_1(i-1)$
and $\stt_2(i-1)$ denote the spanning trees that are resulted if $e_i$ is removed from $\stt(i-1)$. Consider the edge $\hat{e}_i$ that crosses
the cut $(V_{\stt_1(i-1)},V_{\stt_2(i-1)})$ and has minimum length among all edges in $E_{\mst}$ that cross the cut.
The edge $e_i$ is replaced with the edge $\hat{e}_i$ in iteration $i$. Further, let $e^*_i$ denote an edge that crosses the cut
$(V_{\stt_1(i-1)},V_{\stt_2(i-1)})$ and has minimum length among all edges from the metric $(R,\mtncost_{\mtn})$ that cross the cut.
We need to prove two things: first,
\[
	\length(e_i) \leq 4 \cdot \length(\hat{e}_i)
\]
for all $i\geq1$ and second the edge $\hat{e}_i$ (for all $i\geq 1$) has not been paired with any other edges $e_k$ for all $k \in [i-1]$.
Using the \Cref{le:edgeapprox}, for all $i \geq 1$ we have $\length(e_i) \leq 4 \cdot \length(e^*_i)$ and since $\length(e^*_i)
\leq \length(\hat{e}_i)$ then we have $\length(e_i) \leq 4 \cdot \length(\hat{e}_i)$.
In the first iteration, the edge $e_1 \in E_{\stt}$ is paired with $\hat{e}_1$ and therefore replaced with it. It is obvious that the edge
$\hat{e}_1$ has not been paired with any other edges in $E_{\stt}$ so far.
Now for $k \in [i]$, assume the edges $e_k$ are paired and replaced with edges $\hat{e}_k$ and the spanning tree $\stt(i)$ is resulted.
In iteration $i+1$, the edge $e_{i+1}$ is paired with $\hat{e}_{i+1}$ and therefore replaced with that.
The edge $\hat{e}_{i+1}$ has not been paired with any edges $e_k$ for $k \in [i]$ yet since all edges $\hat{e}_k$ for $k \in [i]$ are either in
$E_{\stt_1(i)}$ or in $E_{\stt_2(i)}$. 
\end{proof}
}

\section{Analysis of the Online Queueing Cost}
\label{sec:arrowanalysis}

In this section, we give a general framework to compare the queueing
cost of an online queueing algorithm on HST $T$ with the bound of the
offline queueing cost as established in \Cref{sec:treeanalysis}. At the end
of the section, we apply the method
to analyze synchronous \arrow\ executions on $T$. As in \Cref{sec:mst},
for convenience, we add one more level to the HST $T$
so that each level $0$ node $v$ has a child node on level $-1$ for
each of the requests issued at node $v$. The new leaf nodes are on
level $-1$ and each leaf node receives exactly one request.

We first state two basic locality properties of \arrow\ and possibly
other online queueing protocols. We will then show that those
properties are sufficient to prove a constant competitive ratio
compared to the optimal offline queueing cost on $T$. We define the
notion of a \emph{distance-respecting queueing order} and the notion
of \emph{distance-respecting latency cost} of a queueing algorithm.

\begin{definition}[\textbf{Distance-Respecting Order}]\label{def:distorder}
  Let $R$ be a set of requests $r_i=(v_i,t_i)$ issued at the nodes of
  a tree $T$ and let $\pi$ be permutation on $[0,|R|-1]$. The ordering
  $r_{\pi(0)},r_{\pi(1)},\dots,r_{\pi(|R|-1)}$ induced by $\pi$ is
  called \emph{distance-respecting} if whenever $\pi(i)<\pi(j)$, we
  have $t_i-t_j\leq d_T(v_i,v_j)$.
\end{definition}

\begin{definition}[\textbf{Distance-Respecting Latency Cost}]\label{def:distlatency}
  An online distributed queueing algorithm \ALG\ is said to have
  \emph{distance-respecting latency cost} if for any request set $R$ and
  any possible queueing order  $\pi_{\ALG}$ of $\ALG$, for all $1\leq
  i <j <|R|$, it holds that \vspace*{-2mm}
  \[t_{\pi_{\ALG}(i)} +
  \latency_{\ALG}(r_{\pi_{\ALG}(i),\pi_{\ALG}(i-1)}) \leq
  t_{\pi_{\ALG}(j)} + d_T(v_{\pi_{\ALG}(j)},v_{\pi_{\ALG}(i-1)}).\]
\end{definition}

\subsection{Constructing a Spanning Tree}
As the first part of the online queueing cost analysis, we construct a
new tree $\starw$ that spans all
requests in $R$. It will be shown that the total \manhattan \ cost of
$\starw$ asymptotically equals the total \manhattan \ cost of the tree
$\stt$ constructed in the previous section.

We construct a new tree $\starw$ on $R$ based on an ordering $\pi$ of
the set of requests. We assume that the ordering of the requests given
by $\pi$ is $r_{\pi(0)},r_{\pi(1)},\dots,r_{\pi(|R|-1)}$. For each
index $i$ with $i\in [0,|R|-2]$, we define the \emph{local
  successor} as
\begin{equation}\label{eq:synchronousnext}
  \nextt(i) := \min\set{j\in [i+1,|R|-1]\,:\, d_T(v_{\pi(i)},v_{\pi(j)}) =
    \min_{k\in [i+1,|R|-1]} d_T(v_{\pi(i)},v_{\pi(k)})}.
\end{equation}
Hence, among the requests ordered after $r_{\pi(i)}$ by order $\pi$,
$\nextt(i)$ is the position of a request in the order $\pi$
with minimum tree distance to $v_{\pi(i)}$ and among those, of the first
one ordered by $\pi$.  Note that this means that for all requests
$r_{\pi(k)}$ for which $i<k<\nextt(i)$, we have
$d_T(v_{\pi(i)},v_{\pi(k)})>d_T(v_{\pi(i)},v_{\pi(\nextt(i))})$ and for all requests $r_{\pi(k)}$ for
which $k\geq \nextt(i)$, we have
$d_T(v_{\pi(i)},v_{\pi(k)})\geq d_T(v_{\pi(i)},v_{\pi(\nextt(i))})$.

The spanning tree $\starw$ is constructed as follows. For every
request $r_{\pi(i)}$ for all $i\in[0,|R|-2]$, we add the edge
$\set{r_{\pi(i)},r_{\pi(\nextt(i))}}$ to the tree $\starw$. Note that $\starw$ is
indeed a spanning tree: If directing each edge from $r_{\pi(i)}$ to
$r_{\pi(\nextt(i))}$, each node has out-degree $1$ and we cannot have cycles
because $\nextt(i)>i$. The following observation shows that in
addition, $\starw$ has the same useful hierarchical structure as the
tree $\stt$ constructed in \Cref{sec:mst}.

\begin{observation}\label{ob:arrowspanningtree}
  As the tree $\stt$, also the tree $\starw$ has the property that for
  any subtree $T'$ of $T$, the subgraph of $\starw$ induced by only
  the requests at nodes in $T'$ is a connected subtree of $\starw$.
  This follows directly from the definition of the local successor
  $r_{\pi(\nextt(i))}$. Except for the last ordered request inside $T'$, the
  local successor of any other request of $T'$ is inside $T'$ (because
  the local successor is a request with minimum tree distance).\qed
\end{observation}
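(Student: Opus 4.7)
The plan is to fix an arbitrary subtree $T'$ of $T$ rooted at some level $\ell$, and show directly that the edges of $\starw$ with both endpoints in $R(T')$ already form a spanning tree of $R(T')$.

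The main structural observation I would invoke is the HST separation property from \Cref{lemma:blockproperties}: for any two leaves inside $T'$ the tree distance is at most $\treelength(\ell)$, whereas for any leaf inside $T'$ and any leaf outside $T'$ the distance is at least $\treelength(\ell+1)$, which is strictly larger (for $\alpha=2$, $\treelength(\ell+1) = 2\treelength(\ell)+2$). So within the metric induced by $T$, a request inside $T'$ is strictly closer to every other request in $R(T')$ than to any request outside $R(T')$.

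Next I would argue the key claim: for every request $r_{\pi(i)} \in R(T')$ that is not the last one from $R(T')$ in the ordering $\pi$, its local successor $r_{\pi(\nextt(i))}$ lies in $R(T')$ as well. Indeed, since there is at least one later request in $R(T')$, the minimum in \eqref{eq:synchronousnext} is achieved by some request at distance at most $\treelength(\ell)$ from $v_{\pi(i)}$, and by the separation property every candidate outside $T'$ is strictly farther away; hence the argmin cannot lie outside $T'$. This produces exactly $|R(T')|-1$ edges of $\starw$ whose endpoints are both in $R(T')$ (one per non-last request of $R(T')$).

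Finally, to conclude connectivity of the induced subgraph, I would reuse the acyclicity argument from the construction of $\starw$: direct each edge from $r_{\pi(i)}$ to $r_{\pi(\nextt(i))}$; this orientation is strictly increasing in $\pi$-index, so no cycles exist on any subset of requests. Thus the induced subgraph on $R(T')$ is acyclic and has at least $|R(T')|-1$ edges, so it is a spanning tree of $R(T')$ and in particular connected. I do not expect a genuine obstacle here; the only care needed is to handle the boundary case of the last request of $R(T')$ in $\pi$, whose local successor may exit $T'$ but whose missing outgoing edge is precisely what brings the edge count to $|R(T')|-1$ rather than $|R(T')|$.
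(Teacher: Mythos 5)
Your proof is correct and follows essentially the same route as the paper's own (very brief) justification: the key point in both is that the local successor of any non-last request of $R(T')$ must lie in $R(T')$ because requests inside $T'$ are strictly closer (distance at most $\treelength(\ell)$) than any request outside (distance at least $\treelength(\ell+1)$). You merely make explicit the separation bound and the edge-count/acyclicity step that the paper leaves implicit, so no further changes are needed.
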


In light of Observation \ref{ob:arrowspanningtree}, for any subtree
$T'$ of $T$, we use $\starw(T')$ to denote the subtree of $\starw$
induced by the requests issued at nodes in $T'$.

\subsection{Bounding the Manhattan Cost of the Spanning Tree}
\label{sec:generalframeworkMTNcost}
The following lemma shows that if the spanning tree $\starw$ is
constructed by using a distance-respecting ordering $\pi$, the total
Manhattan cost of the spanning tree $\starw$ is asymptotically equal
the total Manhattan cost of $\stt$.

\begin{lemma}\label{le:arrowmstapprox}
  Let $\mtntotalcost_{\mtn}(\starw)$ and
  $\mtntotalcost_{\mtn}(\stt)$ be the total Manhattan costs of
  $\starw$ and of $\stt$. If the tree $\starw$ is constructed using a
  distance-respecting ordering $\pi$, we have
  $\mtntotalcost_{\mtn}(\starw) \leq 3 \cdot \mtntotalcost_{\mtn}(\stt)$. 
\end{lemma}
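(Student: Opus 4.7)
The plan is to apply the MST-approximation framework used to establish \Cref{co:mstapprox}, but now to $\starw$ directly: if every edge $e$ of $\starw$ satisfies $\mc(e) \le 3\mc(e^*)$, where $e^*$ is a minimum-Manhattan-cost edge across the cut obtained by removing $e$ from $\starw$, then the MST-approximation theorem from Appendix~A.1 yields $\mc(\starw) \le 3\cdot\mc(\mathit{MST})$; since $\stt$ is itself a spanning tree of $R$ we have $\mc(\mathit{MST}) \le \mc(\stt)$, giving $\mc(\starw) \le 3\mc(\stt)$. The fact that $\starw$ inherits the same hierarchical property as $\stt$ (\Cref{ob:arrowspanningtree}) is what makes this route viable: removing any edge of $\starw$ cuts the requests along the HST hierarchy in a controlled way.

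Fix $e = \{r_{\pi(i)}, r_{\pi(\nextt(i))}\}$, let $d := d_T(v_{\pi(i)}, v_{\pi(\nextt(i))}) = \delta(\ell)$, and let $T'$ be the level-$\ell$ subtree of $T$ containing both endpoints. The definition of $\nextt(i)$ has two consequences: (a)~every request at $\pi$-position strictly between $i$ and $\nextt(i)$ lies outside $T'$ (otherwise its tree distance to $v_{\pi(i)}$ would be smaller than $\delta(\ell)$), and (b)~$r_{\pi(i)}$ is the $\pi$-last request in its level-$(\ell-1)$ child of $T'$. Combined with the distance-respecting property, which gives $t_{\pi(i)} - t_{\pi(\nextt(i))} \le \delta(\ell)$, these observations yield the upper bound $\mc(e) \le \delta(\ell) + \max\{\delta(\ell),\, t_{\pi(\nextt(i))} - t_{\pi(i)}\}$.

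The next step is a case analysis on where the endpoints $r_x, r_y$ of $e^*$ lie relative to $T'$. If at least one of them is outside $T'$, the hierarchical structure of $\starw$ together with the membership of the two components of $\starw\setminus e$ forces $d_T(v_x, v_y) \ge \delta(\ell+1) > 2\delta(\ell)$, which already dominates the $\delta(\ell)$-component of $\mc(e)$; a matching time bound follows by applying distance-respecting to the appropriate pair among $r_x, r_y, r_{\pi(i)}, r_{\pi(\nextt(i))}$. If both endpoints of $e^*$ sit inside $T'$, they occupy different level-$(\ell-1)$ children, so $d_T(v_x, v_y) = \delta(\ell)$, and one must upper-bound $t_{\pi(\nextt(i))} - t_{\pi(i)}$ by a constant multiple of $|t_x - t_y|$. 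This inside-$T'$ sub-case is the main obstacle, because the $\starw$ time gap can a priori be much larger than the minimum cross-child time gap in $T'$; the planned resolution exploits~(b) together with distance-respecting along a $\pi$-monotone chain linking $(r_x, r_y)$ to $(r_{\pi(i)}, r_{\pi(\nextt(i))})$ to transport the time cost of $e$ onto $\mc(e^*)$ and yield the constant-$3$ bound.
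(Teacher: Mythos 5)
Your route is genuinely different from the paper's, and it has a gap that cannot be repaired. You propose to apply \Cref{thm:MSTapprox} \emph{globally}, with $\tau=\starw$, so you must show $\mc(e)\leq 3\cdot\mc(e^*)$ where $e^*$ is the lightest edge crossing the cut obtained by deleting $e$ from the \emph{entire} tree $\starw$. This per-edge claim is false, and the failure is exactly in the case you defer to a ``matching time bound'': the distance-respecting property is one-sided (it forbids an earlier-ordered request from being issued much \emph{later}, but it gives no upper bound on $t_{\pi(\nextt(i))}-t_{\pi(i)}$), so it cannot relate a large time gap across $e$ to the times of requests outside $T'$ that happen to land on opposite sides of the global cut. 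Concretely: take an HST of depth $2$ whose root has children $\tau$ (with leaves $a,b$) and $\tau'$ (with leaf $c$), so $d_T(a,b)=\treelength(1)=2$ and $d_T(a,c)=d_T(b,c)=\treelength(2)=6$, and requests $r_{\pi(0)}=(a,0)$ (dummy), $r_{\pi(1)}=(c,0)$, $r_{\pi(2)}=(b,200)$. This ordering is distance-respecting (it is even the synchronous \arrow\ order). Then $\nextt(0)=2$ and $\nextt(1)=2$, so $\starw$ consists of the edges $\set{(a,0),(b,200)}$ of Manhattan cost $202$ and $\set{(c,0),(b,200)}$ of cost $206$. Deleting $e=\set{(a,0),(b,200)}$ from $\starw$ puts $(c,0)$ on the same side as $(b,200)$, and the lightest crossing edge (in the metric, and also among the edges of $\stt$) is $\set{(a,0),(c,0)}$ of cost $6$; hence $\mc(e)=202>18=3\cdot\mc(e^*)$, so the hypothesis of \Cref{thm:MSTapprox} with $\lambda=3$ is violated, even though the lemma's conclusion still holds here ($\mc(\starw)=408\leq 3\cdot 208=3\cdot\mc(\stt)$).

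The paper sidesteps this by applying \Cref{thm:MSTapprox} \emph{level by level} rather than globally: for each subtree $\tau$ of $T$ rooted at level $\lev$, it compares only the $m-1$ edges $\Iarw(\tau)$ of $\starw$ joining the children subtrees $\starw(\tau_1),\dots,\starw(\tau_m)$ with the $m-1$ edges $\I(\tau)$ of $\stt$, in the graph obtained by contracting each $\starw(\tau_i)=\stt(\tau_i)$ (same vertex sets, by \Cref{ob:arrowspanningtree}). In that contracted graph every competing edge $e^*$ has both endpoints inside $\tau$, hence tree distance exactly $\treelength(\lev)$, the same as $e$; moreover the two components of $\starw(\tau)\setminus\set{e}$ are index-monotone within $\tau$ (everything in one is $\pi$-ordered before everything in the other), so two applications of the distance-respecting inequality transfer the time gap of $e$ to that of $e^*$ up to an additive $2\treelength(\lev)$, which is absorbed because $\mc(e^*)\geq\treelength(\lev)$. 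Your ``inside-$T'$'' sub-case is essentially this argument, but restricting the competitors to the same level-$\lev$ subtree (via the level-wise contraction) is the missing idea; the ``outside-$T'$'' cases of your plan are where the approach breaks.
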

\begin{proof}
  Consider some subtree $\tau$ of $T$ that is rooted at a node on level
  $\lev \in [0,h]$. Assume that $v$ has $m$ children an that the
  subtrees of $T$ rooted at the $m$ children are
  $\tau_1,\tau_2, \ldots, \tau_m$. Using Observation \ref{ob:arrowspanningtree}, we know that
  $\starw(\tau_1),\starw(\tau_2), \ldots, \starw(\tau_m)$ are subtrees
  of $\starw(\tau)$ trees that are connected to each other with $m-1$
  edges to form the spanning tree $\starw(\tau)$. Let us call this set
  of edges $\Iarw(\tau)$.  Note that for $\lev=0$ the subtrees of
  $\tau$ are single requests at level $-1$.  Similarly, the
  construction of $\stt$ implies that the spanning tree $\stt(\tau)$
  results from connecting the spanning trees
  $\stt(\tau_1),\stt(\tau_2), \ldots, \stt(\tau_m)$ with $m-1$
  edges. Let $\I(\tau)$ denote this set of these $m-1$ edges. Recall
  that the edges in $\I(\tau)$ are chosen such that they have minimum
  total Manhattan cost among all sets of $m$ edges connecting the
  trees $\stt(\tau_1),\stt(\tau_2), \ldots, \stt(\tau_m)$. We also
  emphasize that for all $i\in[1,m]$, the trees $\starw(\tau_i)$ and
  $\stt(\tau_i)$ consist of the same set of nodes (the requests inside
  tree $\tau_i$). Let $\mtntotalcost_{\mtn}(\Iarw(\tau))$ and
  $\mtntotalcost_{\mtn}(\I(\tau))$ be the total Manhattan costs of the
  edges in $\Iarw(\tau)$ and $\I(\tau)$, respectively. To prove the
  lemma, it suffices to show that
  \begin{equation}\label{eq:arrowMC1}
    \text{$\forall$ subtree $\tau$ of $T$:}\
    \mtntotalcost_{\mtn}(\Iarw(\tau))\leq
    3\cdot\mtntotalcost_{\mtn}(\I(\tau)).
  \end{equation}
  Let $e=(r_{\pi(w)},r_{\pi(z)}) \in \Iarw(\tau)$ be an arbitrary edge of
  $\Iarw(\tau)$ and let $\starw_1(\tau)$ and $\starw_2(\tau)$ be the
  two subtrees of $\starw(\tau)$ resulting from removing $e$ from
  $\Iarw(\tau)$. Let $V_{\starw_1(\tau)}$ and $V_{\starw_2(\tau)}$ be
  the set of nodes (requests) of the trees $\starw_1(\tau)$ and
  $\starw_2(\tau)$ and assume, w.l.o.g., that $w<z$ and that
  $r_{\pi(w)} \in V_{\starw_1(\tau)}$ and $r_{\pi(z)} \in V_{\starw_2(\tau)}$. Also,
  consider an edge $e^*$ that crosses the cut
  $(V_{\starw_1(\tau)},V_{\starw_2(\tau)})$ and has minimum Manhattan
  cost among all edges in $\stt(\tau)$ that cross this cut. Note that
  because for all $i$ the trees $\starw(\tau_i)$ and $\stt(\tau_i)$
  consist of the same set of node, node $e^*$ must be from the set
  $\I(\tau)$. In order to prove \eqref{eq:arrowMC1}, it suffices to
  show that
  \begin{equation}\label{eq:arrowMC2}
    \mc(e) \leq 3 \cdot \mc(e^*).
  \end{equation}
  Inequality \eqref{eq:arrowMC1} then directly follows from
  \Cref{thm:MSTapprox}.

  From the definition of local successor, we know that
  $z=\nextt(w)$. This implies that for all requests $r_{\pi(x)}$ where
  $w < x < z$, we have $d_{T}(v_{\pi(w)},v_{\pi(x)}) > \treelength(\lev)$ since
  $d_{T}(v_{\pi(w)},v_{\pi(z)}) = \treelength(\lev)$. Therefore, all requests
  that are ordered between $r_{\pi(w)}$ and $r_{\pi(z)}$ by \arrow \ are not in
  $R(\tau)$ (i.e., in the set of requests of tree $\tau$). This means
  that all requests in $R(\tau)$ are ordered either before $r_{\pi(w)}$ or
  after $r_{\pi(z)}$ by \arrow. More precisely, the claim is that for all
  requests $r_{\pi(x)} \in V_{\starw_1(\tau)}$ we have $x \leq w$ and for all
  requests $r_{\pi(x)} \in V_{\starw_2(\tau)}$ we have $x \geq z$. To show
  this, we first observe that by the definition of $e$,
  $\starw_1(\tau)$ and $\starw_2(\tau)$, among all edges of
  $\starw(\tau)$, the edge $e=\set{r_{\pi(w)},r_{\pi(z)}}$ is the only edge that
  crosses the cut $(V_{\starw_1(\tau)},V_{\starw_2(\tau)})$. 

  We now first show that for all requests $r_{\pi(x)} \in V_{\starw_2(\tau)}$
  we have $x \geq z$. For contradiction, let us assume that there is a
  request $r_{\pi(x)} \in V_{\starw_2(\tau)}$ for which $x < z$ and therefore
  $x < w$. This implies that there must be a largest $y<w$ such that
  $r_{\pi(y)} \in V_{\starw_2(\tau)}$. Note that because $r_{\pi(y)}$ is not the
  last request ordered in $\tau$, $r_{\pi(\nextt(y))}$ must be in $\tau$
  and it therefore must be in $V_{\starw_1(\tau)}$. This implies that
  the edge $\set{r_{\pi(y)},r_{\pi(\nextt(y))}}$ of $\starw(\tau)$ crosses the cut
  $(V_{\starw_1(\tau},V_{\starw_2(\tau)})$, which is not possible
  because the edge $\set{r_{\pi(w)},r_{\pi(z)}}$ is the only edge of $\starw(\tau)$
  crossing this cut.

  We next show that for all requests $r_{\pi(x)} \in V_{\starw_1(\tau)}$, we
  have $x \leq w$. Again assume that there is a request
  $r_{\pi(x)} \in V_{\starw_1(\tau)}$ such that $x > w$ and thus $x > z$.
  Therefore, there must be smallest $y>w$ for which
  $r_{\pi(y)} \in V_{\starw_1(\tau)}$. This implies that $r_{\pi(y)}$ is the local
  successor of some request in $V_{\starw_2(\tau)}$.  This again
  contradicts the fact that the edge $e=\set{r_{\pi(w)},r_{\pi(z)}}$ is the only
  edge of $\starw(\tau)$ crossing the cut
  $(V_{\starw_1(\tau)},V_{\starw_2(\tau)})$.

  Finally we show that for all $r_{\pi(p)} \in V_{\starw_1(\tau)}$ and
  $r_{\pi(q)} \in V_{\starw_2(\tau)}$ the Manhattan cost of $e$ is at most
  $3 \cdot \mc(r_{\pi(p)},r_{\pi(q)})$.  Because $\pi$ is
  distance-respecting, we have
  \begin{equation}\label{eq:uparw1}
    t_{\pi(z)}-t_{\pi(q)} \leq d_{T}(v_{\pi(q)},v_{\pi(z)}) \leq \treelength(\lev).
  \end{equation}
  Further, if $p=0$, we have $t_{\pi(p)}=0$ and thus $t_{\pi(p)} \leq
  t_{\pi(w)}$. Otherwise, because $\pi$ is distance-respecting, we get
  \begin{equation}\label{eq:uparw2}
    t_{\pi(p)}-t_{\pi(w)} \leq d_{T}(v_{\pi(p)},v_{\pi(w)}) \leq \treelength(\lev).
  \end{equation}
  Using \eqref{eq:uparw1} and \eqref{eq:uparw2} we have
  \begin{equation}\label{eq:uparw3}
    t_{\pi(z)}-t_{\pi(w)} \leq t_{\pi(q)}-t_{\pi(p)} + 2 \cdot \treelength(\lev).
  \end{equation}
  
  We continue by distinguishing the two cases $t_{\pi(z)}\geq t_{\pi(w)}$ and
  $t_{\pi(w)}>t_{\pi(z)}$. First assume that $t_{\pi(z)} \geq t_{\pi(w)}$. Then, using
  $d_{T}(v_{\pi(z)},v_{\pi(w)})=d_{T}(v_{\pi(p)},v_{\pi(q)})=\treelength(\lev)$ and
  \eqref{eq:uparw3} we obtain
  \[
  \mc(r_{\pi(z)},r_{\pi(w)}) \leq \mc(r_{\pi(p)},r_{\pi(q)}) + 2 \cdot \treelength(\lev).	
  \]
  Moreover, because $d_{T}(v_{\pi(p)},v_{\pi(q)})=\treelength(\lev)$, we know
  that $\treelength(\lev) \leq \mc(r_{\pi(p)},r_{\pi(q)})$. Thus,
  \[
  \mc(e) \leq 3 \cdot \mc(r_{\pi(p)},r_{\pi(q)}).	
  \]
  Let us therefore consider the second case where $t_{\pi(w)} > t_{\pi(z)}$. It is
  clear that $w \neq 0$ as otherwise $t_{\pi(w)}=0$ and thus $t_{\pi(z)}\geq t_{\pi(w)}$.
  Because $\pi$ is distance-respecting, we have
  \[
  t_{\pi(w)}-t_{\pi(z)} \leq d_T(v_{\pi(w)},v_{\pi(z)}) = \treelength(\lev).
  \]
  Using the assumption that $t_{\pi(w)}>t_{\pi(z)}$, we then have
  \[
  \mc(r_{\pi(z)},r_{\pi(w)}) =|t_{\pi(w)}-t_{\pi(z)}|+d_{T}(v_{\pi(w)},v_{\pi(z)})  = t_{\pi(w)}-t_{\pi(z)} +
  \treelength(\lev) \leq 2 \cdot \treelength(\lev).	
  \]
  Finally, we can again use that $\mc(r_{\pi(p)},r_{\pi(q)}) \geq d_T(v_{\pi(p)},v_{\pi(q)}) =
  \treelength(\lev)$ and thus get that 
  \[
  \mc(e) \leq 2 \cdot \mc(r_{\pi(p)},r_{\pi(q)}).
  \]
  This concludes the proof of the lemma.
\end{proof}
\subsection{Bounding the Total Latency Cost}
\label{sec:generalframeworktotallatency}

It remains to prove the main claim and show that the total online
queueing cost on the HST $T$ is within a constant factor of the
optimal offline cost on $T$. The following theorem states that this is
generally true for algorithms with distance-respecting latency cost
(\Cref{def:distlatency}) and which produce distance-respecting
queueing orders (\Cref{def:distorder}), as long as the request set
$R$ is condensed (\Cref{def:condensed}).

\begin{theorem}\label{thm:totalonlineT}
  Assume that we are given an HST $T$ and a condensed set of requests issued at
  the leaves of $R$. Further, assume that we are given a distributed
  queueing algorithm $\ALG$ that has distance-respecting latency cost
  and that always produces a distance-respecting queueing order $\pi$. Then,
  the total latency cost of $\ALG$ is within a constant factor of the
  optimal offline cost on $T$.
\end{theorem}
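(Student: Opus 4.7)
The plan is to establish the chain
\begin{equation*}
\cost_{\ALG}\ \leq\ O(1)\cdot\mtntotalcost_{\mtn}(\starw)\ \leq\ O(1)\cdot\mtntotalcost_{\mtn}(\stt)\ \leq\ O(1)\cdot\mtntotalcost_{\mtn}(\mathrm{MST})\ \leq\ O(1)\cdot\cost_{\OPT}^T,
\end{equation*}
where $\mathrm{MST}$ denotes a minimum spanning tree of the request set under the Manhattan cost $\mc$. Three of these inequalities are essentially in hand: \Cref{le:arrowmstapprox} (applicable because $\pi$ is distance-respecting) gives the $\starw$-to-$\stt$ step with constant $3$; \Cref{co:mstapprox} gives the $\stt$-to-$\mathrm{MST}$ step with constant $4$; any Hamiltonian path -- in particular the one induced by $\pi_{\OPT}$ -- has total Manhattan weight at least that of the MST; and condensedness of $R$ together with \Cref{lemma:manhattanopt} bounds the Manhattan cost of that path by $12\cdot\cost_{\OPT}^T$. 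Composing these gives $\mtntotalcost_{\mtn}(\starw)\leq O(1)\cdot\cost_{\OPT}^T$, reducing the theorem to the first inequality.

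To bound $\cost_{\ALG}$, I would control each latency $h_i:=\latency_{\ALG}(r_{\pi(i-1)},r_{\pi(i)})$ individually using \Cref{def:distlatency} instantiated with $j=\nextt(i)$, which gives
\[
h_i\ \leq\ \bigl(t_{\pi(\nextt(i))}-t_{\pi(i)}\bigr)+d_T\bigl(v_{\pi(\nextt(i))},v_{\pi(i-1)}\bigr).
\]
Splitting the tree distance via the triangle inequality and observing that $(t_{\pi(\nextt(i))}-t_{\pi(i)})+d_T(v_{\pi(\nextt(i))},v_{\pi(i)})\leq \mc(e_i)$ for the $i$-th edge $e_i=\{r_{\pi(i)},r_{\pi(\nextt(i))}\}$ of $\starw$, one obtains
\[
h_i\ \leq\ \mc(e_i)+d_T\bigl(v_{\pi(i-1)},v_{\pi(i)}\bigr).
\]
Summing yields $\cost_{\ALG}\leq \mtntotalcost_{\mtn}(\starw)+\sum_i d_T(v_{\pi(i-1)},v_{\pi(i)})$, so it suffices to bound this residual tree-distance sum by $O(\mtntotalcost_{\mtn}(\starw))$.

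For the residual, I would mirror the level-by-level accounting of \Cref{lemma:blockarrowcost}: the analogous block decomposition defined from $\pi$ gives $\sum_i d_T(v_{\pi(i-1)},v_{\pi(i)})=\sum_{\ell}(n(\ell)-n(\ell+1))\cdot\treelength(\ell+1)$. Meanwhile, by \Cref{ob:arrowspanningtree} the number of $\starw$-edges crossing a level-$(\ell+1)$ boundary equals $k(\ell)-k(\ell+1)$, where $k(\ell)$ counts level-$\ell$ subtrees containing at least one request; each such edge contributes at least $\treelength(\ell+1)$ to the distance part of $\mtntotalcost_{\mtn}(\starw)$, accounting for the ``first-entry'' term $\sum_{\ell}(k(\ell)-k(\ell+1))\treelength(\ell+1)$. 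The remaining ``re-entry'' contribution $\sum_{\ell}(n(\ell)-k(\ell))\treelength(\ell+1)$ would be charged to the time components of $\starw$-edges internal to the re-entered subtrees; the HST splitting of \Cref{sec:split} forbids neighbor blocks with time gap $\geq\treelength(\ell+1)$ inside a surviving level-$\ell$ subtree, so every \arrow-re-entry into such a subtree can be matched with a $\starw$-edge inside the subtree whose Manhattan weight absorbs the $\Theta(\treelength(\ell+1))$ charge.

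The hardest part of the proof will be exactly this final charging: constructing an injective (up to $O(1)$) assignment from \arrow-re-entries into a level-$\ell$ subtree to internal $\starw$-edges in that subtree whose Manhattan weight is $\Omega(\treelength(\ell+1))$, so that no $\starw$-edge is charged more than a constant number of times overall. Once this is done, the inequality $\cost_{\ALG}\leq O(1)\cdot\mtntotalcost_{\mtn}(\starw)$ follows, and composition with the three inequalities listed above completes the proof. For the concluding application to synchronous \arrow, note that \Cref{le:timewindow} immediately implies both \Cref{def:distorder} and \Cref{def:distlatency}, so the framework applies directly.
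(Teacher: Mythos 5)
Your reduction of the theorem to the single inequality $\cost_{\ALG}\leq O(1)\cdot\mtntotalcost_{\mtn}(\starw)$ is exactly right, and the three links you cite (\Cref{le:arrowmstapprox}, \Cref{co:mstapprox}, the MST-versus-TSP-path comparison, and \Cref{lemma:manhattanopt} via condensedness) are precisely the ones the paper composes. The gap is in that remaining first inequality. You instantiate \Cref{def:distlatency} with $j=\nextt(i)$ where $i$ is the index of the \emph{enqueued} request, which produces the distance term $d_T(v_{\pi(\nextt(i))},v_{\pi(i-1)})$; this is not an edge of $\starw$, so you are forced into a triangle-inequality split that leaves the residual $\sum_i d_T(v_{\pi(i-1)},v_{\pi(i)})$. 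But for synchronous \arrow\ that residual \emph{is} $\cost_{\ALG}$ itself (each latency equals the tree distance between consecutive requests), so your derived inequality reads $\cost_{\ALG}\leq \mtntotalcost_{\mtn}(\starw)+\cost_{\ALG}$ and carries no information: the whole theorem has been pushed into the unproved charging argument. And that charging is not a routine finish. At a fixed level $\ell$, the number of \arrow\ re-entries into the level-$(\ell-1)$ subtrees of a level-$\ell$ tree $\tau$ can vastly exceed the number of $\starw$-edges of $\starw(\tau)$ joining those subtrees (by \Cref{ob:arrowspanningtree} the latter is the number of nonempty level-$(\ell-1)$ subtrees minus one, while the order may bounce between two subtrees arbitrarily often), so the ``injective up to $O(1)$'' assignment you postulate cannot work level-by-level and would have to reach into lower-level edges and their time components; that is essentially the entire difficulty of the theorem, and it is not constructed.

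The fix is to choose $j$ relative to the \emph{predecessor}: apply \Cref{def:distlatency} to the pair $(r_{\pi(i)},r_{\pi(i+1)})$ with $j=\nextt(i)$, which gives
\[
t_{\pi(i+1)}+\latency_{\ALG}(r_{\pi(i)},r_{\pi(i+1)})\ \leq\ t_{\pi(\nextt(i))}+d_T\bigl(v_{\pi(i)},v_{\pi(\nextt(i))}\bigr),
\]
so the distance term is exactly the length of the $\starw$-edge $\set{r_{\pi(i)},r_{\pi(\nextt(i))}}$ and no triangle inequality is needed. Subtracting $t_{\pi(i)}$ from both sides makes the right-hand side at most $\mc(r_{\pi(i)},r_{\pi(\nextt(i))})$, and summing over $i\in[0,|R|-2]$ the extra terms $t_{\pi(i+1)}-t_{\pi(i)}$ on the left telescope to $t_{\pi(|R|-1)}-t_{\pi(0)}\geq 0$ and can be dropped. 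This yields $\cost_{\ALG}\leq \mtntotalcost_{\mtn}(\starw)$ outright, with no residual and no charging argument; the rest of your proof then goes through as written.
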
 
\begin{proof}
  Because the request set $R$ is condensed, \Cref{lemma:manhattanopt} implies that the optimal offline cost is
  within a constant factor of the Manhattan cost of an optimal TSP
  path connecting all the requests. The optimal offline cost therefore
  also is within a constant factor of the total Manhattan cost of an
  MST of the request set. Hence, \Cref{co:mstapprox} implies
  that also the total Manhattan cost of $S^*$ is within a constant
  factor of the cost of an optimal offline solution on $T$. Because
  the ordering $\pi$ generated by $\ALG$ is distance-respecting, by
  \Cref{le:arrowmstapprox}, the same is true for the total
  Manhattan cost $\mtntotalcost_{\mtn}(\starw)$ of the tree
  $\starw$. It therefore remains to show that
  $\cost_{\ALG}^T(\pi) = O(\mtntotalcost_{\mtn}(\starw))$.
  
  Because $\ALG$ has distance-respecting latency cost, for all $i
  \in[0,|R|-2]$, we have
  \[
  t_{\pi(i+1)}+\latency_{\ALG}^T(r_{\pi(i)},r_{\pi(i+1)})  \leq t_{\pi(\nextt(i))}+d_{T}(v_{\pi(i)},v_{\pi(\nextt(i))}).
  \]
  Note that we have $\nextt(i) \geq i+1$. Subtracting $t_{\pi(i)}$ on
  both sides yields
  \[
  t_{\pi(i+1)}-t_{\pi(i)}+ \latency_{\ALG}^T(r_{\pi(i)},r_{\pi(i+1)})
  \leq t_{\pi(\nextt(i))}-t_{\pi(i)}+d_{T}(v_{\pi(i)},v_{\pi(\nextt(i))}).
  \]
  If we sum up the above inequality for all $i\in[0,|R|-2]$, we get
  \[
  \sum_{i=0}^{|R|-2} \big(t_{\pi(i+1)}-t_{\pi(i)}+d_{T}(v_{\pi(i)},v_{\pi(i+1)})\big)
  \leq 
  \sum_{i=0}^{|R|-2} \big(t_{\pi(\nextt(i))}-t_{\pi(i)}+d_{T}(v_{\pi(i)},v_{\pi(\nextt(i))})\big)
  \]
  The sum of the latencies on the left-hand side exactly equals the
  total queueing cost $\cost_{\ALG}^T(\pi)$ of $\ALG$. To bound the
  right-hand side, note that
  $t_{\pi(\nextt(i))}-t_{\pi(i)}+d_{T}(v_{\pi(i)},v_{\pi(\nextt(i))})
  \leq \mc(r_{\pi(i)},r_{\pi(\nextt(i))})$. Together, we get
  \[
  t_{\pi(|R|-1)} - t_{\pi(0)} + \cost_{\ALG}^T(\pi) \leq \mtntotalcost_{\mtn}(\starw).
  \]
  As specified in \Cref{sec:model}, we assume that $t\geq 0$
  for every request $r=(v,t)$ and that every queueing algorithm first
  has to order the dummy request $r_0=(v_0,0)$. We therefore have
  $t_{\pi(|R|-1)}\geq 0$ and $t_{\pi(0)}=t_0=0$, which completes the
  proof of the theorem.
\end{proof}


\begin{corollary}\label{cor:synchArrowHST}
  The total latency cost of a synchronous execution of \arrow\ on an
  HST $T$ is within a constant factor of the optimal offline queueing
  cost on $T$.
\end{corollary}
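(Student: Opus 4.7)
The plan is to reduce the corollary to a direct application of \Cref{thm:totalonlineT} by verifying that synchronous \arrow\ satisfies its three hypotheses: (i) the request set may be taken condensed, (ii) the latency cost is distance-respecting, and (iii) the queueing order is distance-respecting. Each of these will follow almost immediately from lemmas already proven earlier in the paper, so the proof is essentially an invocation of the general framework rather than a new technical argument.

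First, I would handle the condensed-set assumption. By \Cref{le:transformation}, any request set $R$ on the tree $T$ can be transformed into a condensed request set $R'$ without altering the order or cost produced by synchronous \arrow\ and without increasing the optimal offline cost. Hence it suffices to prove the corollary under the assumption that $R$ is condensed, which is exactly the setting in which \Cref{thm:totalonlineT} applies.

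Second, I would verify that synchronous \arrow\ has distance-respecting latency cost in the sense of \Cref{def:distlatency}. As discussed after equation \eqref{eq:totalcost}, in a synchronous \arrow\ execution on $T$ the predecessor is always found along the direct tree path, so
\[
t_{\pi_{\ARW}(i)} + \latency_{\ARW}(r_{\pi_{\ARW}(i-1)}, r_{\pi_{\ARW}(i)}) \;=\; t_{\pi_{\ARW}(i)} + d_T(v_{\pi_{\ARW}(i-1)}, v_{\pi_{\ARW}(i)}).
\]
With the convention of \Cref{sec:treeanalysis} that requests are relabeled according to \arrow's order, the first claim of \Cref{le:timewindow} says exactly that $t_i + d_T(v_{i-1}, v_i) \leq t_j + d_T(v_{i-1}, v_j)$ whenever $i < j$, which is the distance-respecting latency condition. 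Similarly, the second claim of \Cref{le:timewindow} says $t_i \leq t_j + d_T(v_i, v_j)$ whenever $i < j$, which is precisely the distance-respecting order condition of \Cref{def:distorder}.

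With these three ingredients in place, \Cref{thm:totalonlineT} applies and yields $\cost_{\ARW}(\pi_{\ARW}) = O\bigl(\cost_{\OPT}^T(\pi_{\OPT})\bigr)$, which is the statement of the corollary. There is no real obstacle here: all the work has been done in building the framework and proving the MST/Manhattan comparison in \Cref{sec:treeanalysis,sec:arrowanalysis}. The only thing to be slightly careful about is matching the indexing conventions (requests relabeled by \arrow's order in \Cref{sec:treeanalysis} versus the generic permutation $\pi$ used in \Cref{def:distorder} and \Cref{def:distlatency}), but this is purely notational and adds no difficulty.
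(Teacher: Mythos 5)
Your proof is correct and follows exactly the same route as the paper: reduce to a condensed request set via \Cref{le:transformation}, verify the two distance-respecting properties from claims 1 and 2 of \Cref{le:timewindow} (using that synchronous \arrow's latency equals the tree distance), and then invoke \Cref{thm:totalonlineT}. No gaps.
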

\begin{proof}
  First note that by \Cref{le:transformation}, w.l.o.g., for
  synchronous \arrow\ executions, we can assume that the request set
  $R$ is condensed. The corollary therefore follows from \Cref{thm:totalonlineT} if we show that synchronous \arrow's ordering is
  distance-respecting and that synchronous \arrow\ has
  distance-respecting latency cost. The former follows from claim 2 of
  \Cref{le:timewindow}, the latter follows from claim 1 of \Cref{le:timewindow} and the fact that the latency cost of
  synchronous \arrow\ for ordering a request $r_i$ as the predecessor
  of request $r_{i+1}$ is exactly $d_T(v_i,v_{i+1})$.
\end{proof}

\begin{remark}
  The above corollary proves \Cref{thm:HSTmain} (cf.\ \Cref{sec:intro}) for synchronous executions on the HST $T$. The full
  statement of \Cref{thm:HSTmain} for general asynchronous
  executions is proven in \Cref{sec:asynchmodel}. There, it is
  shown that also for asynchronous executions, \arrow\ has
  distance-respecting latency cost and produces distance-respecting
  queueing orders. In addition, we also show that we can still
  restrict attention to condensed request sets. The claim of \Cref{thm:HSTmain} for the asynchronous case then follows from \Cref{thm:totalonlineT} in the same way as in the above corollary.
\end{remark}



\section{Queueing Cost in the Asynchronous Model}
\label{sec:asynchmodel}

In this section, we show that the generic analysis of \Cref{sec:arrowanalysis} also applies to asynchronous executions of the
\arrow\ protocol on $T$. In order to use the framework of \Cref{sec:arrowanalysis} in the asynchronous setting, we mostly
importantly need to show that \arrow\ has distance-respecting latency
cost (\Cref{def:distlatency}) and that it generates
distance-respecting queueing orders (\Cref{def:distorder}) also in
the asynchronous case. To show this, we need asynchronous variants of
the basic \Cref{le:transformation} and \Cref{le:timewindow}. In
addition, we also need to generalize \Cref{le:transformation} to
show that also in the asynchronous setting, w.l.o.g., we can assume
that the given request set $R$ is condensed
(\Cref{def:condensed}).

As in \Cref{sec:treeanalysis}, we relabel the requests for
convenience. Throughout the section, we assume that an asynchronous
execution $\asyorder$ of \arrow\ is given and we label the requests
according the order $\asyorder$. That is, $r_0$ is the dummy request
and for every $i\geq 1$, $r_i$ is the $i^{\mathit{th}}$ non-dummy
request ordered by the asynchronous \arrow\ execution.

\subsection{Basic Properties of Asynchronous Arrow Executions}
\label{sec:asynchbasic}

We have seen that a synchronous \arrow\ execution can be seen as a
greedy queueing order in the following sense. Assume that requests
$r_0,\dots,r_{i-1}$ of the queueing order are known and let $v_{i-1}$
be the node at which request $r_{i-1}$ has been issued. Then, request
$r_i$ the first one among the remaining requests that reaches node
$v_{i-1}$ on a direct path. In the asynchronous setting, an analogous
property is true. However, we need to be a bit more careful and argue
the arrival time of the ``find predecessor'' message on the
whole path from the node of a request to its predecessor.

Let us assume that we are given a tree $T$, a dynamic set of
requests $R$ issued at the nodes of $T$, as well as an asynchronous
execution of \arrow\ that orders the requests
$r_0,r_1,\dots,r_{|R|-1}$ in this order. It has been shown in \cite{demmer1998arrow} that
even in a concurrent asynchronous \arrow\ execution, every request $r_i$
finds the node $v_{i-1}$ of its predecessor $r_{i-1}$ on a direct path. To formally
specify the greedy property of \arrow\ in the asynchronous setting, we
need to study the progress of messages on the whole path from a
request to its predecessor. For any two nodes $u,v$ of $T$, we use
$P_{u,v}$ to denote the direct path from $u$ to $v$ on tree $T$.  The
following \Cref{lemma:asynchgreedy} formally establishes the
greedy behavior of asynchronous \arrow\ executions.

We first introduce some terminology defined in
\cite{herlihy2006dynamic}.  For all $i\in [0,|R|-1]$, we define $F_i$
to be a configuration of the tree network, where all arrows are
pointing towards the node $v_i$ of request
$r_i$. Further, let $R_i$ be the set of requests
$[r_{i+1},|R|-1]$ that are ordered after request
$r_i$. Finally, let $E_i$ be an execution of the \arrow\
protocol starting from configuration $F_i$ and in which only the
requests in $R_i$ are issued. It is shown in Lemma 3.7 in
\cite{herlihy2006dynamic} that for all $i$, except for request
$r_i$ no request in $R_{i-1}$ can distinguish locally
between executions $E_{i-1}$ and $E_i$. More specifically, all these
requests see exactly the same arrows in both executions. This implies
that the ``find predecessor'' message of every request
$r_i$ sees exactly the same arrows as if the network
started in configuratoin $F_{i-1}$ and only request $r_i$
was issued. To study the behavior of the requests in $R_{i-1}$, it
therefore suffices to study an execution that starts in configuration
$F_{i-1}$ and where only the requests in $R_{i-1}$ are issued.

\begin{lemma}\label{lemma:asynchgreedy}
  Consider an asynchronous \arrow\ execution for a request set $R$ on
  a tree $T$. Let $i\in [1,|R|-1]$ and consider the path
  $P_{v_i,v_{i-1}}=(u_0,u_1,\dots,u_s)$ from node $u_0=v_{i}$ of
  request $r_i$ to the node $u_s=v_{i-1}$ of the predecessor
  $r_{i-1}$. For every node $u_k$ on the path, the ``find
  predecessor'' message of request $r_{i}$ is the first ``find
  predecessor'' message that reaches node $u_k$ (or is generated at
  node $u_k$) among all the ``find predecessor'' of requests $r_j$ for
  $j\in[i,|R|-1]$.
\end{lemma}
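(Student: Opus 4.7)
The plan is to invoke the reduction of \cite[Lemma~3.7]{herlihy2006dynamic} so as to restrict attention to the execution $E_{i-1}$: starting from configuration $F_{i-1}$ in which every arrow of $T$ points toward $v_{i-1}$, and with only the requests of $R_{i-1}=\{r_i,\dots,r_{|R|-1}\}$ issued. By the correctness result of \cite{demmer1998arrow}, in $E_{i-1}$ the find-predecessor message of $r_i$ traverses $P_{v_i,v_{i-1}}=(u_0,\dots,u_s)$ in order; consequently, when $r_i$'s message arrives at $u_k$ for $k<s$ the arrow of $u_k$ still points to $u_{k+1}$, and when it arrives at $u_s$ the arrow of $u_s$ still points to $u_s$ itself.

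I would prove the claim by induction on $k=0,1,\dots,s$. For the base case $k=0$, a find-predecessor message of some $r_j$ with $j>i$ can only be generated at $u_0=v_i$ if $v_j=v_i$, at time $t_j$; by the FIFO behavior of Arrow at a single leaf, $t_j<t_i$ would imply that $r_j$ is linked as a (not necessarily immediate) predecessor of $r_i$ in the global queue and therefore $j<i$ in $\asyorder$, a contradiction. For the inductive step, assume the claim at $u_0,\dots,u_{k-1}$ and suppose, for contradiction, that some request $r_j$ with $j>i$ has its find-predecessor message arriving at $u_k$ from a neighbor $w$ strictly before $r_i$'s message. The inductive hypothesis at $u_{k-1}$, combined with the forwarding rule of Arrow, rules out that $r_j$'s message could have arrived at $u_k$ via the edge $(u_{k-1},u_k)$ before $r_i$'s message does; hence $w$ lies off the path $P_{v_i,v_{i-1}}$, and the atomic step at $u_k$ redirects its arrow to $w\neq u_{k+1}$.

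To complete the contradiction one shows that the arrow at $u_k$ cannot be restored to point to $u_{k+1}$ before $r_i$'s message arrives: any such restoration would require a find-predecessor message to arrive at $u_k$ from neighbor $u_{k+1}$, which in turn requires an earlier flip of the arrow at $u_{k+1}$ away from its initial direction toward $u_{k+2}$, and iterating this reasoning along the tail $(u_{k+1},u_{k+2},\dots,u_s)$ together with the direct-path property of $r_i$ yields a contradiction. Therefore, when $r_i$'s message eventually reaches $u_k$ the arrow there does not point to $u_{k+1}$, so the Arrow protocol forwards $r_i$'s message off the direct path, which is impossible. The main obstacle is precisely this asynchronous bookkeeping in the inductive step: edge delays are only upper-bounded and messages can arrive in many orders, so it takes a careful tracking of arrow flips along the tail $(u_k,u_{k+1},\dots,u_s)$ to rule out every sequence of events that could restore the arrow at $u_k$ to $u_{k+1}$ in time.
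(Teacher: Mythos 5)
Your setup---reducing to configuration $F_{i-1}$ via Lemma~3.7 of \cite{herlihy2006dynamic} and arguing node by node along $P_{v_i,v_{i-1}}$---matches the paper's, but the way you close the inductive step does not work. You want to conclude that the arrow at $u_k$ ``cannot be restored to point to $u_{k+1}$'' before $r_i$'s message arrives, so that $r_i$'s message would be forwarded off the direct path. That claim is false in general: once the interloping message of $r_j$ is atomically forwarded from $u_k$ to $u_{k+1}$, its arrival at $u_{k+1}$ redirects the arrow of $u_{k+1}$ to $u_k$, so any subsequent message reaching $u_{k+1}$ is forwarded back to $u_k$ and thereby restores the arrow of $u_k$ to $u_{k+1}$ --- entirely consistently with $r_i$'s message later following the direct path. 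So the ``iteration along the tail'' you defer is not mere bookkeeping; the local contradiction you aim for at $u_k$ simply is not there. (Smaller omissions: your base case only excludes messages \emph{generated} at $u_0=v_i$, not messages arriving there from a neighbor, and your inductive step does not treat requests generated at $u_k$ itself or the neighbor $w=u_{k+1}$.)

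The paper closes the argument in the opposite direction. It shows that if $r_i$'s message is not the first to reach $u_k$, then it is also not the first to reach $u_{k+1}$: the first message arriving at $u_k$ still finds the arrow pointing to $u_{k+1}$ and is atomically forwarded there, and no further message can be sent from $u_k$ to $u_{k+1}$ until that one has arrived (since restoring $u_k$'s arrow requires a message coming back from $u_{k+1}$). Taking the \emph{maximal} bad index $k$, the violation propagates to $u_{k+1}$, contradicting maximality --- unless $k=s$, which is impossible because the first ``find predecessor'' message to reach $u_s=v_{i-1}$ is, by definition of the protocol, that of the successor $r_i$. In other words, the contradiction lives at $u_s$ (via the definition of the \arrow\ order), not at $u_k$ (via the direct-path property). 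To repair your forward induction you would have to replace the arrow-restoration claim by exactly this forward-propagation step and anchor the contradiction at $v_{i-1}$.
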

\begin{proof}
  In order to prove the claim of the lemma, we can assume that
  requests $r_0,\dots,r_{i-1}$ have already found their predecessors
  and therefore the tree is in configuration $F_{i-1}$. Lemma 3.7 in
  \cite{herlihy2006dynamic} implies that this does not affect the
  behavior of any of the remaining queueing requests in $R_{i-1}$.

  Assume for contradiction that the claim of the lemma is not
  true. Let $x\in[0,\dots,s]$ be the maximal value such that the
  ``find predecessor'' message of request $r_{i}$ is not
  the first one among the requests in $R_{i-1}$ reaching
  $u_k$. Note that we need to have $k<s$ because by the definition of
  the \arrow\ protocol, the first message
  reaching $u_s=v_{i-1}$ is the successor request of
  $r_{i-1}$. Let $r=(v,t)$ be the first request in $R_{i-1}$ that reaches node $u_s$. In
  configuration $F_{i-1}$, the arrow of node $u_k$
  points to $u_{k+1}$. In order to change this, a ``find predecessor''
  message first has to be sent from node $u_k$ to $u_{k+1}$. Because
  $r$ is the first request reaching $u_k$, when the ``find
  predecessor'' message of $r$ reaches $u_k$, this has not happened
  and therefore the arrow still points from $u_k$ to $u_{k+1}$. When
  reaching $u_k$, in an atomic step, the ``find predecessor'' message
  of $r$ is therefore forwarded to $u_{k+1}$. As long as the message
  is in transit between the two nodes, there is no arrow across the
  edge $\set{u_k,u_{k+1}}$ and therefore the ``find predecessor''
  message of $r$ also reaches $u_{k+1}$ before the ``find
  predecessor'' message of $r_{i}$ reaches $u_{k+1}$. This
  is a contradiction to the assumption on the maximality of $k$ and
  therefore the claim of the lemma holds.
\end{proof}

The above lemma shows that if the ``find predecessor'' messages of two
requests reach the same node $v$, then the earlier ordered request
reaches $v$ first. To have an analogous statement for \Cref{le:timewindow}, we would like to have a statement saying that a
request $r$ reaches a node $v$ on the path to the predecessor request
before any request $r'$ that is ordered after $r$ (not only for a
request $r'$ that actually reaches $v$). To achieve this, we extend a
given \arrow\ execution to simplify the analysis. Whenever a request
$r=(v,t)$ is issued at node $v$ at time $t$, a ``find predecessor''
message leaves $v$ at time $t$ and it travels on the direct path to
the predecessor request $r'$ of $r$. For the proof, we assume that
instead of only going to the predecessor, the
``find predecessor'' message is sent as a broadcast to the whole
network. We think of the additional messages to complete this
broadcast as virtual
messages that are only used for the analysis and have no influence on
the queueing protocol. Given an asynchronous execution of \arrow, we
assume that the actual messages sent by the \arrow\ protocol keep
their message delays (to ensure an equivalent execution). All the
virtual messages are assumed to have the maximum possible message
delay. That is, the delay of sending a virtual message from $u$ to $v$
is equal to the length $d_T(u,v)$ of the respective tree
edge. Further, to make sure that virtual messages can never overtake
real messages, if a real message and a virtual message reach a node at
the same time, the node always first processes the real message. In
this way, for every request $r=(v,t)$, the delay of the respective
``find predecessor'' message is defined for all nodes. For a request
$r$ and a node $u\in V$, we introduce the following notation:
\begin{equation}\label{eq:asynchdelay}
  \Delta(r,u)\ :=\ \text{time of ``find predecessor'' message of
    request $r$ to reach node $u$}.
\end{equation}
We note that for $r=(v,t)$ and any node $u\in V$, we have
$\Delta(r,u)\leq d_T(u,v)$ (recall that in the asynchronous setting,
for the analysis, the delay of a message is assumed to be at most the
length of the respective edge). The next lemma will be used as a
replacement of the main statement of \Cref{le:timewindow} in the
asynchronous analysis.

\begin{lemma}\label{le:asynchtimewindow}
  Consider an asynchronous execution of \arrow\ for a set of requests
  $R$ on tree $T$ and consider
  two arbitrary requests $r_{i}$ and $r_{j}$ for which $1\leq i<j$ (i.e.,
  $r_{j}$ is ordered after $r_{i}$ by
  \arrow). Then for any node $v$ on the path from $v_i$
  to $v_{i-1}$, it holds that
  \[
  t_i + \Delta(r_{i},v) \leq t_j + \Delta(r_{j},v).
  \]
\end{lemma}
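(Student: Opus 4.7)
The plan is to case analyze on whether $v$ also lies on the path $P_{v_j, v_{j-1}}$, using Lemma~\ref{lemma:asynchgreedy} as the central tool. That lemma asserts that on the path $P_{v_i, v_{i-1}}$, the real ``find predecessor'' message of $r_i$ arrives at every node before the real ``find predecessor'' message of any later request $r_k$ with $k \geq i$, which is exactly the comparison we want to transfer to the broadcast delays $\Delta$.

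If $v \in P_{v_j, v_{j-1}}$, then the real message of $r_j$ also traverses $v$, so $\Delta(r_j, v)$ coincides with the actual asynchronous arrival time of $r_j$'s real message at $v$. Since $v$ simultaneously lies on $P_{v_i, v_{i-1}}$, Lemma~\ref{lemma:asynchgreedy} directly yields $t_i + \Delta(r_i, v) \leq t_j + \Delta(r_j, v)$.

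If $v \notin P_{v_j, v_{j-1}}$, then the broadcast of $r_j$ reaches $v$ only through a virtual tail. Writing $\Delta(r_j, v) = \Delta(r_j, u^*) + d_T(u^*, v)$, where $u^*$ is the last node of $P_{v_j, v_{j-1}}$ encountered along the unique tree path from $v_j$ to $v$, and combining this with the analogous bound $\Delta(r_i, v) \leq \Delta(r_i, u^*) + d_T(u^*, v)$ (which holds because $r_i$'s broadcast can also reach $v$ from $u^*$ virtually in at most $d_T(u^*, v)$), the claim reduces to $t_i + \Delta(r_i, u^*) \leq t_j + \Delta(r_j, u^*)$. When $u^* \in P_{v_i, v_{i-1}}$, this last inequality is delivered by the first case applied at $u^*$, and we are done.

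The main obstacle is the residual subcase where $u^* \notin P_{v_i, v_{i-1}}$, which corresponds precisely to the two find-predecessor paths $P_{v_i, v_{i-1}}$ and $P_{v_j, v_{j-1}}$ being node-disjoint in the tree. In this situation Lemma~\ref{lemma:asynchgreedy} furnishes no direct comparison of $r_i$'s and $r_j$'s arrivals at any common node of $P_{v_j, v_{j-1}}$. My plan for this subcase is to exploit the max-delay property of the virtual segment of $r_j$'s broadcast crossing the unique tree bridge $p$-$p'$ between the two disjoint paths (with $p \in P_{v_i, v_{i-1}}$ and $p' = u^* \in P_{v_j, v_{j-1}}$), reducing the claim to $t_i + \Delta(r_i, p) \leq t_j + \Delta(r_j, p)$ at the bridge endpoint $p$, and then closing this via an inductive argument along the chain of consecutive requests $r_i, r_{i+1}, \ldots, r_j$, whose find-predecessor paths chain together through their shared endpoints $v_i, v_{i+1}, \ldots, v_{j-1}$ so that Lemma~\ref{lemma:asynchgreedy} can be applied at each shared endpoint to telescope the bound back to $p$.
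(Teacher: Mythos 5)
Your Cases 1 and 2 (where $v$, respectively the branch node $u^*$, lies on $P_{v_j,v_{j-1}}$) are sound, and they do constitute a different decomposition from the paper's: there the proof never case-splits on the position of $v$ relative to $P_{v_j,v_{j-1}}$, but instead proves the stronger statement that the \emph{real} ``find predecessor'' message of $r_i$ is the absolute first message --- real or virtual, over every single edge --- to reach each node of $P_{v_i,v_{i-1}}$ among all messages of requests ordered at position $i$ or later. The engine behind that is an auxiliary claim you do not have: starting from configuration $F_{i-1}$, for every arrow pointing from $u_1$ to $u_2$, the first message sent from $u_1$ to $u_2$ must be a real message (proved by a minimal-counterexample argument tracing the offending virtual message backwards). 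Since in $F_{i-1}$ every edge incident to the path points towards it, any virtual message entering the path from outside was preceded on that same edge by a real message, and \Cref{lemma:asynchgreedy} then hands the race to $r_i$.

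This is exactly the ingredient your residual subcase needs and lacks. When $P_{v_i,v_{i-1}}$ and $P_{v_j,v_{j-1}}$ are disjoint, $r_j$'s broadcast reaches the bridge endpoint $p\in P_{v_i,v_{i-1}}$ as a virtual message arriving from outside the path, and your reduction to $t_i+\Delta(r_i,p)\leq t_j+\Delta(r_j,p)$ is just the lemma restated at the node $p$ --- it is not a reduction to anything already established, because $p\notin P_{v_j,v_{j-1}}$, so neither of your first two cases applies there. The proposed telescoping along $r_i,r_{i+1},\dots,r_j$ does not close this: \Cref{lemma:asynchgreedy} applied to the consecutive pair $(r_k,r_{k+1})$ only yields comparisons at nodes of $P_{v_{k+1},v_k}$, in particular at the shared endpoints $v_k$, and there is no reason $p$ lies on any of these intermediate paths. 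Transporting each such comparison from $v_k$ over to $p$ costs an additive $d_T(v_k,p)$ via the triangle inequality, and these error terms accumulate rather than cancel, so the chain does not deliver the required inequality at $p$. To repair the argument you essentially have to prove the paper's ``first message over every arrow is real'' claim (or an equivalent statement about how $r_j$'s virtual wavefront can first touch $P_{v_i,v_{i-1}}$), at which point the case distinction becomes unnecessary.
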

\begin{proof}
  Similarly to the proof of \Cref{lemma:asynchgreedy}, we apply
  Lemma 3.7 from \cite{herlihy2006dynamic} and we assume that the
  network starts in configuration $F_{i-1}$. Consequently, initially,
  all arrows are pointing towards $v_{i-1}$ and only the requests in
  $R_{i-1}$ still need to be ordered.

   We first show that for every arrow pointing from a node $u_1$ to a
   node $u_2$ in configuration $F_{i-1}$, the first message sent from
   $u_1$ to $u_2$ has to be a real message. For contradiction, assume
   otherwise and assume that the first arrow along which a virtual
   message is sent before a real message is pointing from node $w_1$
   to node $w_2$. Further, assume that message $\mathcal{M}$ is the
   first such message that is sent by $w_1$ over the edge. Note that
   this also implies that $\mathcal{M}$ is the first message sent from
   $w_1$ to $w_2$. Assume that this virtual message $\mathcal{M}$
   belongs to a request $r=(v,t)$. First note that $\mathcal{M}$ is
   the first message arriving at node $w_1$. Otherwise, some other
   message would have been sent from $w_1$ to $w_2$. If message
   $\mathcal{M}$ arrives at $w_1$ as a real message, it is forwarded
   as a real message to node $w_2$. We can therefore conclude that
   message $\mathcal{M}$ reaches $w_1$ as a virtual message (say from
   neighbor $w_0$). Because $\mathcal{M}$ is the first message that
   reaches $w_1$, it is also the first message sent from $w_0$ to
   $w_1$ (note that as a virtual message, it has the maximum possible
   message delay, so it cannot overtake any other message). Because in
   configuration $F_{i-1}$, there also is an arrow from $w_0$ to
   $w_1$, this is a contradiction to the assumption that the arrow
   from $w_1$ to $w_2$ is the first on which a virtual message is sent
   before a real one.

   To conclude the proof, observe that in configuration $F_{i-1}$, all
   neighbors $u$ of the path $P_{v_{i},v_{i-1}}=(u_0,\dots,u_s)$ from
   $u_0=v_{i}$ to $u_s=v_{i-1}$ have an arrow pointing from $u$ to the
   neighbor on the path. Hence, on each edge connecting to the path,
   the first message that reaches the path is a real message. The same
   is true for all edges of the path in the direction from node
   $u_0=v_{i}$ to node $u_s=v_{i-1}$. The only way a virtual message
   can therefore reach a node $u_k$ of the path before a real message
   does is when a virtual message for a request $r$ is sent from a
   node $u_{k+1}$ to node $u_k$. Assume that this is the case and
   assume that $u_x$ for $x\geq k+1$ is the first node on the path
   that is reached by the message of $r$. There are two cases to
   consider, either the message of $r$ reaches node $u_x$ from a
   neighbor outside the path $P_{v_{i},v_{i-1}}$ or the request is
   issued at node $u_x$. Because the first message reaching the path
   $P_{v_i,v_{i-1}}$ from a neighbor of the path has to be a real
   message, \Cref{lemma:asynchgreedy} implies that the ``find
   predecessor'' message of request $r_i$ reaches $u_x$ before any
   message from outside the path reaches $u_x$. However, in that case,
   the ``find predecessor'' message of $r_i$ also reaches all earlier
   nodes on path $P_{v_i,v_{i-1}}$ (and thus in particular node $u_k$)
   before the message of $r$ does. If the request $r$ is issued at
   node $u_x$, \Cref{lemma:asynchgreedy} also implies that this
   has to happen after the ``find predecessor'' message of $u_i$
   reaches $u_x$.
\end{proof}

The following lemma is a simple consequence of \Cref{le:asynchtimewindow}.

\begin{lemma}\label{lemma:distancerespecting}
  Consider an asynchronous execution of \arrow\ for a given set of
  requests $R$ on a tree $T$ and consider two arbitrary requests $r_i$
  and $r_j$ for which $i<j$ (i.e., $r_i$ is ordered before
  $r_j$). Then, the following two statements hold:
  \begin{enumerate}
  \item $t_i - t_j \leq d_T(v_i,v_j)$,
  \item if $i\geq 1$, $t_i + \Delta(r_i,v_{i-1}) \leq t_j + d_T(v_{i-1},v_j)$.
  \end{enumerate}
\end{lemma}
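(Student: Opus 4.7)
The plan is to derive both inequalities as essentially immediate consequences of \Cref{le:asynchtimewindow}, by picking the two natural endpoints of the path $P_{v_i,v_{i-1}}$ as the node $v$ in its statement, together with the trivial upper bound $\Delta(r,u) \leq d_T(v_r,u)$ on any message delay that follows from the asynchronous delay assumption.

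For part 2, assuming $i \geq 1$, I apply \Cref{le:asynchtimewindow} with $v = v_{i-1}$, which is the final node on the path from $v_i$ to $v_{i-1}$. This gives
\[
t_i + \Delta(r_i, v_{i-1}) \;\leq\; t_j + \Delta(r_j, v_{i-1}).
\]
Since by construction of the extended (real + virtual) message schedule the delay of any ``find predecessor'' message satisfies $\Delta(r_j, v_{i-1}) \leq d_T(v_j, v_{i-1})$, the right-hand side is at most $t_j + d_T(v_{i-1}, v_j)$, which is exactly the claim.

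For part 1, I split on whether $i = 0$. If $i = 0$, then $r_0$ is the dummy request with $t_0 = 0$, so $t_i - t_j = -t_j \leq 0 \leq d_T(v_i, v_j)$ trivially. If $i \geq 1$, I again invoke \Cref{le:asynchtimewindow}, this time at $v = v_i$, which is the first (and hence valid) node on the path $P_{v_i,v_{i-1}}$. This yields
\[
t_i + \Delta(r_i, v_i) \;\leq\; t_j + \Delta(r_j, v_i).
\]
Here $\Delta(r_i, v_i) = 0$ because the ``find predecessor'' message of $r_i$ is generated at $v_i$ at the moment $r_i$ is issued, and again $\Delta(r_j, v_i) \leq d_T(v_j, v_i)$ by the asynchronous delay assumption. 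Rearranging gives $t_i - t_j \leq d_T(v_i, v_j)$.

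The only conceptual step that might require care is the use of $\Delta(r_i, v_i) = 0$: this is justified by the convention established when defining $\Delta$ in \eqref{eq:asynchdelay} together with the fact that the message originates at $v_i$ at time $t_i$, so its arrival ``delay'' relative to the issue time is zero. Everything else is pure bookkeeping, so I do not expect any real obstacle beyond correctly matching the endpoint of the path $P_{v_i,v_{i-1}}$ to the quantity we wish to bound and handling the dummy-request corner case separately.
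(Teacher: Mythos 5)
Your proof is correct and follows essentially the same route as the paper: both derive the two claims directly from \Cref{le:asynchtimewindow} combined with the delay bound $\Delta(r,u)\leq d_T(v_r,u)$. The only difference is that you instantiate the lemma at the path endpoints $v_i$ and $v_{i-1}$ (using $\Delta(r_i,v_i)=0$), whereas the paper instantiates it at the intersection node $x$ of the three pairwise paths and then does a little extra bookkeeping; your choice is a harmless, slightly cleaner simplification.
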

\begin{proof}
  If $i=0$, we only need to prove the first claim, which in this
  clearly holds because $t_0=0$ and $t_j\geq 0$ for all $r_j\in
  R$.
  Let us therefore assume that $i\geq 1$. We consider the part of the
  tree $T$ induced by the paths between the nodes $v_i$, $v_j$, and
  the node $v_{i-1}$ of the predecessor request $r_{i-1}$ of
  $r_i$. Let $x$ be the (unique) node on the tree on which the three
  paths $P_{v_i,v_j}$, $P_{v_i,v_{i-1}}$, and $P_{v_j,v_{i-1}}$
  intersect. Because $x$ in particular is a node on the path
  $P_{v_i,v_{i-1}}$, from \Cref{le:asynchtimewindow}, we get that
  \begin{equation}\label{eq:applyasynchtimewindow}
    t_i + \Delta(r_i,x) \leq t_j + \Delta(r_j,x).
  \end{equation}
  The term $\Delta(r_j,x)$ is the delay of the message of
  request $r_j$ to reach node $x$ from node $v_j$. Because the message
  delay is upper bounded by the length of the path and because $x$ is on
  the path $P_{v_i,v_j}$, we have $\Delta(r_j,x)\leq d_T(v_j,x)\leq
  d_T(v_j,v_i)$ and thus, the first claim of the lemma follows
  directly from \eqref{eq:applyasynchtimewindow} (note
  that $\Delta(r_i,x)\geq 0$). The second claim can also be proved
  based on \eqref{eq:applyasynchtimewindow}:
  \begin{eqnarray*}
    t_i + \Delta(r_i, v_{i-1})
    & = &
          t_i + \Delta(r_i, x) + \big(\Delta(r_i,v_{i-1}) -
          \Delta(r_i,x)\big)\\
    & \stackrel{\eqref{eq:applyasynchtimewindow}}{\leq} &
             t_j + \Delta(r_j,x) + \big(\Delta(r_i,v_{i-1}) -
          \Delta(r_i,x)\big)\\
    & \leq &
             t_j + d_T(v_j,x) + d_T(x, v_{i-1})\\
    & = &
          t_j + d_T(v_{i-1},v_j).
  \end{eqnarray*}
  The second inequality follows because the message delay of an edge
  is at most the length of the edge.
\end{proof}

It remains to adapt the basic \Cref{le:transformation} to the
asynchronous setting.

\begin{lemma}\label{le:asynchtransformation}
  Let $R$ be a set of queueing requests issued on a tree $T$ and let
  $r_i=(v_i,t_i)$ and $r_j=(v_j,t_j)$ be two requests of $R$ that are
  consecutive w.r.t.\ time of occurrence. Further, choose two requests
  $r_a=(v_a,t_a)$ with $t_a\leq t_i$ and $r_b=(v_b,t_b)$ with
  $t_b\geq t_j$ minimizing $\delta:=t_b-t_a-d_T(v_a,v_b)$. If
  $\delta>0$, every request $r=(v,t)$ with $t\geq t_j$ can be replaced
  by a request $r'=(v,t-\delta)$ without decreasing the worst-case
  cost of \arrow\ and without increasing the optimal offline cost.
\end{lemma}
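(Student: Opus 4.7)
The plan is to split the claim into its two components: (a) the optimal offline cost does not increase under the shift, and (b) the worst-case cost of asynchronous \arrow\ does not decrease. Part (a) is handled by a case analysis closely mirroring the synchronous \Cref{le:transformation}, while Part (b) requires an argument that exploits the flexibility the asynchronous adversary has in scheduling message delays.

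For Part (a), I would fix an arbitrary ordering $\pi$ and show that its total cost $\cost_{\OPT}^T(\pi)$ on the shifted instance is at most its cost on the original; applying this to the optimal ordering of the original then bounds the shifted optimum from above by the original one. For each consecutive pair $(r_{\pi(k-1)}, r_{\pi(k)})$ I would split into four cases by whether each endpoint is ``early'' ($t<t_j$) or ``late'' ($t\geq t_j$). The two same-side cases are immediate: in the all-early case nothing changes, and in the all-late case both issue times shift by $-\delta$ so the difference $t_{\pi(k-1)}-t_{\pi(k)}$ is preserved. When the earlier-ordered endpoint is late and the later-ordered one is early, $t_{\pi(k-1)}$ drops by $\delta$ and the maximum can only decrease. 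The delicate case is when the earlier-ordered endpoint is early and the later one is late: then $t_{\pi(k-1)}-t_{\pi(k)}$ is negative, so the old latency equals $d_T(v_{\pi(k-1)},v_{\pi(k)})$, and after the shift the difference becomes $t_{\pi(k-1)}-t_{\pi(k)}+\delta$. Minimality of $\delta$ applied to the candidate pair $(r_{\pi(k-1)},r_{\pi(k)})$, which is admissible since $t_{\pi(k-1)}\leq t_i$ and $t_{\pi(k)}\geq t_j$, gives $t_{\pi(k-1)}-t_{\pi(k)}+\delta\leq -d_T(v_{\pi(k-1)},v_{\pi(k)})\leq d_T(v_{\pi(k-1)},v_{\pi(k)})$, so the maximum remains $d_T$ and the latency is unchanged.

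For Part (b), I would fix an arbitrary adversarial asynchronous schedule $\sigma$ on the original request set producing \arrow\ cost $C$ and construct a schedule $\sigma'$ on the shifted request set with cost at least $C$. The natural construction is to reuse the same per-edge message delays in $\sigma'$ as in $\sigma$. Then each late request $r=(v,t)$, now issued at $t-\delta$, sees its ``find predecessor'' message traverse the same path with the same per-edge delays and arrive at its predecessor's node $\delta$ earlier in wall-clock time. If the queueing order in $\sigma'$ matches that in $\sigma$, every per-request latency $t_{\ARW}(i)-t_{\pi_{\ARW}(i)}$ is preserved, because either both endpoints shift by $\delta$ (late request) or neither does (early request), and hence the total cost is preserved.

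The main obstacle will be justifying that the queueing order in $\sigma'$ does indeed coincide with the one in $\sigma$: late-request messages now enter the network $\delta$ earlier in wall-clock time and could in principle interact with early-request messages differently than they did in $\sigma$. The key leverage is the minimality of $\delta$, applied to $(r_i,r_j)$ itself, which yields $\delta\leq t_j-t_i-d_T(v_i,v_j)$ and more generally keeps the temporal separation between any early and any late request strictly larger than the relevant tree distance even after the shift. Combining this with the asynchronous greedy characterization in \Cref{lemma:asynchgreedy} and the inequality $t_i+\Delta(r_i,v)\leq t_j+\Delta(r_j,v)$ of \Cref{le:asynchtimewindow}, I expect to inductively verify that the arrival order of ``find predecessor'' messages at every node in $\sigma'$ agrees with the arrival order in $\sigma$; once the orders match, the latency preservation argument above closes the proof.
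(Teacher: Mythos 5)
Your Part (a) is correct, though it re-derives from scratch what the paper simply delegates to \Cref{le:transformation}: since the optimal offline cost is defined w.r.t.\ synchronous executions, the synchronous transformation lemma applies verbatim. Your four-case analysis (including the use of minimality of $\delta$ on the pair $(r_{\pi(k-1)},r_{\pi(k)})$ in the early-to-late case) is a valid, more self-contained route to the same conclusion.

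Part (b) follows the same skeleton as the paper's proof (reuse the message delays, argue the queueing order is preserved, conclude that all per-request latencies are preserved), but the step you flag as ``the main obstacle'' is exactly the step you have not carried out, and the specific leverage you invoke does not quite work as stated. You claim that minimality of $\delta$ ``keeps the temporal separation between any early and any late request \emph{strictly} larger than the relevant tree distance even after the shift.'' It does not: minimality gives $t_b-t_a-d_T(v_a,v_b)\geq\delta$ for every admissible pair, so after shifting by exactly $\delta$ the separation satisfies $t_b-\delta-t_a\geq d_T(v_a,v_b)$ with \emph{equality} for the minimizing pair. Claim 1 of \Cref{lemma:distancerespecting} permits equality, so the non-strict inequality does not force every early request to be ordered before every late one, and your planned induction on message-arrival orders stalls precisely at this boundary pair. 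The paper closes this gap with a limiting device: shift by $\delta-\eps$ for $\eps>0$, so the separation is strictly larger than the tree distance and \Cref{lemma:distancerespecting} forces all of $R_{\leq}$ to be ordered before all of $R_{\geq}$ in \emph{any} execution of the shifted instance; since relative issue times within each group and all message delays are unchanged, only the single transition from the last-ordered early request $r_x$ to the first-ordered late request $r_y$ needs checking, and $r_y$'s message is still the first to reach $v_x$. One then lets $\eps\to 0$ and has the adversarial scheduler break simultaneous arrivals in favor of $R_{\leq}$. Note also that the paper does not (and need not) establish your stronger goal that the arrival order of all ``find predecessor'' messages at \emph{every} node is preserved; it only tracks the queueing order and the latencies, which is all the cost comparison requires.
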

\begin{proof}
  Because the optimal offline cost is computed w.r.t.\ synchronous
  executions, the proof that the optimal offline cost is not increased
  follows directly from \Cref{le:transformation}. To show that the
  worst-case \arrow\ cost does not decrease, we show that if all the
  message delays remain the same, the execution can still produce the
  same \arrow\ order with the same total cost.

  Let $R_{\leq}$ be the set of requests with issue time $\leq t_i$ and
  let $R_{\geq}$ be the set of requests with issue time $\geq t_j$.
  Note that $R=R_{\leq} \cup R_{\geq}$. We first show that when
  replacing every request $r=(v,t)$ in $R_{\geq}$ by a request
  $r'=(v,t-\delta+\eps)$ for an arbitrary $\eps>0$, if we do not
  change any of the message delays, we obtain exactly the same \arrow\
  ordering and cost.\footnote{A bit more precisely, the asynchronous
    scheduler has to generate the same message delays and whenever
    several messages arrive at some node at exactly the same time, the
    scheduler needs to process them in the same order.} To see this,
  first observe that in this case, claim 1 of \Cref{lemma:distancerespecting} implies that all requests in
  $R_{\leq}$ are ordered before any request in $R_{\geq}$ is
  ordered. Let $r_x=(v_x,t_x)$ be the last request ordered in
  $R_{\leq}$ and let $r_y=(v_y,t_y)$ be the first request ordered in
  $R_{\geq}$ in the original execution. Because all requests in
  $R_{\geq}$ are shifted by the same amount and they are still all
  ordered after the requests in $R_{\leq}$, also after the shifting,
  the ``find predecessor'' request of $r_y$ is the first one to arrive
  at node $v_x$ and therefore $r_y$ still is the successor of
  $r_x$. Because the time differences inside $R_{\geq}$ do not change,
  also the rest of the ordering does not change. The argument holds
  even if we let $\eps$ go arbitrarily close to $0$. In the limit, the
  argument therefore still holds as long as whenever a node receives
  several messages at the same time, the asynchronous scheduler 
  processes messages corresponding to requests in $R_{\leq}$ before
  processing messages corresponding to $R_{\geq}$. We have therefore
  shown that for every initial \arrow\ execution, the asynchronous
  scheduler can enforce an equivalent execution with the same cost
  with the shifted request. This proves the claim of the lemma.
\end{proof}

We now have everything needed to prove \Cref{thm:HSTmain}
stating that the total cost of an asynchronous execution of \arrow\ on
an HST $T$ is within a constant factor of the optimal offline queueing
cost on $T$.

\begin{proof}[\textbf{Proof of \Cref{thm:HSTmain}}]
  The above \Cref{le:asynchtransformation} shows that we can
  (iteratively) transform the initial request set $R$ into a condensed set of
  requests without decreasing the cost of \arrow\ and without
  increasing the optimal offline cost. We can therefore assume that we
  are given a condensed set of requests. The claim of the theorem now
  follows if we can show that the latency cost of asynchronous \arrow\
  is distance-respecting and that any asynchronous \arrow\ execution
  generates a distance-respecting queueing order. However, these
  statements follow directly from claims 2 and 1 of
  \Cref{lemma:distancerespecting}, respectively.
\end{proof}


\bibliographystyle{abbrv}
\bibliography{references}

\begin{thebibliography}{10}

\bibitem{abraham2004lls}
I.~Abraham, D.~Dolev, and D.~Malkhi.
\newblock Lls: a locality aware location service for mobile ad hoc networks.
\newblock In {\em Proc.~2004 joint workshop on Foundations of mobile
  computing}, pages 75--84, 2004.

\bibitem{attiya2010provably}
H.~Attiya, V.~Gramoli, and A.~Milani.
\newblock A provably starvation-free distributed directory protocol.
\newblock In {\em Proc.~12th Symp.~on Self-Stabilizing Systems (SSS)}, pages
  405--419, 2010.

\bibitem{awerbuch90}
B.~Awerbuch and D.~Peleg.
\newblock Sparse partitions.
\newblock In {\em Proc.~31st Symp.~on Foundations of Computer Science (FOCS)},
  pages 503--513, 1990.

\bibitem{awerbuch1995}
B.~Awerbuch and D.~Peleg.
\newblock Online tracking of mobile users.
\newblock {\em Journal of the ACM (JACM)}, 42(5):1021--1058, 1995.

\bibitem{bartal96}
Y.~Bartal.
\newblock Probabilistic approximations of metric spaces and its algorithmic
  applications.
\newblock In {\em Proc.~37th Symp.~on Foundations of Computer Science (FOCS)},
  pages 184--193, 1996.

\bibitem{chaiken90}
D.~Chaiken, C.~Fields, K.~Kurihara, and A.~Agarwal.
\newblock Directory-based cache coherence in large-scale multiprocessors.
\newblock {\em Computer}, 23(6):49--58, 1990.

\bibitem{demirbas2004hierarchy}
M.~Demirbas, A.~Arora, T.~Nolte, and N.~Lynch.
\newblock A hierarchy-based fault-local stabilizing algorithm for tracking in
  sensor networks.
\newblock In {\em Proc.~8th International Conference on Principles of
  Distributed Systems (OPODIS)}, pages 299--315, 2004.

\bibitem{demmer1998arrow}
M.~J. Demmer and M.~P. Herlihy.
\newblock The arrow distributed directory protocol.
\newblock In {\em Proc.~12th Symp.~on Distributed Computing (DISC)}, pages
  119--133, 1998.

\bibitem{fakcharoenphol2003tight}
J.~Fakcharoenphol, S.~Rao, and K.~Talwar.
\newblock A tight bound on approximating arbitrary metrics by tree metrics.
\newblock In {\em Proc.~35th Symp.~on Theory of Computing (STOC)}, pages
  448--455, 2003.

\bibitem{ghaffari2014near}
M.~Ghaffari and C.~Lenzen.
\newblock Near-optimal distributed tree embedding.
\newblock In {\em Proc.~28th Symp.~on Distributed Computing (DISC)}, pages
  197--211, 2014.

\bibitem{ginat1989tight}
D.~Ginat, D.~D. Sleator, and R.~E. Tarjan.
\newblock A tight amortized bound for path reversal.
\newblock {\em Information Processing Letters}, 31(1):3--5, 1989.

\bibitem{gupta2001steiner}
A.~Gupta.
\newblock Steiner points in tree metrics don't (really) help.
\newblock In {\em Proc.~12th Symp.~on Discrete Algorithms (SODA)}, pages
  220--227, 2001.

\bibitem{herlihy1999aleph}
M.~Herlihy.
\newblock The {A}leph toolkit: Support for scalable distributed shared objects.
\newblock In {\em International Workshop on Communication, Architecture, and
  Applications for Network-Based Parallel Computing}, pages 137--149, 1999.

\bibitem{herlihy2006dynamic}
M.~Herlihy, F.~Kuhn, S.~Tirthapura, and R.~Wattenhofer.
\newblock Dynamic analysis of the arrow distributed protocol.
\newblock {\em Theory of Computing Systems (TCS)}, 39(6):875--901, 2006.

\bibitem{herlihy2007distributed}
M.~Herlihy and Y.~Sun.
\newblock Distributed transactional memory for metric-space networks.
\newblock {\em Distributed Computing}, 20(3):195--208, 2007.

\bibitem{herlihy2001competitive}
M.~Herlihy, S.~Tirthapura, and R.~Wattenhofer.
\newblock Competitive concurrent distributed queuing.
\newblock In {\em Proc.~20th Symp.~on Principles of Distributed Computing
  (PODC)}, pages 127--133, 2001.

\bibitem{herlihy2001ordered}
M.~Herlihy, S.~Tirthapura, and R.~Wattenhofer.
\newblock Ordered multicast and distributed swap.
\newblock {\em ACM SIGOPS Operating Systems Review}, 35(1):85--96, 2001.

\bibitem{herlihy1999tale}
M.~Herlihy and M.~P. Warres.
\newblock A tale of two directories: implementing distributed shared objects in
  java.
\newblock In {\em Proc.~ACM Conference on Java Grande}, pages 99--108, 1999.

\bibitem{li1989memory}
K.~Li and P.~Hudak.
\newblock Memory coherence in shared virtual memory systems.
\newblock {\em ACM Transactions on Computer Systems (TOCS)}, 7(4):321--359,
  1989.

\bibitem{peleg1999variant}
D.~Peleg and E.~Reshef.
\newblock A variant of the arrow distributed directory with low average
  complexity.
\newblock In {\em Proc.~26th International Colloquium on Automata, Languages,
  and Programming (ICALP)}, pages 615--624, 1999.

\bibitem{RR98}
Y.~Rabinovich and R.~Raz.
\newblock Lower bounds on the distortion of embedding finite metric spaces in
  graphs.
\newblock {\em Discrete and Computational Geometry}, 19(1):79--94, 1998.

\bibitem{raymond1989tree}
K.~Raymond.
\newblock A tree-based algorithm for distributed mutual exclusion.
\newblock {\em ACM Transactions on Computer Systems (TOCS)}, 7(1):61--77, 1989.

\bibitem{nearestneighbor}
D.~J. Rosenkrantz, R.~E. Stearns, and P.~M. Lewis, II.
\newblock An analysis of several heuristics for the traveling salesman problem.
\newblock {\em SIAM journal on computing}, 6(3):563--581, 1977.

\bibitem{sharma2014distributed}
G.~Sharma and C.~Busch.
\newblock Distributed transactional memory for general networks.
\newblock {\em Distributed Computing}, 27(5):329--362, 2014.

\bibitem{sharma2015analysis}
G.~Sharma and C.~Busch.
\newblock An analysis framework for distributed hierarchical directories.
\newblock {\em Algorithmica}, 71(2):377--408, 2015.

\bibitem{tirthapura2006self}
S.~Tirthapura and M.~Herlihy.
\newblock Self-stabilizing distributed queuing.
\newblock {\em IEEE Transaction on Parallel and Distributed System (PDS)},
  17(7):646--655, 2006.

\bibitem{zhang2010dynamic}
B.~Zhang and B.~Ravindran.
\newblock Dynamic analysis of the relay cache-coherence protocol for
  distributed transactional memory.
\newblock In {\em Proc.~24th International Symp.~on Parallel and Distributed
  Processing (IPDPS)}, pages 1--11, 2010.

\end{thebibliography}

\newpage
\appendix

\section{Minimum Spanning Tree Approximation}
\label{sec:MSTapprox}

In this section, we prove a general minimum spanning tree (MST)
approximation result. Assume that we are given a spanning tree $\tau=(V,E_\tau)$ of a
graph $G=(V,E)$. Together with $\tau$, every edge $e\in E_\tau$
induced a cut of $G$ as follows. When removing $e$ from $\tau$, we
obtain a spanning forest consisting of two connected subtrees of
$\tau$. Let $S$ and $V\setminus S$ be the node sets of these two
connected components. We say that $(S,V\setminus S)$ is the \emph{cut
induced by removing $e$ from $\tau$}. The next theorem shows that
if for every edge $e\in E_\tau$, the weight of $e$ is within a factor
$\lambda$ of the weight of the lightest edge crossing the cut induced
by removing $e$ from $\tau$, then the total weight of $\tau$ is within a factor
$\lambda$ of the weight of an MST. We expect that this results is
already known, however, we have not found a proof of it in the
literature. The next theorem proves a slightly more general statement.

\begin{theorem}\label{thm:MSTapprox}
  Let $\lambda\geq 1$ be some number and let $G=(V,E,w)$ be a weighted
  connected graph with non-negative edge weights $w(e)\geq 0$ and let
  $\tau\subseteq E$ and $\tau^*\subseteq E$ be two arbitrary spanning
  trees of $G$. If for every edge $e$ of $\tau$, the lightest edge
  $e'$ of $\tau^*$ crossing the cut induced by removing $e$ from
  $\tau$ has weight $w(e')\geq w(e)/\lambda$, then the total weight of
  all edges in $\tau$ is at most a $\lambda$-factor larger than the
  total weight of the edges in $\tau^*$.
\end{theorem}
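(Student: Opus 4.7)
The plan is to exhibit a bijection $\phi: \tau \to \tau^*$ with the property that for every edge $e \in \tau$, the paired edge $\phi(e) \in \tau^*$ crosses the cut induced by removing $e$ from $\tau$. Once such a $\phi$ is in hand, the hypothesis of the theorem immediately yields $w(\phi(e)) \geq w(e)/\lambda$ for every $e \in \tau$, and summing gives
\[
w(\tau) \;=\; \sum_{e \in \tau} w(e) \;\leq\; \lambda \sum_{e \in \tau} w(\phi(e)) \;=\; \lambda \cdot w(\tau^*),
\]
which is the conclusion.

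For edges $e \in \tau \cap \tau^*$ I simply set $\phi(e) := e$, so the cut condition is trivially satisfied for these. Since $|\tau| = |\tau^*| = |V|-1$, it remains to match $\tau \setminus \tau^*$ bijectively with $\tau^* \setminus \tau$ so that every matched pair still satisfies the cut condition. I would set up the bipartite graph $B$ whose two sides are $\tau \setminus \tau^*$ and $\tau^* \setminus \tau$, placing an edge between $e$ and $e^*$ precisely when $e^*$ crosses the cut induced by removing $e$ from $\tau$ (equivalently, when $e$ lies on the unique cycle created by adding $e^*$ to $\tau$), and then invoke Hall's theorem to extract a perfect matching.

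The only nontrivial step is verifying Hall's condition: for every subset $F \subseteq \tau \setminus \tau^*$, we need $|N_B(F)| \geq |F|$. The key observation is that deleting the $|F|$ edges of $F$ from the spanning tree $\tau$ produces a forest with exactly $|F|+1$ connected components, and every edge of $\tau \cap \tau^*$ (which by definition is not in $F$) lies entirely inside one such component. Since $\tau^*$ is a spanning tree of $G$ and its edges in $\tau \cap \tau^*$ alone cannot bridge these $|F|+1$ components, $\tau^*$ must use at least $|F|$ of its remaining edges, all of which belong to $\tau^* \setminus \tau$, as ``bridge'' edges between different components of $\tau \setminus F$. Moreover, any such bridge edge $e^*$, with endpoints in distinct components of $\tau \setminus F$, must cross the cut induced by every edge $e \in F$ that lies on the unique $\tau$-path connecting the endpoints of $e^*$; in particular $e^* \in N_B(F)$. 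Hence $|N_B(F)| \geq |F|$, Hall's condition holds, and the desired bijection $\phi$ exists.

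The hard part is really just the Hall-condition verification; the weighted hypothesis itself enters only at the very end when summing $w(e) \leq \lambda \cdot w(\phi(e))$ over $e \in \tau$. I expect the argument to be a clean repackaging of the standard strong basis-exchange property for graphic matroids, specialized to the cut-versus-lightest-crossing-edge formulation used by \Cref{le:edgeapprox}.
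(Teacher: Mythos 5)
Your argument is correct, and it takes a genuinely different route from the paper. The paper proves the theorem by induction on $|\tau\setminus\tau^*|$: it removes a \emph{maximum-weight} edge $e$ of $\tau\setminus\tau^*$, replaces it by an edge $f$ of $\tau^*\setminus\tau$ crossing the induced cut, and then spends most of the proof verifying that the modified tree $\tau'$ still satisfies the hypothesis of the theorem (this is where the maximality of $w(e)$ is needed, since a cut of $\tau'$ may be ``inherited'' from the cut of the swapped-out edge $e$ rather than from the corresponding cut of $\tau$). You instead build, in one shot, a bijection $\phi:\tau\to\tau^*$ fixing $\tau\cap\tau^*$ pointwise and matching $\tau\setminus\tau^*$ to $\tau^*\setminus\tau$ so that $\phi(e)$ always crosses the cut induced by removing $e$ from $\tau$; the hypothesis then gives $w(e)\leq\lambda\, w(\phi(e))$ edge by edge, and summing finishes the proof. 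Your verification of Hall's condition is sound: deleting $F\subseteq\tau\setminus\tau^*$ leaves $|F|+1$ components, the edges of $\tau\cap\tau^*$ all lie inside single components, so $\tau^*$ must contain at least $|F|$ edges of $\tau^*\setminus\tau$ bridging distinct components, and each such bridge lies on a $\tau$-path through some $e\in F$ and hence belongs to $N_B(F)$. What your approach buys is that the tree $\tau$ is never modified, so no hypothesis needs to be re-verified and no careful choice of which edge to swap is required; it is a clean instance of the classical bijective basis-exchange property for spanning trees (graphic matroids), at the cost of invoking Hall's theorem, whereas the paper's exchange induction is fully self-contained.
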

\begin{proof}
  In the following, we slightly abuse notation and we identify a
  spanning tree $\tau$ with the set of edges contained in $\tau$. For
  an edge set $F\subseteq E$, we also use $w(F)$ to denote the total
  weight of the edges in $F$. We prove the stronger statement that
  \begin{equation}\label{eq:MSTapprox}
    w(\tau\setminus \tau^*) \leq \lambda\cdot 
    w(\tau^*\setminus \tau).
  \end{equation}
  We show \eqref{eq:MSTapprox} by induction on
  $|\tau\setminus \tau^*| = |\tau^*\setminus \tau|$. First note that
  if $|\tau\setminus \tau^*| = 0$, we have $\tau=\tau^*$ and thus
  \eqref{eq:MSTapprox} is clearly true. Further, if
  $|\tau\setminus \tau^*|=1$, there is exactly one edge
  $e\in \tau\setminus \tau^*$ and exactly one edge
  $f\in \tau^*\setminus \tau$. Because $\tau$ and $\tau^*$ are
  spanning trees, $f$ connects the two sides of the cut
  $(V_{e,1},V_{e,2})$ induced by removing $e$ from $\tau$ and we therefore have
  $w(f)\leq \lambda \cdot w(e)$, implying \eqref{eq:MSTapprox}.

  Let us therefore assume that $|\tau\setminus \tau^*|=k\geq 2$ and
  let $e$ be a maximum weight edge of $\tau\setminus \tau^*$. Let
  $(V_{e,1},V_{e,2})$ be the cut induced by removing $e$ from $\tau$. Further,
  let $\tau'$ be a spanning tree of $G$ that is obtained by removing
  $e$ from $\tau$ and by adding some edge $f\in \tau^*\setminus \tau$
  that connects $V_{e,1}$ and $V_{e,2}$. Note that by the assumptions
  of the theorem, we have $w(e)\leq \lambda\cdot w(f)$. To prove
  \eqref{eq:MSTapprox}, it thus suffices to show that
  $w(\tau'\setminus \tau^*)\leq \lambda\cdot w(\tau^*\setminus
  \tau')$.
  We have $|\tau'\setminus \tau^*|=k-1$ and thus, if the spanning tree
  $\tau'$ satisfies the conditions of the theorem,
  $w(\tau'\setminus \tau^*)\leq \lambda\cdot w(\tau^*\setminus \tau')$
  and \eqref{eq:MSTapprox} follows from the induction
  hypothesis. We therefore need to show that $\tau'$ satisfies the
  conditions of the theorem.

  Consider an arbitrary edge $e'\in \tau'\setminus \tau^*$ and let
  $(U_1,U_2)$ be the partition of $V$ induced by removing $e'$ from
  tree $\tau'$. Since $e'$ is an
  edge of one of the two subtrees of $\tau$ resulting after removing $e$,
  $e'$ either connects two nodes in $V_{e,1}$ or two nodes in
  $V_{e,2}$. W.l.o.g., assume that $e'$ connects two nodes in
  $V_{e,2}$ and let $V_{e,2,1}$ and $V_{e,2,2}$ be the partition of
  $V_{e,2}$ induced by removing $e'$ from the subtree of $\tau$
  induced by $V_{e,2}$. We need to show that for every edge $f'\in
  \tau^*$ connecting $U_1$ and $U_2$, it holds that $w(e')\leq
  \lambda\cdot w(f')$. Any edge $f'$ crossing the cut has to either
  connect $V_{e,1}$ with $V_{e,2}$ or it has to connect $V_{e,2,1}$
  with $V_{e,2,2}$. In the first case, we have $w(e')\leq w(e)\leq
  \lambda\cdot w(f')$ (recall that we chose $e$ to be the heaviest
  edge from $\tau\setminus \tau^*$). In the second case, $f'$ also crosses
  the cut induced by removing $e'$ from the original tree $\tau$ and
  therefore we also have $w(e')\leq \lambda\cdot w(f')$. This
  concludes the proof.
\end{proof}


\end{document}